\newcommand{\ignore}[1]{}
\newlength{\bibitemsep}\setlength{\bibitemsep}{0.2\baselineskip plus .0\baselineskip minus .0\baselineskip}
\newlength{\bibparskip}\setlength{\bibparskip}{2pt}
\let\oldthebibliography\thebibliography
\renewcommand\thebibliography[1]{%
	\oldthebibliography{#1}%
	\setlength{\parskip}{\bibitemsep}%
	\setlength{\itemsep}{\bibparskip}%
}
\newcommand{\R}{\mathbb{R}}
\newcommand{\M}{\mathcal{M}}
\newcommand{\I}{\mathcal{I}}
\newcommand{\RR}{\mathbb{R}}
\DeclareMathOperator{\vol}{vol}
\DeclareMathOperator{\perm}{perm}
\newcommand{\del}{\backslash}
\DeclareMathOperator{\Span}{Span}
\newcommand\bmat[1]{\begin{bmatrix} #1 \end{bmatrix}}
\newtheorem{claim}{Claim}[section]
\newtheorem{theorem}{Theorem}
\newtheorem{corollary}{Corollary}
\newtheorem{observation}{Observation}
\newtheorem{definition}{Definition}
\newtheorem{lemma}{Lemma}
\newtheorem{remark}{Remark}
\newenvironment{proof}[1][]{\par \noindent {\bf Proof #1}\ }{\hfill$\Box$\par \vspace{11pt}}
\newcommand{\norm}[1]{\left\lVert#1\right\rVert}
\newcommand{\sym}{\mathrm{sym}}
\newlength{\RoundedBoxWidth}
\newsavebox{\GrayRoundedBox}
\newenvironment{GrayBox}[1]%
   {\setlength{\RoundedBoxWidth}{.93\textwidth}
    \def\boxheading{#1}
    \begin{lrbox}{\GrayRoundedBox}
       \begin{minipage}{\RoundedBoxWidth}}%
   {   \end{minipage}
    \end{lrbox}
    \begin{center}
    \begin{tikzpicture}%
       \node(Text)[draw=black!80,fill=white,rounded corners,%
             inner sep=2ex,text width=\RoundedBoxWidth]%
             {\usebox{\GrayRoundedBox}};
        \coordinate(x) at (current bounding box.north west);
        \node [draw=white,rectangle,inner sep=3pt,anchor=north west,fill=white]
        at ($(x)+(6pt,.75em)$) {\boxheading};
    \end{tikzpicture}
    \end{center}}
\newenvironment{defproblemx}[2][]{\noindent\ignorespaces%
                                \FrameSep=6pt%
                                \parindent=0pt%
                \vspace*{-1.5em}
                \ifthenelse{\isempty{#1}}{%
                  \begin{GrayBox}{\textsc{#2}}%
                }{%
                  \begin{GrayBox}{\textsc{#2} parameterized by~{#1}}%
                }
                \begin{tabular*}{\textwidth}{@{\hspace{.1em}} >{\itshape} p{1.8cm} p{0.8\textwidth} @{}}%
            }{
                \end{tabular*}%
                \end{GrayBox}%
                \ignorespacesafterend
            }
\newcommand*\samethanks[1][\value{footnote}]{\footnotemark[#1]}
\title{Determinant Maximization via Matroid Intersection Algorithms}
\author[1]{Adam Brown\thanks{ajmbrown@gatech.edu, aladdha6@gatech.edu, msingh94@gatech.edu}}
\author[1]{Aditi Laddha \samethanks[1]}
\author[2]{Madhusudhan Pittu\thanks{mpittu@andrew.cmu.edu, ptetali@cmu.edu}}
\author[1]{Mohit Singh\samethanks[1]}
\author[2]{Prasad Tetali\samethanks[2]}
\affil[1]{Georgia Institute of Technology.}
\affil[2]{Carnegie Mellon University.}
\begin{document}

\maketitle

\begin{abstract}
Determinant maximization problem gives a general framework that models problems arising in as diverse fields as statistics~\cite{pukelsheim2006optimal}, convex geometry~\cite{Khachiyan1996}, fair allocations\linebreak~\cite{anari2016nash}, combinatorics~\cite{AnariGV18}, spectral graph theory~\cite{nikolov2019proportional}, network design, and random processes~\cite{kulesza2012determinantal}. In an instance of a determinant maximization problem, we are given a collection of vectors $U=\{v_1,\ldots, v_n\} \subset \RR^d$, and a goal is to pick a subset $S\subseteq U$ of given vectors to maximize the determinant of the matrix $\sum_{i\in S} v_i v_i^\top $. Often, the set $S$ of picked vectors must satisfy additional combinatorial constraints such as cardinality constraint $\left(|S|\leq k\right)$ or matroid constraint ($S$ is a basis of a matroid defined on the vectors).

In this paper, we give a polynomial-time deterministic algorithm that returns a $r^{O(r)}$\linebreak-approximation for any matroid of rank $r\leq d$. This improves previous results that give $e^{O(r^2)}$-approximation algorithms relying on $e^{O(r)}$-approximate \emph{estimation} algorithms~\cite{NikolovS16,anari2017generalization,AnariGV18,madan2020maximizing} for any $r\leq d$. All previous results use convex relaxations and their relationship to stable polynomials and strongly log-concave polynomials. In contrast, our algorithm builds on combinatorial algorithms for matroid intersection, which iteratively improve any solution by finding an \emph{alternating negative cycle} in the \emph{exchange graph} defined by the matroids. While the $\det(.)$ function is not linear, we show that taking appropriate linear approximations at each iteration suffice to give the improved approximation algorithm.
\end{abstract}

%%%%%%%%%%%%%%%%%%%%%%%%%%%%%%%%%%%%%%%%%%%%%%%%%%%%%%%%%%%%%%%%
\section{Introduction}
Determinant maximization problem gives a general framework that models problems arising in as diverse fields as statistics~\cite{pukelsheim2006optimal}, convex geometry~\cite{Khachiyan1996}, fair allocations~\cite{anari2016nash}, combinatorics~\cite{AnariGV18}, spectral graph theory~\cite{nikolov2019proportional}, network design and random processes~\cite{kulesza2012determinantal}. In an instance of a determinant maximization problem, we are given a collection of vectors $U=\{v_1,\ldots, v_n\} \subset \RR^d$, and a goal is to pick a subset $S\subseteq U$ of given vectors to maximize the determinant of the matrix $\sum_{i\in S} v_i v_i^\top $. Additionally, the set $S$ of picked vectors must satisfy additional combinatorial constraints such as cardinality constraint $\left(|S|\leq k\right)$ or matroid constraint ($S$ is a basis of a matroid defined on the vectors).

Apart from its modeling strength, from a technical perspective, determinant maximization has brought interesting connections between areas such as combinatorial optimization, convex analysis, geometry of polynomials, graph sparsification and complexity of permanent and other counting problems~\cite{allen2017near,anari2016nash,anari2017generalization,Khachiyan1996}.

\paragraph{Applications.} Observe that when the number of vectors picked is exactly $d$, the objective is precisely the square of the volume of the parallelepiped spanned by the selected vectors. The problem of finding the largest volume parallelepiped in a collection of given vectors has been studied~\cite{Nikolov2015,Khachiyan1996,Summa2015} for over three decades. Another interesting application is the determinantal point processes~\cite{kulesza2012determinantal}, where a probability distribution over subsets of vectors is defined. The probability of selecting a subset is defined to be proportional to the squared volume of the parallelepiped defined by them. These distributions display nice properties of negative correlation. Finding sets with the largest probability mass is exactly the determinant maximization problem. We refer the reader to \cite{nikolov2019proportional} for applications in experimental design and to ~\cite{anari2016nash} for application to fair allocations.

The computational complexity of the determinant maximization depends crucially on the combinatorial set family which constrains the set of feasible collection of vectors. The simplest constraint being the cardinality constraint, wherein the number of vectors is fixed, has been the most widely studied variant. For this, a variety of methods including convex programming based methods~\cite{allen2017near,Summa2015,Nikolov15,SinghX18}, combinatorial methods - such as local search and greedy~\cite{Khachiyan1996,MadanSTU19,lau2021local} - as well as close relationship to graph sparsification~\cite{allen2017near} have been exploited to obtain efficient approximation algorithms with very good guarantees. Overall, these results give a very clear understanding of the computational complexity of the problem. 

The more general case when the combinatorial constraints are defined by a matroid constraint has recently received extensive focus ~\cite{NikolovS16,anari2016nash,anari2017generalization,AnariGV18,madan2020maximizing}. This is especially interesting since some of the applications are naturally modeled as matroid constraints, in particular, as partition constraints. Unfortunately, there is a big gap between estimation algorithms and approximation algorithms in this case! Indeed, one can approximately \emph{estimate} the value of an optimal solution with a good guarantee, however, {\em finding} such a solution is much more challenging, leading to an exponential loss in the approximation factor. For example, even for the special case of the partition matroid, there is an $e^d$-approximate estimation algorithm but the best known approximation algorithms return a solution with an approximation factor of $e^{O(d^2)}$, an exponential blow-up\footnote{Since we are computing the determinant of $d\times d$ matrices, the exponent of $d$ in the approximation factor is appropriate. Indeed, many works even consider the $d^{\text{th}}$-root of the determinant where the approximation factors are also the $d^{\text{th}}$-root of the above bounds.}. A fundamental reason for this gap is the reliance on the relationship between convex programming relaxations for the problem and the theory of stable polynomials and its generalization to strongly log-concave polynomials. Unfortunately, these methods are inherently non-algorithmic and do not give a simple way to obtain efficient algorithms with the same guarantees that match the estimation bounds.

\subsection{Our Results and Contributions}

In this work, we introduce new combinatorial methods for determinant maximization under a matroid constraint and give an $O(d^{O(d)})$-deterministic approximation algorithm. While previous works have used a convex programming approach and the theory of stable polynomials, our approach builds on the classical matroid intersection algorithm. Our first result focuses on the case when the rank of the matroid is exactly $d$, i.e., the output solution will contain precisely $d$ vectors.

\begin{theorem}\label{thm:main}
There is a polynomial time algorithm which, given a collection of vectors $v_1,\ldots, v_n\in \mathbb{R}^d$ and a matroid $\M=([n],\I)$ of rank $d$, returns a set $S\in \I$ such that
$$\det\left(\sum_{i\in S} v_i v_i^\top  \right)=\Omega\left(\frac{1}{d^{O(d)}} \right)\max_{S^*\in \I} \det\left(\sum_{i\in S^*} v_i v_i^\top  \right).$$
\end{theorem}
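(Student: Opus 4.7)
First I would reduce the problem to matroid intersection. Since $\M$ has rank exactly $d$, every feasible $S\in\I$ has $|S|=d$, so $\sum_{i\in S} v_iv_i^\top = V_S V_S^\top$, where $V_S$ is the $d\times d$ matrix with columns $\{v_i : i\in S\}$, and the objective equals $\det(V_S)^2$; this vanishes unless the $v_i$ for $i\in S$ are linearly independent. Hence the problem can be cast as: find a common basis of $\M$ and the linear matroid $\M_L$ on $\{v_1,\ldots,v_n\}$ that maximizes $\log|\det(V_S)|$.

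I would then adapt the classical matroid intersection framework, iteratively improving via alternating cycles in the exchange digraph $D_S$ built from $(\M, \M_L)$. For a current common basis $S$, $D_S$ has arcs $i\to j$ (with $i\in S$, $j\notin S$) precisely when $S - i + j \in \I(\M_L)$, i.e., when the coefficient $\alpha_{ij}$ of $v_i$ in the expansion of $v_j$ over the basis $\{v_k : k\in S\}$ is nonzero, and reverse arcs $j\to i$ when $S - i + j \in \I(\M)$. The key choice of weights is $-\log|\alpha_{ij}|$ on each $\M_L$-arc and $0$ on each $\M$-arc. A simple alternating cycle $C: i_1\to j_1\to i_2\to\cdots\to i_\ell\to j_\ell\to i_1$ corresponds to an $\ell$-swap $\{i_1,\ldots,i_\ell\}\leftrightarrow\{j_1,\ldots,j_\ell\}$ that keeps $S$ a common basis (by the standard no-chord shortest-cycle argument of matroid intersection), and the true multiplicative change in $|\det(V_S)|$ is $|\det(M_C)|$, where $M_C$ is the $\ell\times\ell$ matrix with entries $\alpha_{i_s j_t}$. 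The cycle's total weight $-\log\prod_s|\alpha_{i_s j_s}|$ is only a \emph{linear approximation} to $-\log|\det(M_C)|$, picking out the diagonal term in the expansion; this is the sense in which ``$\det$ is not linear but a linear approximation suffices.'' The algorithm repeatedly finds a simple negative-weight alternating cycle and performs the associated swap, using a pigeonhole argument over the $\ell!$ permutations to show that whenever a strictly improving multi-swap exists so does a negative linearized cycle. Termination in polynomial time follows from standard quantization and a bound on the range of $\log|\det(V_S)|$.

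At termination every simple alternating cycle $C$ satisfies $\prod_s|\alpha_{i_s j_s}|\leq 1$. I would compare the resulting $S$ with an optimal common basis $S^*$ via the classical matroid intersection exchange decomposition, which writes $S\triangle S^*$ as a vertex-disjoint union of alternating cycles $C_1,\ldots,C_m$ in $D_S$ whose combined effect converts $S$ into $S^*$. Then $|\det(V_{S^*})|/|\det(V_S)|=\prod_k |\det(M_k)|$, where $M_k$ is the $\ell_k\times \ell_k$ submatrix of $V_S^{-1}V_{S^*}$ corresponding to $C_k$. Each permutation $\sigma$ in the expansion of $\det(M_k)$ corresponds to an alternative pairing of the same $S$-side and $S^*$-side vertices, which (possibly after splitting into sub-cycles) is itself realized as an alternating structure in $D_S$; the local-optimum condition then forces $\prod_s|\alpha_{i_s j_{\sigma(s)}}|\leq 1$ for every such $\sigma$. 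Summing the $\ell_k!$ terms gives $|\det(M_k)|\leq \ell_k!\leq \ell_k^{\ell_k}$, and since $\sum_k \ell_k\leq d$ one gets $\prod_k\ell_k^{\ell_k}\leq d^{O(d)}$, the claimed factor on the squared determinant.

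The step I expect to be the main obstacle is the last one: justifying that every permutation in the expansion of $\det(M_k)$ is realized by a valid alternating structure in $D_S$ whose weight is controlled by local optimality. In general the $\M$-arcs needed to close an arbitrary permutation into an alternating cycle need not exist, and handling this requires either delicate use of matroid exchange axioms (to reroute via longer cycles or through other vertices of $D_S$) or a direct Hadamard-style bound on $|\det(M_k)|$ using only the entries whose logs are guaranteed to be nonpositive by termination. Making this bookkeeping precise, together with aggregating the per-cycle bounds into a single $d^{O(d)}$ factor via the supermultiplicativity of $\ell\mapsto \ell^{\ell}$ under $\sum_k \ell_k\leq d$, is where the core technical work lies.
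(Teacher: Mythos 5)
Your high-level plan matches the paper's closely: the same exchange graph, the same log-coefficient arc weights, the same pigeonhole argument over the $\ell!$ permutations to show that a large optimality gap forces a ``negative'' cycle, and the same idea of comparing the terminal $S$ to $OPT$ via the cycle decomposition of $S\triangle OPT$.

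However, you have misplaced the difficulty and left a real gap. The obstacle you flag --- that the $\M$-arcs needed to close an arbitrary permutation $\sigma$ in the expansion of $\det(M_k)$ might not exist --- is actually not an issue, and the paper handles it exactly the way you sketch in passing. The $\M$-side arcs of $C_k$ form a fixed perfect matching between the two sides of $C_k$; keeping those arcs fixed and replacing only the $\M_L$-side arcs (which exist between every pair, since the coefficients $a_{ij}$ are always defined) turns any permutation $\sigma$ into a disjoint union of alternating cycles in $D_S$, each of which your termination hypothesis controls. This is precisely how Lemma~\ref{thm:existence} (and its general-matroid version using Corollary~39.12a of Schrijver) is proved; no rerouting or Hadamard bound is needed.

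The genuine gap is the improvement/termination step, which you acknowledge only obliquely (``a linear approximation suffices'') and never justify. Your algorithm ``finds a simple negative-weight alternating cycle and performs the associated swap,'' then invokes a range-of-$\log|\det|$ bound to argue polynomial termination. But that argument requires each swap to strictly increase $|\det(V_S)|$, and swapping on a minimum-hop negative cycle does \emph{not} guarantee this under your threshold. The true multiplicative change is $|\det(A_C)|$, not the diagonal product $\prod_s|a_{ss}|$. Already for a $4$-hop minimal negative cycle ($\ell=2$) with the threshold ``product $>1$,'' minimality only gives $|a_{12}|,|a_{21}|\le 1$ and $|a_{11}a_{22}|>1$, so $|\det(A_C)| = |a_{11}a_{22}-a_{12}a_{21}|$ can be arbitrarily close to $0$; the swap can shrink the objective and the algorithm need not terminate. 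This is exactly why the paper does not use the ``negative cycle'' threshold $f\equiv 1$ but instead the much larger $f(\ell)=2(\ell!)^3$: a minimal $f$-violating cycle's chords give sub-cycles that are not $f$-violating, which yields quantitative upper bounds on \emph{all} off-diagonal entries of $A_C$ relative to the (now very large) diagonal product, and a Gaussian-elimination argument (Claim~\ref{claim:mat_det}, Lemma~\ref{lem:gaussian}) then shows $|\det(A_C)|\ge 0.1\, f(\ell)\ge 2$, so every step doubles the volume. The existence lemma is recalibrated accordingly (gap $e^{5d\log d}$ instead of $d!$), and the final $d^{O(d)}$ factor comes from the contrapositive of Lemma~\ref{thm:existence} at termination, not from summing the $\ell_k!$ terms at a local optimum of the pure linearization. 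Making this ``cycle to determinant'' lemma work is the core technical content of the theorem, and it is the piece your proposal is missing.
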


Our results improve the $e^{O(d^2)}$-approximation algorithm which relies on the $e^{O(d)}$-estimation algorithm~\cite{AnariGV18,anari2017generalization,madan2020maximizing}. Our algorithm builds on the matroid intersection algorithm and is an iterative algorithm that starts at any feasible solution and improves the objective in each step. To maintain feasibility in the matroid constraint, each step of the algorithm is an exchange of multiple elements as found by an alternating cycle of an appropriately defined exchange graph.

\textbf{Result for $r \leq d$.} We also generalize the result when the rank  $r$ of the matroid is at most $d$. Observe that the solution matrix $\sum_{i\in S} v_i v_i^\top $ is a $d\times d$ matrix of rank at most $r$ and, therefore, the appropriate objective to consider is the product of its largest $r$ eigenvalues, or equivalently, the elementary symmetric function of order $r$ of its eigenvalues. Let $sym_r(M)$ be the $r^{th}$ elementary symmetric function of the eigenvalues of the $d\times d$ matrix $M$. Thus, our objective is to maximize $sym_r\left(\sum_{i\in S} v_i v_i^\top  \right)$.

\begin{theorem}\label{thm:main2}
There is a polynomial-time algorithm which, given a collection of vectors $v_1,\ldots, v_n\in \mathbb{R}^d$ and a matroid $\M=([n],\I)$ of rank $r\leq d$, returns a set $S\in \I$ such that
$$sym_r\left(\sum_{i\in S} v_i v_i^\top  \right)=\Omega\left(\frac{1}{r^{O(r)}} \right)\max_{S^*\in \I} sym_r\left(\sum_{i\in S^*} v_i v_i^\top  \right).$$
\end{theorem}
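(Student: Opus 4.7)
The plan is to establish Theorem~\ref{thm:main2} by generalizing the algorithmic framework of Theorem~\ref{thm:main} from the $r=d$ determinant objective to the rank-$r$ elementary-symmetric objective, leveraging the classical identity
\[
sym_r\!\left(\sum_{i\in S} v_i v_i^\top\right) = \det(V_S^\top V_S)
\]
whenever $|S|=r$, where $V_S\in\RR^{d\times r}$ has the vectors $\{v_i\}_{i\in S}$ as columns. This identity holds because $V_SV_S^\top$ has rank at most $r$, so its $r$ largest eigenvalues coincide with the eigenvalues of the $r\times r$ Gram matrix $V_S^\top V_S$, and $sym_r$ is exactly the product of the top-$r$ eigenvalues. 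The problem thus reduces to maximizing a determinant of an $r\times r$ (not $d\times d$) matrix over the bases of $\M$.

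I would then run an iterative exchange-cycle algorithm in the spirit of Theorem~\ref{thm:main}, operating on the matroid of rank $r$ and driven by the $r\times r$ Gram determinant. At each iteration it maintains a basis $S\in\I$, builds the matroid exchange graph, weights each exchange arc $(i,j)$ (with $i\in S$, $j\notin S$, $S-i+j\in\I$) by the log-multiplicative change in the Gram determinant under the swap $i\to j$, which equals $\log(\|P_{S\setminus i}^\perp v_j\|^2/\|P_{S\setminus i}^\perp v_i\|^2)$ where $P_{S\setminus i}^\perp$ projects onto the orthogonal complement of $\mathrm{span}\{v_k:k\in S\setminus\{i\}\}$, and swaps along an alternating cycle of negative total weight whenever one exists. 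The algorithm halts when no such cycle remains; a polynomial iteration bound should follow from essentially the same potential argument as in Theorem~\ref{thm:main}.

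The central analytical step is to bound the approximation ratio at termination. I would mirror the strategy of Theorem~\ref{thm:main}: decompose the symmetric difference $S\triangle S^*$ into edge-disjoint alternating cycles via the matroid strong exchange axiom, apply the nonnegative-cycle certificate furnished by local optimality to each cycle, and aggregate these local bounds using the multilinear Cauchy--Binet expansion of $\det(V_{S^*}^\top V_{S^*})$ as a sum over permutations of $S^*$. The target bound is $\det(V_{S^*}^\top V_{S^*})/\det(V_S^\top V_S) \le r^{O(r)}$.

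I expect the main obstacle to be ensuring that the approximation factor depends only on $r$, not on the ambient dimension $d$, even though the swap weights are defined via projections in $\RR^d$. The Gram-matrix reformulation is precisely what makes this possible: the Cauchy--Binet aggregation is over the $r\times r$ Gram matrix, producing only $r!$ terms and hence an $r^{O(r)}$ loss, rather than the $d^{O(d)}$ loss a naive $\det(V_SV_S^\top)$ analysis would incur. Verifying rigorously that no extra factor of $d$ sneaks in through the $d$-dimensional projections, and that the matroid-exchange decomposition of $S\triangle S^*$ respects this low-rank structure, will be the technical heart of the proof.
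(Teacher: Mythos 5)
Your reduction $sym_r\bigl(\sum_{i\in S} v_i v_i^\top\bigr)=\det(V_S^\top V_S)=\vol(S)^2$ is correct and is also the starting point in the paper, but the proposal misses the central difficulty of the $r<d$ case and would not close. You propose to weight each swap arc by the \emph{combined} log-change $\log\bigl(\|P_{S\setminus i}^\perp v_j\|/\|P_{S\setminus i}^\perp v_i\|\bigr)$ and then reuse the Theorem~\ref{thm:main} machinery verbatim. The problem is that this weight equals $\tfrac12\log\bigl(a_{ji}^2+\|v_j^\perp\|^2/\|v_i^\perp\|^2\bigr)$, where $v_j=\sum_k a_{jk}v_k + v_j^\perp$ and $v_j^\perp\perp\Span(S)$: it conflates the in-span coefficient $a_{ji}$ with the out-of-span ratio $\|v_j^\perp\|/\|v_i^\perp\|$, which behave completely differently under a multi-element cycle exchange. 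The cycle-to-volume step of Theorem~\ref{thm:main} (Lemmas~\ref{lem:vol-determinant} and~\ref{thm:exch}) rests on the exact identity $\vol(S\Delta C)=|\det(A_C)|\cdot\vol(S)$ for the coefficient matrix $A_C$; once $r<d$, the vectors of $OPT$ are not in $\Span(S)$, that identity fails, and the Gaussian-elimination lower bound on $\det(A_C)$ has no analogue for your combined weight. Concretely, a cycle can be ``violating'' because its perpendicular ratios are large while its coefficients $a_{ji}$ are small; swapping on it can then give a new set whose Gram determinant is \emph{not} controlled by the product of your arc weights along the cycle, and there is no determinant/chord argument to repair this.

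What the paper actually does is split each candidate vector $u$ into two \emph{distinct vertices} $u^\parallel$ and $u^\perp$ in an augmented exchange graph, with separate weights $-\log|a_{uv}|$ and $-\log(\|u^\perp\|/\|v^\perp\|)$, and run a two-stage algorithm: first exhaust all ordinary $f$-violating cycles in $G(S)$ (in-span swaps), then search for $\tilde f$-violating cycles in the augmented graph. The existence lemma (Lemma~\ref{thm:existence_extended}) expands $\vol(OPT)/\vol(S)$ via subadditivity of $\vol$ into $2^\ell\,\ell!$ terms (one choice of $\parallel$ or $\perp$ per element, times a permutation), not the $r!$ Leibniz terms you suggest; you also appear to be conflating Cauchy--Binet (a sum over $\binom{d}{r}$ index subsets) with the Leibniz permutation sum. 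The gain step then hinges on Lemma~\ref{lem:parts}/Corollary~\ref{cor:one} showing a minimal $\tilde f$-violating cycle uses exactly one perpendicular vertex, and on a \emph{permanent} bound (Lemmas~\ref{lem:permw} and~\ref{lem:perm}, using Eulerian/derangement-number estimates) in place of the determinant/Gaussian-elimination argument, because volume subadditivity only yields absolute-value (hence permanent-type) control rather than signed determinant cancellation. None of this structure is present in the proposal, and you have not identified a substitute for it; the obstacle you flag (ensuring the loss depends on $r$ and not $d$) is real, but your stated resolution does not address where the argument actually breaks.
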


This again improves the best bound of $e^{{O(r^2)}}$-approximation algorithm based on $e^{O(r)}$-approximate estimation algorithms. The proof of Theorem~\ref{thm:main2} is presented in Appendix~\ref{sec:rlessd}.

\textbf{Technical Overview.} For intuition, let $\vol(S)$ denote the volume of the parallelepiped spanned by the vectors in $S$. Then $\vol(S)^2=\det\left(\sum_{i\in S}v_i v_i^\top \right)$, for any $S\subseteq U$ with $|S|=d$, so we can think of $\vol(S)$ as an equivalent objective function. First, observe that the feasibility problem of checking whether there is a set $S\in \I$ such that $\vol(S)>0$ can be reduced to matroid intersection. Indeed, the feasibility problem is equivalent to checking if there is a common basis of the matroid $\M$ and the linear matroid defined by the vectors $\{v_1,\ldots, v_n\}$. Since we aim to maximize $\vol(S)$ over all independent sets $S$, a natural approach is to use the weighted matroid intersection algorithm. Unfortunately, our weights are not linear, i.e., $\vol(S)$ does not equal $\sum_{i\in S} w_i$ or log-linear $\prod_{i\in S} w_i$ for some weights $w$ on the vectors. Nonetheless, the matroid intersection algorithm forms the backbone of our approach.

\paragraph{Overview of Matroid Intersection.} Before we describe our algorithm, let us review a classical algorithm to find a maximum weight common basis of two matroids. Given $U=\{1,\ldots, n\}$, a weight function $w:U\rightarrow \RR$ and two matroids $\M_1=(U, \I_1)$ and $\M_2=(U,\I_2)$, the goal is to find a common basis $S$ of maximum weight $w(S):=\sum_{e\in S} w_e$. We assume that there exists a common basis of the two matroids. Consider the following simple algorithm that also introduces some of the basic ingredients necessary for our algorithm. The algorithm will take as an input a common basis $S$ and either certify that $S$ is a maximum weight common basis or return a new common basis $\widehat{S}$ of higher weight. To explain the algorithm, we recall the important concept of the {\em exchange graph}. Given the set $S$, we construct a directed bipartite graph $G(S)$ with bipartitions given by $U\setminus S$ and $S$. For any $u\in U\setminus S$ and $v\in S$, $G(S)$ contains an arc from $u$ to $v$ if $S-v+u$ is a basis in $\M_2$ and an arc from $v$ to $u$ if $S-v+u$ is a basis in $\M_1$.  For convenience, we use $S-v+u$ to refer to the set $(S\cup\{u\})\setminus \{v\}$. Moreover, give each vertex $u\in U\setminus S$ a weight $-w_u$ and each vertex $v\in S$ a weight of $w_v$. A nice fact from matroid theory is that $S$ is a maximum weight basis if and only if there is no negative weight cycle in this directed graph (Chapter 41, Theorem 41.5 ~\cite{schrijver2003combinatorial}). Moreover, if $C$ is a directed negative weight cycle with minimum hops\footnote{Hops here refers to the number of arcs of the cycle.}, then $S\Delta C$ forms a common basis of the two matroids whose weight is strictly larger than the weight of $S$. Thus, the algorithm finds a maximum weight basis by iteratively finding a negative weight cycle in such an exchange graph.

With the above algorithm as a guiding tool, we describe our algorithm. The two matroids are precisely the constraint matroid $\M$ and the linear matroid defined over the vectors. A first challenge is that the objective function $\vol(S)^2 = \det \left(\sum_{i\in S} v_i v_i^\top \right)$ is not linear. Thus it is not possible to define the vertex weights as was done above. But a natural function to work with instead is the function $\log \vol(S)$, which is known to be submodular. While we do not use submodularity explicitly, our algorithm takes linear approximations of this function at each iteration while searching for improvements as in the matroid intersection algorithm. We use the geometric relationship between $\vol$ and $\det$ closely. The first new ingredient in our algorithm is to introduce \emph{arc} weights rather than vertex weights in the exchange graph $G(S)$. Indeed for the forward arcs $(u,v)$ for $u\not \in S$ and $v\in S$ that correspond to the linear matroid, we introduce a weight of $-\log \frac{\vol(S-v+u)}{\vol(S)}$. We also introduce a weight of $0$ for the backward arcs, which correspond to the arcs for the constraint matroid $\M$. The crucial observation is the following interpretation of the weight $\log \frac{\vol(S-v+u)}{\vol(S)}$: write the vector $u\not\in S$ in the basis $S$, i.e. $u=\sum_{v\in S} a_{uv} v$, for some $a_{uv}\in \RR$ for each $v\in S$. Then $\frac{\vol(S-v+u)}{\vol(S)}=|a_{uv}|$ (See Lemma \ref{lem:weight_w0}). Such relationships between the ratio of volumes and coefficients in expressing the vectors in basis given by $S$ play an important role. 

Our first lemma shows that if the volume of the current solution is \emph{much} smaller than the optimal solution, then there must be a cycle such that the sum of weights of the arcs on the cycle is significantly negative.

\begin{lemma}[Determinant to Cycle]\label{lemma:det-to-cycle}
Let $S$ be a basis of $\M$ and $OPT$ be the basis of $\M$ maximizing $\vol(OPT)$. If $\vol(OPT)\geq e^{5d\log d}\cdot  \vol(S)$, then there exists a directed cycle $C$ of $2\ell$ hops for some $\ell>0$ in the exchange graph $G(S)$ such that
$$\prod_{(u,v)\in C, u\notin S, v\in S} |a_{uv}|\geq 2(\ell!)^3=: f(\ell).$$
\end{lemma}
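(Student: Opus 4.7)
The plan is to bound the ratio $\vol(OPT)/\vol(S)$ from below by a Leibniz expansion of a determinant, use pigeonhole over permutations to extract one $\sigma^*$ carrying most of that mass, combine $\sigma^*$ with a matroid basis-exchange matching to build alternating cycles in $G(S)$, and then close with a second pigeonhole over the resulting cycles.

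First I would express each $u\in OPT\setminus S$ in the basis $S$ as $u=\sum_{v\in S}a_{uv}v$. Elements of $OPT\cap S$ contribute identity columns to the change-of-basis matrix, so letting $A$ be the $m\times m$ submatrix indexed by $OPT\setminus S$ and $S\setminus OPT$ (with $m=|OPT\setminus S|\le d$), the standard change-of-basis identity (the multi-swap generalization of Lemma~\ref{lem:weight_w0}) gives $|\det(A)|=\vol(OPT)/\vol(S)\ge d^{5d}$. Expanding by Leibniz and applying the triangle inequality,
$$d^{5d}\le |\det(A)|\le \sum_{\sigma}\prod_{u\in OPT\setminus S}|a_{u,\sigma(u)}|,$$
where $\sigma$ ranges over the at most $d!$ bijections $OPT\setminus S\to S\setminus OPT$. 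Pigeonhole then yields some $\sigma^*$ with $\prod_u |a_{u,\sigma^*(u)}|\ge d^{5d}/d!$; in particular every factor is strictly positive, so Lemma~\ref{lem:weight_w0} certifies $(u,\sigma^*(u))$ as a forward arc of $G(S)$ for each $u\in OPT\setminus S$.

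Next I would invoke the basis-exchange theorem (Brualdi) for the two $\M$-bases $S$ and $OPT$: it yields a bijection $\pi\colon S\setminus OPT\to OPT\setminus S$ with $S-v+\pi(v)\in\I(\M)$ for every $v$, supplying the complementary backward arcs $v\to\pi(v)$ in $G(S)$. The composition $\tau:=\pi\circ\sigma^*$ is a permutation of $OPT\setminus S$, and each $\ell_j$-cycle of $\tau$ lifts to a directed $2\ell_j$-hop cycle $C_j$ of $G(S)$ by interleaving the forward arcs from $\sigma^*$ with the backward arcs from $\pi$. Writing $P_j$ for the product $\prod|a_{uv}|$ taken over the forward arcs of $C_j$, one obtains $\prod_j P_j=\prod_u|a_{u,\sigma^*(u)}|\ge d^{5d}/d!$.

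For the concluding pigeonhole, suppose toward a contradiction that every $P_j<2(\ell_j!)^3$. With at most $k\le d$ cycles and $\sum_j\ell_j=m\le d$, the multinomial inequality $\prod_j \ell_j!\le m!\le d!$ yields
$$\prod_j P_j<2^k\prod_j (\ell_j!)^3\le 2^d (d!)^3,$$
which combined with the lower bound on $\prod_j P_j$ forces $d^{5d}<2^d (d!)^4\le 2^d d^{4d}$, i.e.\ $d<2$, contradicting $d\ge 2$. The main technical obstacle, as I see it, is the combinatorial wiring---making sure that the permutation $\sigma^*$ extracted from Leibniz genuinely induces forward arcs (which follows because any maximizing term must have every factor nonzero), that the basis-exchange matching $\pi$ furnishes compatible backward arcs, and that the cycles of $\tau$ lift to honest directed $2\ell_j$-cycles of $G(S)$ rather than closed walks repeating a vertex. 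Once these structural pieces are in place the two pigeonhole steps and the multinomial bound carry out the quantitative work cleanly.
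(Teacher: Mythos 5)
Your proof is correct and follows essentially the same strategy as the paper's: the Leibniz expansion of $\det(A)$, pigeonhole for a heavy permutation $\sigma^*$, the basis-exchange matching to wire $\sigma^*$ into alternating cycles of $G(S)$, and a final pigeonhole over the cycles in the decomposition. The only cosmetic differences are that the paper relabels so the matching $\pi$ becomes the identity in index space and phrases the bounds in terms of $k=|OPT\setminus S|$ rather than $d$, but the quantitative content and all the structural steps coincide.
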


We call such a cycle an $f$-violating cycle. Observe that such a cycle can be found as a negative weight $2\ell$-hop cycle when weights are updated to $w_{\ell}(u,v)= \frac{1}{\ell}\log f(\ell)- \log |a_{uv}|$ for a forward arc $(u,v)$  where $u\not\in S$ and $v\in S$. The lemma relies on the following observation.  Abusing notation slightly, let $T$ and $S$ be matrices whose columns are the vectors in $OPT$ and $S$, respectively. Writing each vector in $OPT$ in the basis given by $S$ we obtain $T=SA^\top $ for some matrix $A$. The condition in the lemma implies that $\det(A)\geq e^{5d\log d}$. Also observe that the weight of any $(u,v)$ where $u\in OPT$ and $v\in S$ is exactly $-\log |a_{uv}|$ where $a_{uv}$ is the $(u,v)^{\text{th}}$-entry in $A$. Combining these facts, we can show  there exists a cycle satisfying the conditions of the lemma.

The next step in the algorithm is to find an $f$-violating cycle $C$ and then update the solution to $T=S\Delta C$. Again, we relate the change in objective $\vol(T)$ to the coefficients. While the entries $|a_{uv}|$ of the cycle are large, the objective of the new solution $T$ depends not only on the weight of the edges of the cycle but the weight of all arcs between all vertices in $C\setminus S$ and $C\cap S$. Indeed, consider a square matrix $B$ with rows and columns indexed by $C\setminus S$ and $S\cap C$ respectively with entry $(u,v)$ as $a_{uv}$. Recall $a_{uv}$ is the coefficient of vector $v$ when $u$ is expressed in basis $S$. Then $\vol(T)=|\det(B)|\cdot \vol (S)$ (Lemma~\ref{lem:vol-determinant}).  Thus it remains to lower bound the determinant of $B$. The entries on the diagonal of the matrix $B$ exactly correspond to entries that define the weights of the forward arcs on the cycle $C$. Thus Lemma~\ref{lemma:det-to-cycle} implies that the product of the diagonal entries of $B$ is large. In the next lemma, we show that if the cycle $C$ is the \emph{minimum hop} $f$-violating cycle, then we can in fact lower bound the determinant of the matrix.

\begin{lemma} [Cycle to Determinant] \label{lem:exch}
If $C$ is a minimum hop $f$-violating cycle in the exchange graph $G(S)$, then $\vol(S \Delta C) \geq 2 \cdot \vol(S)$. Moreover, $S\Delta C$ is also a basis of $\M$.
\end{lemma}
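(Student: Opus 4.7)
The plan has two parts: (i) derive the volume inequality $\vol(S\Delta C)\geq 2\vol(S)$ by lower-bounding a determinant, and (ii) verify the basis property by a standard matroid-intersection argument.

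\emph{Reduction to a determinant.} Order $C$ along the cycle as $u_1\to v_1\to u_2\to v_2\to\cdots\to u_\ell\to v_\ell\to u_1$, so the forward arcs are $u_i\to v_i$ and the backward arcs are $v_i\to u_{i+1\bmod\ell}$. Set $B_{ij}:=a_{u_iv_j}$ for $i,j\in[\ell]$. By Lemma~\ref{lem:vol-determinant}, $\vol(S\Delta C)=|\det B|\cdot\vol(S)$, and the $f$-violating hypothesis on $C$ gives
\[
\prod_{i=1}^\ell |B_{ii}|=\prod_{(u,v)\in C,\,u\notin S,\,v\in S}|a_{uv}|\;\geq\; f(\ell)=2(\ell!)^3,
\]
so the task reduces to showing $|\det B|\geq 2$.

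\emph{Entry-wise bounds via shortcut cycles.} For any pair $i\neq j$ with $a_{u_iv_j}\neq 0$, the chord $u_i\to v_j$ lies in $G(S)$; splicing it with the sub-walk of $C$ from $v_j$ back to $u_i$ produces a cycle $C_{ij}\subseteq G(S)$ of length $2m$, where $m=(i-j)\bmod\ell\in\{1,\ldots,\ell-1\}$, whose forward arcs are $u_i\to v_j$ together with $u_t\to v_t$ for $t$ in a cyclic interval $R(i,j)$ of size $m-1$. Since $C_{ij}$ is strictly shorter than $C$ and $C$ is the minimum-hop $f$-violating cycle, $C_{ij}$ is not $f$-violating, yielding the per-entry bound
\[
|B_{ij}|\cdot\prod_{t\in R(i,j)}|B_{tt}|\;<\;f(m)=2(m!)^3.
\]

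\emph{Leibniz accounting and basis property.} Expand $\det B=\sum_\sigma\mathrm{sgn}(\sigma)\prod_i B_{i,\sigma(i)}$. The identity term contributes magnitude at least $2(\ell!)^3$. For each $\sigma\neq\mathrm{id}$, apply the shortcut bound to every off-diagonal factor $B_{i,\sigma(i)}$, and account --- cycle-by-cycle in the cycle decomposition of $\sigma$ --- for how the ``pass-through'' diagonal factors $|B_{tt}|$ appearing on the right-hand sides cancel against the remaining $|B_{tt}|$ in $\prod_i|B_{ii}|$. The goal is to establish $\sum_{\sigma\neq\mathrm{id}}\bigl|\prod_i B_{i,\sigma(i)}\bigr|\leq\prod_i|B_{ii}|-2$, whence $|\det B|\geq 2$ by the triangle inequality. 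For the second assertion, the backward arcs of $C$ witness $\M$-exchanges $S-v_i+u_{i+1}\in\I$; the minimum-hop hypothesis, by the classical matroid-intersection argument of Schrijver (Ch.~41, Thm~41.5), rules out alternative $\M$-matchings on $V(C)$ that would produce shorter $f$-violating cycles, so $S\Delta C\in\I$, and being of size $|S|$ it is a basis of $\M$.

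\emph{Main obstacle.} The technical heart is the Leibniz bookkeeping: the entry-wise shortcut bounds must combine, uniformly across all $\ell!-1$ non-identity permutations and all their possible cycle structures, so as to leave a slack of $2$ relative to $\prod_i|B_{ii}|\geq 2(\ell!)^3$. The cubic-factorial calibration in $f(\ell)=2(\ell!)^3$ appears designed precisely so that each non-identity $\sigma$'s contribution is dominated by $\prod_i|B_{ii}|$ with a polynomial-in-$\ell!$ saving, and that the $\ell!-1$ such terms collectively cost less than $\prod_i|B_{ii}|-2$; carrying out this estimate uniformly in $\sigma$ is the principal hurdle.
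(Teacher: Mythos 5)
Your reduction (via Lemma~\ref{lem:vol-determinant}) to the claim $|\det B|\geq 2$, and your entry-wise bounds $|B_{ij}|\prod_{t\in R(i,j)}|B_{tt}|<f(m)$ obtained from minimality via shortcut cycles, match the paper exactly. But the proof stops precisely where the lemma's real content begins: you never establish the target inequality $\sum_{\sigma\neq\mathrm{id}}\bigl|\prod_i B_{i,\sigma(i)}\bigr|\leq\prod_i|B_{ii}|-2$; you state it as ``the goal'' and yourself flag the Leibniz bookkeeping as ``the principal hurdle.'' That bookkeeping is not routine: the shortcut bounds are relative to products of diagonal entries, the ``pass-through'' intervals $R(i,\sigma(i))$ for distinct chords of the same permutation cycle can overlap and can contain indices that are themselves non-fixed points of $\sigma$ (so the corresponding diagonals are not present in the term to cancel against), and individual diagonals $|B_{tt}|$ may be smaller than $1$, so double-counted denominators do not obviously help. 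One can in fact salvage this with a winding-number accounting (each index ends up covered $Q-1$ extra times, so the leftover denominator is $(\prod_t|B_{tt}|)^{Q-1}\geq f(\ell)^{Q-1}$), and then a sum over permutations calibrated against $f(\ell)=2(\ell!)^3$, but none of this is in your write-up, and verifying that the exponent~$3$ calibration survives the union over all $\ell!-1$ permutations is exactly the kind of estimate the paper spends its effort on. The paper takes a different route here: it rescales rows and columns of $A_C$ so that the off-diagonal bounds become absolute ($|b_{i,j}|\leq f(i-j)$ below and $f(\ell-j+i)/f(\ell)$ above, diagonal $1$), and then lower-bounds $\det(B_\ell)$ by a Gaussian-elimination induction (Claim~\ref{claim:mat_det} and Lemma~\ref{lem:gaussian}), not by a Leibniz expansion; a permanent-style Leibniz bound of the kind you propose appears in the paper only in the appendix for the rank $r<d$ case, and there it needs the much larger calibration $\tilde f(i)=(i!)^{11}$.

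The second assertion is also only gestured at. The paper's argument (Lemma~\ref{lem:matroid-indep}) is not just ``Schrijver Thm.~41.5'': if $S\Delta C\notin\I$ there is an alternative matching of backward arcs, the combined multigraph is Eulerian and decomposes into circuits whose total $w_\ell$-weight is $2w_\ell(C)<0$, so some circuit on a proper vertex subset has negative $w_\ell$-weight; converting that into an $f$-violating cycle on fewer vertices needs the monotonicity of $\log f(y)/y$ in $y$. Your sketch omits both the averaging step and this property of $f$. So as it stands the proposal is a correct outline with the right ingredients, but the quantitative heart of both claims is missing.
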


This lemma crucially uses the fact that $C$ is a minimum hop $f$-violating cycle as in the case for matroid intersection algorithms. Indeed, off-diagonal entries of the matrix $B$ correspond to arcs that form chords of the cycle $C$. The minimality of $C$ allows us to show upper bounds on all the off-diagonal entries of the matrix $B$. A technical calculation then allows us to lower bound the determinant.

\subsection{Related Work}
\textbf{Determinant Maximization under Cardinality Constraints.} Determinant maximization problems under a cardinality constraint have been studied widely~\cite{Khachiyan1996,Summa2015,Nikolov15,SinghX18,allen2017near,MadanSTU19}. Currently, the best approximation algorithm for the case $r\leq d$ is an $e^r$-approximation due to Nikolov~\cite{Nikolov15} and for $r\geq d$, there is an $e^d$-approximation~\cite{SinghX18}. It turns out that the problem gets significantly easier when $r>>d$, and there is a $(1+\epsilon)^d$-approximation when $r\geq d+\frac{d}{\epsilon}$~\cite{allen2017near,MadanSTU19,lau2021local}. These results use local search methods and are closely related to the algorithm discussed in this paper, as the cycle improving algorithm will always find a $2$-cycle when the matroid is defined by the cardinality constraint.

\textbf{Determinant Maximization under Matroid Constraints.} As mentioned earlier, determinant maximization under a matroid constraint is considerably challenging and the bounds also depend on the rank $r$ of the constraint matroid. There are $e^{O(r)}$-estimation algorithms when $r\leq d$~\cite{NikolovS16,anari2017generalization,AnariLGV19} and a $\min\{e^{O(r)}, O\bigl(d^{O(d)}\bigr)\}$-estimation algorithm when $r\geq d$~\cite{madan2020maximizing}. The output of these algorithms is a random feasible set whose objective is at least $\min\{e^{O(r)}, O\bigl(d^{O(d)}\bigr)\}$ of the objective of a convex programming relaxation, in expectation. Since the approximation guarantees are exponential, it can happen that the output set has objective zero almost always. To convert them into deterministic algorithms (or randomized algorithms that work with high probability), additional loss in approximation factor is incurred.  These results imply an $e^{O(d^2)}$-approximation algorithm when $r\leq d$, and a $O\bigl(d^{O(d^3)}\bigr)$-approximation algorithm~\cite{madan2020maximizing} for $r\geq d$.  Approximation algorithms are also known where the approximation factor is exponential in the size of the ground set for special classes of matroids~\cite{Ebrahimi17}.

\textbf{Nash Social Welfare and its generalizations.} A special case of the determinant maximization problem is the Nash Social Welfare problem~\cite{moulin2}. In the Nash Social Welfare problem, we are given $m$ items and $d$ players and there is a valuation function $v_i:2^{[m]}\rightarrow \RR_+$ for each player $i\in [d]$ that specifies value obtained by a player when given a bundle of items. The goal is to find an assignment of items to players to maximize the \emph{geometric mean} of the valuations of each of the players. When the valuation functions are additive, the problem becomes a special case of the determinant maximization and this connection can be utilized to give an $e$-approximation algorithm~\cite{anari2016nash}. Other methods including rounding algorithms~\cite{cole2015approximating,cole2017convex} as well as primal-dual methods~\cite{barman2018finding} have been utilized to obtain improved bounds. The problem has been studied when the valuation function is more general~\cite{garg2018approximating,barman2018greedy,anari2018nash,garg2021approximating} and a constant-factor approximation is known when the valuation function is submodular~\cite{li2022constant}.

\textbf{Other Spectral Objectives.} While we focus on the determinant objective, the problem is also interesting when considering other spectral objectives including minimizing the trace or the maximum eigenvalue of the $\left(\sum_{i\in S} \left(v_i v_i^\top \right)\right)^{-1}$. These problems have been studied for the cardinality constraint~\cite{allen2017near,nikolov2018proportional}. For the case of partition matroid, the problem of maximizing the minimum eigenvalue is closely related to the Kadison-Singer problem~\cite{marcus2015interlacing}.

\section{Algorithm for Partition Matroid}

We first show the algorithm and the analysis for a partition matroid with rank $d$. This allows us to show the basic ideas without going into the details of matroid theory. The generalizations to general matroid are quite standard. We detail them in Section~\ref{sec:matroid}.

Consider a partition matroid $\mathcal{M}$ with $d$ partitions $\mathcal{P}_1,\ldots, \mathcal{P}_d$, where each $\mathcal{P}_i$ contains $n_{i}$ vectors $v_{i1},\ldots,v_{in_{i}} \in \mathbb{R}^d$. Our goal is to find a set $S$ which provides a good approximation to the objective
\begin{align*}
    \max \left\{ \det(\sum_{v \in S} v v^\top): |S| = d, |S\cap \mathcal{P}_i| = 1 \; \forall i \right\}\,.
\end{align*}

Let $OPT$ denote the optimal solution set.  
The following theorem is a specialization of Theorem~\ref{thm:main} to the case of partition matroid.

\begin{theorem}\label{theorem:partition}
Given a partition matroid $\mathcal{M}$ with $d$ parts, let $OPT$ be the optimal solution to the determinant maximization problem on $\mathcal{M}$. Then, there is a polynomial-time deterministic algorithm that outputs a feasible set $S \in \mathcal{M}$ such that
\[ \det\left(\sum_{i\in S} v_i v_i^\top  \right)\geq e^{-10d\log(d)}\cdot \det\left(\sum_{i\in OPT} v_i v_i^\top  \right).\]
\end{theorem}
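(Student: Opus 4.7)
The plan is to run a local-search-style iterative algorithm that starts from any basis $S$ of $\M$ with $\vol(S) > 0$, and repeatedly searches for a minimum-hop $f$-violating cycle in the exchange graph $G(S)$; when such a cycle $C$ is found, update $S \leftarrow S \Delta C$. By Lemma \ref{lem:exch}, each such update preserves feasibility in $\M$ and at least doubles $\vol(S)$. When no $f$-violating cycle exists for any length $2\ell \in \{2,4,\dots,2d\}$, the contrapositive of Lemma \ref{lemma:det-to-cycle} gives $\vol(\OPT) < e^{5d\log d}\,\vol(S)$; squaring yields
\[
\det\!\left(\sum_{i \in S} v_i v_i^\top\right) > e^{-10 d\log d} \det\!\left(\sum_{i \in \OPT} v_i v_i^\top\right),
\]
which is exactly the approximation guarantee claimed.

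To detect a minimum-hop $f$-violating cycle efficiently, I will use the arc weighting $w_\ell(u,v) := \tfrac{1}{\ell} \log f(\ell) - \log|a_{uv}|$ on the forward arcs of $G(S)$ and weight $0$ on the backward arcs: a $2\ell$-hop cycle is $f$-violating exactly when its total $w_\ell$-weight is negative. Iterating $\ell = 1, 2, \dots, d$ in order and running a fixed-length Bellman--Ford-style dynamic program that computes the shortest closed walk of exactly $2\ell$ hops through each vertex returns the minimum-hop $f$-violating cycle, or certifies that none exists.

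For initialization and polynomial running time, a starting basis $S_0$ with $\vol(S_0) > 0$ can be obtained by a simple greedy procedure specific to the partition matroid: scan the parts $\mathcal{P}_1,\dots,\mathcal{P}_d$ in any order and pick one vector from each part that preserves linear independence of the chosen set (if this fails, every feasible basis has $\vol = 0$, and the theorem is vacuous). Since $\vol(S)$ at least doubles per iteration and the ratio $\vol(\OPT)/\vol(S_0)$ is bounded by an exponential of the input bit-length, the total number of iterations is polynomial in the input size, and each iteration runs in polynomial time by the cycle-search routine above.

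The main subtlety I want to flag is that Lemma \ref{lem:exch} crucially relies on $C$ being a \emph{minimum-hop} $f$-violating cycle in order to bound the off-diagonal entries of the coefficient submatrix $B$; merely finding any $f$-violating cycle would not guarantee $\vol(S \Delta C) \geq 2\vol(S)$. Hence the search must iterate $\ell$ from small to large and return at the first $\ell$ for which an $f$-violating cycle exists. The remaining items are bookkeeping: verifying that the weights $w_\ell$ are well-defined (forward arcs only exist when $a_{uv} \neq 0$, avoiding $\log 0$), and checking that the termination condition of the inner search indeed corresponds to the hypothesis needed to invoke the contrapositive of Lemma \ref{lemma:det-to-cycle}.
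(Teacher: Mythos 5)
Your overall plan is the paper's own: iterate exchanges along a minimum-hop (equivalently, minimal) $f$-violating cycle, use Lemma~\ref{lem:exch} (Lemma~\ref{thm:exch} in the body) to get feasibility and a factor-two volume increase per exchange, use the contrapositive of Lemma~\ref{lemma:det-to-cycle} (Lemma~\ref{thm:existence}) at termination to get $\vol(S) > e^{-5d\log d}\vol(OPT)$ and hence the $e^{-10d\log d}$ determinant bound, detect cycles via the shifted weights $w_\ell$ with a length-bounded Bellman--Ford sweep over $\ell=1,\dots,d$ (the paper's Algorithm~\ref{alg:minf} and Lemma~\ref{lem:minf-alg}), and bound the number of iterations by the doubling together with the $2^{O(\sigma)}$ bound on $\vol(OPT)/\vol(S_0)$. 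All of that matches the paper, including your (correct) emphasis that minimality/minimum hops is what controls the off-diagonal entries of $A_C$.

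The genuine gap is the initialization. Finding a transversal $S_0$ with $\vol(S_0)>0$ is itself a matroid intersection problem (a common basis of the partition matroid $\M$ and the linear matroid on $v_1,\dots,v_n$), and your greedy scan that picks an arbitrary independence-preserving vector from each part can dead-end even when a positive-volume transversal exists; consequently your fallback claim ``if this fails, every feasible basis has $\vol=0$'' is false. Concretely, take $d=2$, $\mathcal{P}_1=\{e_1,e_2\}$, $\mathcal{P}_2=\{e_2\}$: the greedy may choose $e_2$ from $\mathcal{P}_1$ and then get stuck at $\mathcal{P}_2$, although $\{e_1 \in \mathcal{P}_1,\, e_2 \in \mathcal{P}_2\}$ is a feasible basis of volume $1$. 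Maximal common independent sets of two matroids need not all have the same size, which is exactly why plain greedy fails here and why the paper initializes by running the (unweighted) matroid intersection / augmenting-path algorithm on $\M$ and the linear matroid. As written, your algorithm could wrongly declare the instance vacuous (or start from a zero-volume set, where the weights $w_0$ are not even defined) on instances with $\det\bigl(\sum_{i\in OPT}v_iv_i^\top\bigr)>0$, so the stated guarantee would not follow; replacing the greedy by matroid intersection repairs this. A minor secondary point: your ``shortest closed walk of exactly $2\ell$ hops'' detector needs the observation that any negative $w_\ell$-weight closed walk of at most $2\ell$ hops contains a simple negative cycle, which is $f$-violating with at most $2\ell$ hops and hence, by the earlier iterations, must have exactly $2\ell$ hops; your small-to-large ordering of $\ell$ makes this work, but it is a needed step rather than bookkeeping.
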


\subsection{Algorithm}

We begin by formally defining the exchange graph, the different weight functions, and then the algorithm which helps establish Theorem~\ref{thm:main} for the case of partition matroids.

\begin{definition} [Exchange Graph] Formally,  for a subset of vectors $S = \{v_1, v_2, \ldots, v_d\}$ with $v_i \in \mathcal{P}_i$ for all $i$, we define the exchange graph of $S$, denoted by $G(S)$ as a bipartite graph, where the right-hand side consists of vectors in $S$, i.e., $R = \{v_1, v_2, \ldots, v_d\}$ and the left-hand side consists of all the vectors $L = \bigcup_{i=1}^d \mathcal{P}_i\backslash\{v_i\}$ (See Figure \ref{fig:exchange_graph}).  Each $v_i \in R$ has an edge to every $u \in \mathcal{P}_i\backslash\{v_i\}$, i.e., all the vectors in the same part as $v_i$.  The vertices on the left-hand side have forward edges to every vertex in $S$.
\end{definition}
\begin{figure}
    \centering
    \includegraphics[page=3,width=0.6\textwidth]{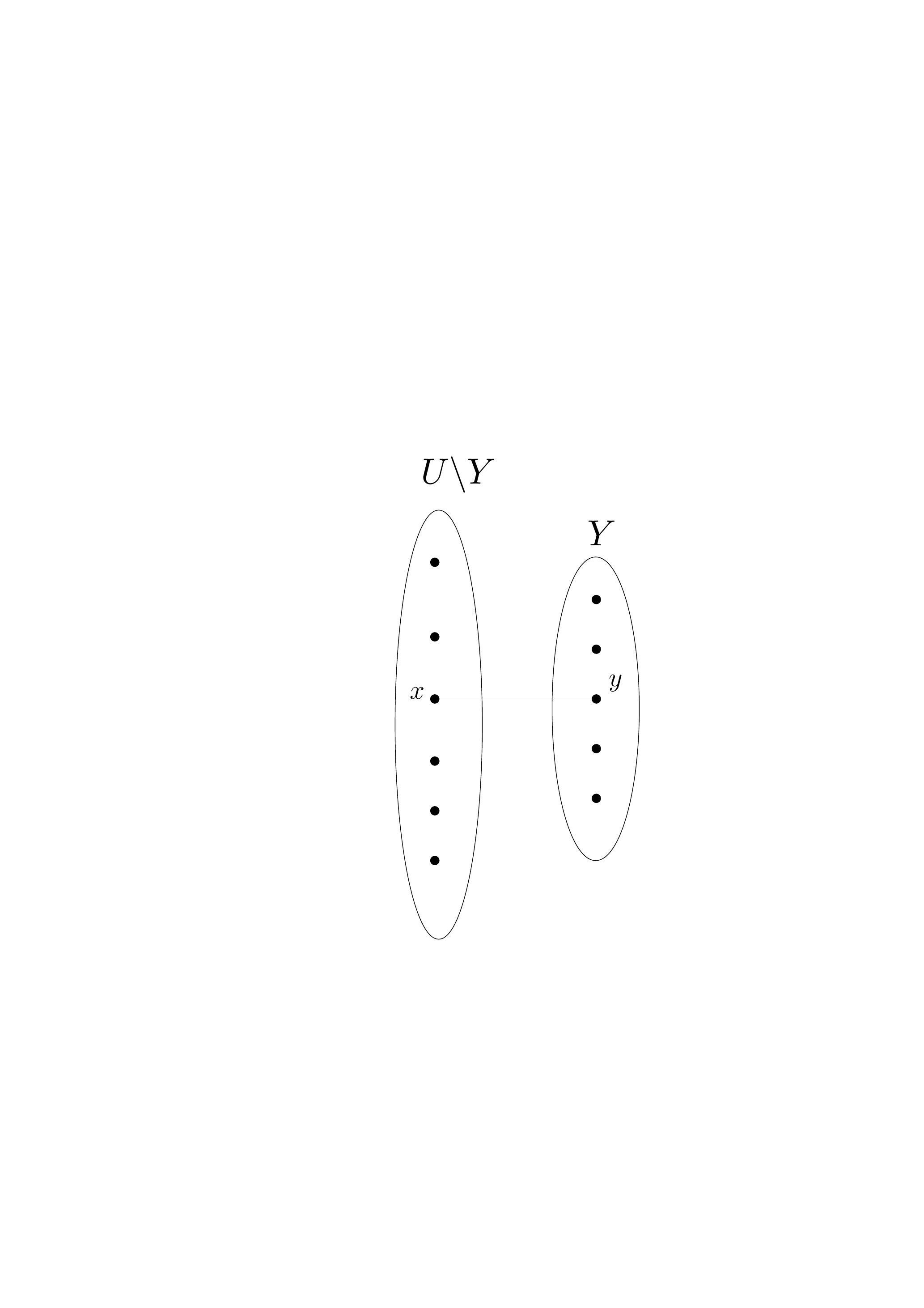}
    \caption{The exchange graph $G(S)$}
    \label{fig:exchange_graph}
\end{figure}

We define a family of weight functions on the exchange graph. The basic weight function will be denoted by $w_0:A(G(S))\rightarrow \RR$ and, in addition, we define weight functions $w_i$ for each $1\leq i\leq d$. To define these weights, we use the function $f:[d]\rightarrow \mathbb{Z}_+$ with $f(i) = 2(i!)^3$ for each $i>0$.
\begin{definition}[Weight functions on the Exchange graph]
We first define weight function $w_0$. All the backward arcs, from any $v_i \in S$ to every $u_j \in \mathcal{P}_i\backslash\{v_i\}$, have weight $0$. For $u_j \in L$, let $u_j =  \sum_{i=1}^d a_{ij} \cdot v_i$ be expression for $u_j$ in the basis $S$ where $a_{ij}\in \RR$ for each $i$.  Then the forward arc $(u_j,v_i)$ has weight $w_0(u_j,v_i) := - \log(|a_{ij}|)$ for each $i \in [d]$ and each $u_j\in L$.

Now we define the weight function $w_{\ell}$ on the arcs for any $1\leq \ell\leq d$. All backward arcs still have weight $0$ but every forward edge $(u, v)$ has weight $w_{\ell}(u, v) := \frac{\log(f(\ell))}{\ell} + w_0(u , v)$.
\end{definition}

The following lemma gives the intuition behind the weight function $w_0$ defined above. It shows that the weight on arc $(u_i,v_j)$ exactly measures the change in the objective when we replace element $v_i$ with $u_j$ in $S$. The proof appears in the appendix.

\begin{lemma}\label{lem:weight_w0}
Let $S$ be a solution with $\vol(S)>0$ and $u\not\in S$. Then for any $v\in S$, we have $w_0(u,v)=-\log \frac{\vol\left(S+u-v\right)}{\vol(S)}$.
\end{lemma}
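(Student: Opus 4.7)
The plan is to unwind the definitions and reduce to a standard Cramer's-rule-style identity for determinants.

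First I would interpret the volumes as determinants. Since $S$ is a basis with $\vol(S)>0$ and $|S|=d$, I can form a $d\times d$ matrix $M$ whose columns are the vectors in $S$ (in the order $v_1,\ldots,v_d$); then $\vol(S)=|\det(M)|$. For the modified set $S+u-v$, let $M'$ be the matrix obtained from $M$ by replacing the column corresponding to $v$ with the vector $u$; then $\vol(S+u-v)=|\det(M')|$.

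Next I would use the expansion given in the statement, $u=\sum_{v'\in S} a_{v'u}\,v'$, together with multilinearity of the determinant in its columns. Writing $M'$'s distinguished column as $u=\sum_{v'\in S} a_{v'u} v'$ gives
\[
\det(M') \;=\; \sum_{v'\in S} a_{v'u}\,\det(M_{v'}),
\]
where $M_{v'}$ is obtained from $M$ by placing $v'$ in the column originally occupied by $v$. For every $v'\neq v$, the matrix $M_{v'}$ has the column $v'$ appearing twice (once in its original position, once replacing $v$), so $\det(M_{v'})=0$. For $v'=v$ we recover $M_v=M$. Hence $\det(M')=a_{vu}\,\det(M)$.

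Taking absolute values and dividing yields
\[
\frac{\vol(S+u-v)}{\vol(S)} \;=\; \frac{|\det(M')|}{|\det(M)|} \;=\; |a_{vu}|,
\]
and taking $-\log$ of both sides gives $-\log\bigl(\vol(S+u-v)/\vol(S)\bigr)=-\log|a_{vu}|=w_0(u,v)$, matching the definition of the weight function. The only edge case is $a_{vu}=0$, in which case the replaced column lies in the span of the remaining columns of $S$, so $\vol(S+u-v)=0$ and both sides equal $+\infty$ consistently. There is no real obstacle here; the content is just recognizing that replacing one column of a full-rank square matrix scales the determinant precisely by the coefficient of the replaced basis vector in the expansion, which is the same identity underlying Cramer's rule.
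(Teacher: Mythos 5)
Your proof is correct, but it takes a different route from the one in the paper. You work algebraically with the $d\times d$ matrix $M$ whose columns are the vectors of $S$: you expand the substituted column $u=\sum_{v'\in S}a_{v'u}v'$ by multilinearity, observe that all terms with $v'\neq v$ vanish because they produce a repeated column, and conclude $\det(M')=a_{vu}\det(M)$ --- essentially the Cramer's-rule identity. The paper instead argues geometrically: it decomposes $v=v^\perp+\sum_{i\geq 2}b_i v_i$ with $v^\perp$ orthogonal to $\Span(S\setminus\{v\})$, re-expresses $u$ in these coordinates, and then uses twice the base-times-height formula $\vol(T+x)=\vol(T)\cdot\|x^\perp\|$ to get $\vol(S)=\vol(S-v)\|v^\perp\|$ and $\vol(S+u-v)=\vol(S-v)\,|a_1|\,\|v^\perp\|$. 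Both arguments are equally short here, but they buy different things downstream: your multilinearity argument is the more direct and standard way to see the identity when $|S|=d$ and everything is square, while the paper's orthogonal-projection view is the one that generalizes to the rank-$r<d$ setting of Appendix~\ref{sec:rlessd} (Lemma~\ref{lem:weight_wa0}), where $S$ is not full-dimensional, determinants of square coordinate matrices are no longer available, and the change in volume picks up an extra $\|u^\perp\|/\|v^\perp\|$ term. Your handling of the degenerate case $a_{vu}=0$ is also fine and consistent with the $+\infty$ convention. One cosmetic remark: the lemma as stated only assumes $\vol(S)>0$; you are (reasonably, and as the paper does implicitly) also using $|S|=d$ so that $M$ is square, which holds throughout Section~2.
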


While we will be specific about which weight function to use, but if it is not specified, then we refer to the weight function $w_0$.

\begin{definition}[Cycle Weight] The weight of a cycle $C$ in $G(S)$ is defined as $w_0(C) = \sum_{e \in C} w_0(e)$.
\end{definition}

Observe that the weight of a cycle depends only on the weight of the forward edges as backward edges have a weight $0$.

We want to move from the current set $S$ to a set with higher volume by exchanging on cycles in $G(S)$. But we want to exchange only on  cycles that satisfy certain nice properties. For this purpose, we define $f$-Violating Cycles and Minimal $f$-Violating Cycles. The algorithm will always exchange on a Minimal $f$-Violating Cycle.

\begin{definition}[$f$-Violating Cycle]
A cycle  in $G(S)$ is called an $f$-violating cycle if
\begin{equation*}
    w_0(C) < -\log f(|C|/2)\,,
\end{equation*}
where $|C|$ is the number of arcs in $C$.
\end{definition}

We have the following simple observation regarding $f$-violating cycle. 

\begin{observation}\label{obs:fcycle-weight}
	If $C$ is a $f$-violating cycle then $\prod_{(u, v)\in C: u\in L, v\in R} |a_{uv}| > 2\left(\left(\frac{|C|}{2}\right)!\right)^3.$
\end{observation}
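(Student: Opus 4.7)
The plan is to unwind the definitions, since this observation is essentially tautological given the way $w_0$ was set up on forward arcs. First I would write
\[
w_0(C) = \sum_{e \in C} w_0(e) = \sum_{(u,v) \in C,\, u \in L,\, v \in R} w_0(u,v),
\]
using the fact that backward arcs carry weight $0$, so only forward arcs contribute. Next I would substitute the defining expression $w_0(u,v) = -\log |a_{uv}|$ for each forward arc, obtaining
\[
w_0(C) = -\sum_{(u,v) \in C,\, u \in L,\, v \in R} \log |a_{uv}| = -\log \prod_{(u,v) \in C,\, u \in L,\, v \in R} |a_{uv}|.
\]

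Then I would invoke the $f$-violating hypothesis $w_0(C) < -\log f(|C|/2)$ and rearrange. This yields
\[
\log \prod_{(u,v) \in C,\, u \in L,\, v \in R} |a_{uv}| > \log f(|C|/2),
\]
and exponentiating (the logarithm is strictly monotonic) gives the claimed bound with $f(|C|/2) = 2\bigl((|C|/2)!\bigr)^3$ substituted from the definition of $f$.

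There is no real obstacle here: the observation is just a restatement of the $f$-violating condition after moving from the additive (logarithmic) form in which $w_0$ is defined to the multiplicative form involving the coefficients $|a_{uv}|$. The only thing to be careful about is noting that a cycle in $G(S)$ alternates between backward and forward arcs, so exactly half of the arcs of $C$ (namely $|C|/2$ of them) are forward arcs contributing to both the weight sum and the product.
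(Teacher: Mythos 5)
Your proposal is correct and follows essentially the same route as the paper: expand the cycle weight over forward arcs (backward arcs contribute zero), substitute $w_0(u,v)=-\log|a_{uv}|$, and exponentiate the $f$-violating inequality.
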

(See \cref{sec:ommited_proof})

\begin{definition}[Minimal $f$-Violating Cycle]
A cycle $C$ in $G(S)$ is called a minimal $f$-violating cycle if
\begin{itemize}
    \item  $C$ is an $f$-violating cycle, and
    \item for all cycles $C'$ such that $V(C') \subset V(C)$, $C'$ is not an $f$-violating cycle.
\end{itemize}
\end{definition}

Note that finding an $f$-violating cycle with $2i$ arcs is equivalent to finding a negative cycle with $2i$ arcs in $G(S)$ with weights $w_i$. We use the following simple algorithm to find a minimal $f$-violating cycle in $G(S)$ (if one exists), where we iterate on the number of arcs in the cycle.
\begin{algorithm}[H]
\caption{Finding minimal $f$-violating cycle}
	\label{alg:minf}
\begin{algorithmic}
\For{$i = 1,\ldots, d$}
    \If{there is a negative cycle $C$ {with exactly $2i$ arcs} in $G(S)$ with weight function $w_i$}
       \State Return $C$
    \EndIf
\EndFor
\end{algorithmic}
\end{algorithm}

The following lemma is immediate. A proof appears in the appendix.
\begin{lemma}\label{lem:minf-alg}
Algorithm~\ref{alg:minf} finds the minimal $f$-violating cycle in $G(S)$.
\end{lemma}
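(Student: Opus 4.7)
The plan is to verify two halves: that Algorithm~\ref{alg:minf} detects an $f$-violating cycle whenever one exists, and that the first cycle it returns is in fact minimal in the sense of the definition.

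The key identity driving everything is: for any cycle $C$ in $G(S)$ of length $|C|=2i$, one has $w_i(C) = w_0(C) + \log f(i)$. This holds because $G(S)$ is bipartite between $L$ and $R$, so every cycle alternates forward and backward arcs and must contain exactly $i$ forward arcs among its $2i$ arcs; since $w_i$ differs from $w_0$ only by an additive $\log f(i)/i$ on each forward arc (backward arcs have weight $0$ under both), the contributions of the $i$ forward arcs sum to $\log f(i)$. It follows immediately that a cycle $C$ of length $2i$ is $f$-violating if and only if $w_i(C)<0$, i.e., iff $C$ is a strictly negative cycle under the weights $w_i$. Hence the $i$-th iteration detects exactly the $f$-violating cycles of length $2i$. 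Note that every simple cycle in $G(S)$ has length at most $2|R|=2d$, so letting $i$ range over $\{1,\dots,d\}$ exhausts all possible cycle lengths.

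Next, suppose the algorithm returns a cycle $C$ at iteration $i$. Then $C$ is $f$-violating and $|C|=2i$, and by the order of the loop and the equivalence above, $G(S)$ contains no $f$-violating cycle of length strictly less than $2i$. Now take any cycle $C'$ with $V(C')\subsetneq V(C)$: bipartiteness of $G(S)$ forces $|C'|$ to be even, and $V(C')\subsetneq V(C)$ forces $|C'|<|C|$, so $|C'|\leq 2i-2$. Therefore $C'$ cannot be $f$-violating, which is exactly the minimality condition. Conversely, if the loop terminates without returning anything, then the equivalence shows that for each $1\leq i\leq d$ there is no $f$-violating cycle of length $2i$ in $G(S)$, so no minimal $f$-violating cycle exists.

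The only remaining detail is implementing the check ``is there a negative cycle of exactly $2i$ arcs in $G(S)$ under $w_i$?''. This can be done by a standard layered-graph construction: build $2i+1$ copies of $V(G(S))$, put weight-$w_i$ arcs between consecutive layers according to $G(S)$, and look for a negative-weight walk from layer $0$ to layer $2i$ ending at its starting vertex via Bellman-Ford-style dynamic programming. This runs in polynomial time per iteration and is not where the substance of the lemma lies; the heart of the argument is the telescoping identity $w_i(C)=w_0(C)+\log f(i)$ together with bipartiteness, and I do not anticipate a genuine obstacle.
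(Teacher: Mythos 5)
Your proposal is correct and follows essentially the same route as the paper's proof: both rest on the identity $w_i(C)=w_0(C)+\log f(i)$ for a $2i$-hop cycle (exactly $i$ forward arcs by bipartiteness), so that iteration $i$ succeeds precisely when an $f$-violating cycle with $2i$ hops exists, and minimality of the returned cycle follows because any cycle on a proper vertex subset has strictly fewer (even) hops and would have been found in an earlier iteration. The only divergence is the implementation of the negative-cycle test (your layered-graph dynamic program for exactly $2i$ arcs versus the paper's Bellman--Ford run for $2i$ rounds, with the observation that earlier iterations rule out shorter negative cycles), which does not change the substance of the argument.
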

After finding a minimal $f$-violating cycle,  $C$, we modify the current set $S$ to $S\Delta C$ and repeat. Observe that $S\Delta C$ is always a feasible set as it will pick exactly one element from each part. The main idea is that if $\vol(S)$ is small compared to $\vol(OPT)$, i.e., $\vol(S) < \vol(OPT)\cdot e^{-\Omega(d\log(d))}$, then there is always an $f$-violating cycle in $G(S)$ (see Lemma~\ref{thm:existence}). Moreover, if $C$ is a minimal $f$-violating cycle, then $\vol(S \Delta C) \geq 2 \cdot \vol(S)$ (see Lemma~\ref{thm:exch}). If we initialize $S$ to any solution with non-zero determinant, then the ratio $\vol(OPT)/\vol(S)$ is at most $2^{4\sigma}$ where $\sigma$ is the encoding length of our problem input (Chapter 3, Theorem 3.2 ~\cite{schrijver2000linear}). This implies that we need only modify the set $S$ polynomially many times before $\vol(S)$ becomes greater than $\vol(OPT)\cdot e^{-O(d\log(d))}$, which gives Theorem~\ref{theorem:partition}. Such an initialization can be obtained by finding a basis of $\RR^d$ that picks exactly one vector from each part. As discussed above, this problem can be solved by the matroid intersection algorithm over the partition matroid and the linear matroid defined by the vectors.

\begin{algorithm}[H]
\caption{Algorithm to find an approximation to $OPT$}
	\label{alg:exch}
\begin{algorithmic}
\State $S \leftarrow$ set with $|S| = d$, $|S\cap \mathcal{P}_i| = 1$ for all $i$, and $\vol(S) > 0$.
\While{there exists an $f$-violating cycle in $G(S)$}
    \State $C =$ minimal $f$-violating cycle in $G(S)$
    \State $S = S\Delta C$
\EndWhile
\State Return $S$

\end{algorithmic}
\end{algorithm}

\begin{lemma} \label{thm:existence}
For any set $S$ with $|S| = d$ and $\vol(S) > 0$, if $\vol(S) < \vol(OPT) \cdot e^{-5d\log(d)}$, then there exists an $f$-violating cycle in $G(S)$.
\end{lemma}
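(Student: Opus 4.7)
The plan is to derive an $f$-violating cycle from the Leibniz expansion of the change-of-basis matrix from $S$ to $OPT$. Write $OPT=\{u_1,\ldots,u_d\}$ with $u_i\in\mathcal{P}_i$ and $S=\{v_1,\ldots,v_d\}$ with $v_i\in\mathcal{P}_i$, and define the $d\times d$ matrix $A$ via $u_i=\sum_{j=1}^d A_{ji}\,v_j$. Then $\vol(OPT)=|\det(A)|\cdot\vol(S)$, so the hypothesis gives $|\det(A)|>e^{5d\log d}=d^{5d}$, and Lemma~\ref{lem:weight_w0} tells us that the forward arc $(u_i,v_j)$ has $w_0$-weight exactly $-\log|A_{ji}|$. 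After restricting to the indices $i$ with $u_i\neq v_i$ (the remaining indices contribute a trivial identity block to $A$ and can be ignored), I may assume $u_i\in\mathcal{P}_i\setminus\{v_i\}$ for every $i$, so every backward arc $(v_i,u_i)$ is present in $G(S)$.

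Next, by the Leibniz formula and the triangle inequality there exists a permutation $\sigma$ of $[d]$ with
\[\prod_{i=1}^d |A_{\sigma(i),i}|\ \ge\ \frac{|\det(A)|}{d!}\ >\ \frac{d^{5d}}{d!}\ \ge\ d^{4d},\]
using $d!\le d^d$. Decompose $\sigma$ into its disjoint cycles $\sigma_1,\ldots,\sigma_m$ of lengths $k_1,\ldots,k_m$ summing to $d$. Each cycle $\sigma_j=(i_1,i_2,\ldots,i_{k_j})$ lifts to a $2k_j$-arc cycle $C_j$ in $G(S)$ that alternates forward arcs $u_{i_t}\to v_{\sigma(i_t)}=v_{i_{t+1}}$ with backward arcs $v_{i_t}\to u_{i_t}$ (indices modulo $k_j$); the backward arcs exist by the normalization $u_{i_t}\neq v_{i_t}$. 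The $w_0$-weight of $C_j$ is therefore $-\sum_{i\in\sigma_j}\log|A_{\sigma(i),i}|$, and $C_j$ is $f$-violating if and only if $\prod_{i\in\sigma_j}|A_{\sigma(i),i}|>f(k_j)=2(k_j!)^3$.

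Finally, suppose for contradiction that none of $C_1,\ldots,C_m$ is $f$-violating. Multiplying the resulting bounds $\prod_{i\in\sigma_j}|A_{\sigma(i),i}|\le 2(k_j!)^3$ over $j$, and using $\prod_j k_j!\le d!$ (a consequence of the positivity of multinomial coefficients), $m\le d$, and $d!\le d^d$, I obtain
\[\prod_{i=1}^d |A_{\sigma(i),i}|\ \le\ 2^d (d!)^3\ \le\ 2^d\, d^{3d},\]
which combined with $\prod_i|A_{\sigma(i),i}|>d^{4d}$ forces $d^d<2^d$, impossible for $d\ge 2$. Hence some $C_j$ is $f$-violating. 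The main challenge is the bookkeeping step---checking that each lift $C_j$ is a valid alternating cycle in the bipartite graph $G(S)$ and cleanly handling the degenerate coordinates where $u_i=v_i$---while the quantitative contradiction enjoys a comfortable $(d/2)^d$ margin, so the constants built into $5d\log d$ and $f(\ell)=2(\ell!)^3$ are well within range.
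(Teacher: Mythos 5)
Your proposal is correct and follows essentially the same route as the paper's proof: reduce to the block of $A$ indexed by $OPT\setminus S$ and $S\setminus OPT$, use the Leibniz expansion to extract a permutation whose diagonal product exceeds $|\det(A)|/d!$, decompose it into cycles that lift to alternating cycles in $G(S)$, and derive a contradiction with the bound $\prod_j 2(k_j!)^3$ if none were $f$-violating. The only differences are cosmetic (you phrase the bounds in terms of $d$ rather than $k=|OPT\setminus S|$, with the same comfortable margin), so there is nothing substantive to add.
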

\begin{proof}
Let $OPT = \{u_1, u_2, \ldots, u_d\}$ and $S = \{v_1, v_2, \ldots, v_d\}$ such that $u_i, v_i \in \mathcal{P}_i$ for all $i \in [d]$. Observe that $(v_i,u_i)$ is an arc in the exchange graph for each $i$ since $u_i$ and $v_i$ belong to the same part\footnote{Given $u_{i}\neq v_{i}$}.

Abusing notation slightly, let $T$ and $S$ be matrices whose columns are the vectors in $OPT$ and $S$,  respectively.  Let $A$ be the coefficient matrix of $T$ w.r.t. $S$, i.e.,
$T = SA^\top $. Then
\begin{align*}
    \vol(OPT)^2 = \det(TT^\top) = \det(S A^\top A S^\top) = \det(SS^\top) \cdot |\det(A)|^2.
\end{align*}
Let $X =  OPT  \backslash S$, $Y = S \backslash OPT$, and $|X| = |Y| = k$. Without loss of generality, let $Y = \{v_1, \ldots, v_k\}$ and $X = \{u_1, \ldots, u_k\}$. Then $A = \begin{bmatrix} A_k & A'\\ 0 & I_{d-k} \end{bmatrix}$, where $A_k$ is the sub-matrix of $A$ corresponding to rows in $X$ and columns in $Y$. Then $\det(A) = \det(A_k)$.

As per the hypothesis in the lemma, we have $\det(SS^\top) < \det(TT^\top) \cdot e^{-10d\log(d)}$. Therefore,
\begin{equation}
    |\det(A_k)| > e^{5d\log(d)} \geq e^{5k\log(k)}. \label{eq:1}
\end{equation}
 By the Leibniz formula, we have $ \det(A_k) = \sum_{\sigma \in \mathcal{S}_k} \mathrm{sign}(\sigma)\prod_{i=1}^k a_{i\sigma(i)}$. Taking absolute values gives $|\det(A_k)| \leq \sum_{\sigma \in \mathcal{S}_k}\prod_{i=1}^k |a_{i\sigma(i)}|$. Since $|S_k| = k! \leq e^{k\log(k)}$, there exists a permutation $\sigma \in \mathcal{S}_k$ such that
  \begin{align}
\label{eq:0}  \prod_{i=1}^k |a_{i\sigma(i)}| > |\det(A_k)|\cdot e^{-k\log(k)} \geq e^{4k\log(k)}.
\end{align}

Let the cycle decomposition of this $\sigma$ be $\sigma = \{C_1, C_2, \ldots, C_\ell\}$. Then each $C_j$ corresponds to a unique cycle in $G(S)$ with $2|C_{j}|$ hops by considering the forward arcs $(u_i,v_{\sigma(i)})$ for each $i$ on the cycle and the backward arcs $(v_i,u_i)$ for each $i$ in $C_j$. We claim that at least one of these cycles is an $f$-violating cycle. If not, then by the definition of $f$-violating cycles, we have $\prod_{i\in C_j} |a_{i\sigma(i)}| \leq 2(|C_j|!)^3$. Multiplying over all cycles in $\sigma$ gives \begin{align*}
    \prod_{i=1}^k |a_{i\sigma(i)}| = \prod_{j=1}^\ell \prod_{i \in C_j} |a_{i\sigma(i)}| < \prod_{j=1}^\ell 2(|C_j|!)^3 < 2^k (k!)^3 < e^{4k\log(k)},
\end{align*}
where the second last inequality follows from $\sum_{j=1}^\ell |C_j| = k$. This contradicts~\cref{eq:0}, so $G(S)$ must contain an $f$-violating cycle. 
\end{proof}

The requirement in Lemma \ref{thm:existence} that $\vol(S) < \vol(OPT) \cdot e^{-5d\log(d)}$ is tight, up to the coefficient in the exponent. Consider the case where $d$ is a power of two (or more generally, any $d$ for which a Hadamard matrix of order $d$ is
known to exist), $S = \{e_1,\ldots, e_d\}$ consists of the standard basis vectors, and $L = H = \{h_1,\ldots, h_d\}$ consists of the columns of the $d\times d$ Hadamard matrix. The entries of $H$ are all $\pm1$, and $h_i^\top h_j = 0$ for $i\neq j$. Then $\vol(S) = 1$, and the optimal solution is $OPT =H$, which has objective value
\[ \vol(H) = \prod_{i=1}^d \|h_i\| = d^{d/2} = e^{\frac{d}{2}\log(d)} \cdot \vol(S),\]
since the vectors in $H$ are orthogonal. Meanwhile, the exchange matrix in this case is $A = H^\top $. Since all the entries of $A$ are $\pm1$, we know that the product of the entries along any cycle will have an absolute value of $1$. Thus, we cannot find an $f$-violating cycle in the same way, despite the fact that $\vol(S) \leq \vol(OPT) \cdot e^{-\frac{d}{2}\log(d)}$.

\subsection{Cycle Exchange and Determinant}
Now we show that exchanging on a minimal $f$-violating cycle $C$ increases the objective of the output set by at least a factor of two. The proof relies on two technical lemmas. First, observe that the arc weights given by $w_0(u,v)$ are exactly how much the objective will change if switch from the solution $S$ to $S+u-v\
$ in the solution. But switching on a cycle will switch multiple elements at the same time. Since our function $\vol(.)$ (or more appropriately $\log \vol(.)$) is not additive, it is not clear what the change in the objective. The following lemma characterizes exactly how the objective changes when we switch a large set.

 Consider our current solution $S$. Let $C$ be the minimal cycle found and $\ell = |C|/2$. Let $X = C\cap L$ and $Y = C\cap S$. Thus the output set $T=(S\cup X)\setminus Y$. We will also abuse notation to let $X, Y$ and $S$ represent the matrices whose columns are the vectors in their respective sets. Note that $S$ is $d\times d$ while both $X$ and $Y$ are $d\times \ell$. Observe that $\vol(S)^2=\det (SS^\top )$ and $\vol(T)^2=\det(TT^\top)=\det(SS^\top  + XX^\top  -YY^\top )$. Crucially, we show that the matrix consisting of coefficients $a_{uv}$ that define the weights on the arcs of the exchange graph for $u\in X$ and $v \in Y$ also defines the change in objective value.

\begin{lemma}\label{lem:vol-determinant}
	Let $S$ be a basis, let $X$ and $Y$ be sets with $|X| = |Y| = \ell$ and $Y\subseteq S$. Let $A$ be the $\ell \times d$ matrix of coefficients so that $X = SA^\top $, and let $A_C$ be the $\ell\times \ell$ submatrix of only the coefficients corresponding to columns in $Y$. If $T = (S \cup X)\del Y$ then $\vol(T)^2=\vol(S)^2\cdot \det(A_CA_C^\top )$.
\end{lemma}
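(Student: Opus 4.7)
The plan is to rewrite $T$ as $T = SM$ for an explicit $d\times d$ matrix $M$ whose determinant equals $\det(A_C)$, and then read off the determinant identity directly. Since $S$ is a basis, it is a $d \times d$ invertible matrix, and both $S$ and $T$ are square; this means $\vol(S)^2 = \det(S)^2$ and $\vol(T)^2 = \det(T)^2$, so it suffices to show $\det(T)^2 = \det(S)^2 \det(A_C)^2$.

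First I would reorder the columns of $S$ so that the $\ell$ vectors of $Y$ are listed last, writing $S = [S_1 \mid S_2]$, where $S_2$ is the $d \times \ell$ matrix of columns corresponding to $Y$ and $S_1$ is the $d \times (d-\ell)$ matrix of the remaining columns. This induces a corresponding partition $A = [A_R \mid A_C]$, where $A_R$ is $\ell \times (d-\ell)$ and $A_C$ is the $\ell \times \ell$ block from the lemma statement. Then
\[ X = SA^\top = S_1 A_R^\top + S_2 A_C^\top. \]
After reordering, $T = [S_1 \mid X]$ (column-reordering only changes the sign of $\det(T)$, so does not affect $\det(T)^2$). Therefore
\[ T \;=\; [S_1 \mid S_2]\begin{bmatrix} I_{d-\ell} & A_R^\top \\ 0 & A_C^\top \end{bmatrix} \;=\; S M, \qquad \text{where } M := \begin{bmatrix} I_{d-\ell} & A_R^\top \\ 0 & A_C^\top \end{bmatrix}. \]

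The last step is to compute determinants. The matrix $M$ is block upper-triangular with diagonal blocks $I_{d-\ell}$ and $A_C^\top$, so $\det(M) = \det(A_C^\top) = \det(A_C)$. Since $S$ and $T$ are square,
\[ \vol(T)^2 = \det(T)^2 = \det(S)^2 \det(M)^2 = \vol(S)^2 \det(A_C)^2 = \vol(S)^2 \det(A_C A_C^\top), \]
which is the claim.

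There is no real obstacle here; the only subtlety is recognizing that once $Y$ is placed at the end of $S$, the expression $X = SA^\top$ forces the block structure of $M$, and in particular forces the zero block in the lower-left (no column of $X$ needs to use columns of $S_1$ beyond what $A_R$ already specifies). Everything else is a one-line determinant computation using the fact that $A_C$ is square.
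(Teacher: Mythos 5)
Your proof is correct and follows essentially the same route as the paper's: both reorder the columns so that $Y$ occupies a contiguous block, write $T = SM$ for a block-triangular matrix $M$ with $\det(M) = \det(A_C)$, and conclude $\vol(T)^2 = \det(S)^2\det(A_C)^2 = \vol(S)^2\det(A_C A_C^\top)$. The only difference (placing $Y$ last rather than first) is cosmetic and does not affect the argument.
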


Without loss of generality, let $C = (v_0 \rightarrow u_1 \rightarrow v_1 \rightarrow u_2 \rightarrow v_2\rightarrow \ldots  u_{\ell} \rightarrow v_{0})$ so that $X = \{u_1,\ldots u_\ell\}$ and $Y = \{v_1,\ldots, v_{\ell-1}, v_0\}$, and order the columns of $A_C$ accordingly so that the $\ell$-th column corresponds to $v_0$. Observe that diagonal entries of the $A_C$ correspond to coefficient of $v_i$ when expressing $u_i$ in basis of $S$ and thus equals $a_{ii}$. $C$ being $f$-violating implies that the product of the diagonal entries $\prod_{i=1}^\ell |a_{ii}| >f(\ell)$. To show that the volume of $T$ is large, we need to show $|\det(A_C)|$ is large. To this end, we utilize crucially that $C$ is the \emph{minimal} $f$-violating cycle.  Observe that the off-diagonal entries $a_{ij}$ exactly correspond to the weight on chords of the cycle. Since each chord introduces a cycle with smaller number of arcs, by minimality we know that it is not $f$-violating. This allows us to prove upper bounds on the off-diagonal entries of the matrix $A_C$. Finally, a careful argument allows us to give a lower bound on the determinant of any matrix with such bounds on the off-diagonal entries. We now expand on the above outline below.

\begin{lemma} \label{thm:exch}
If $C$ is a minimal $f$-violating cycle in $G(S)$, then $\vol(S \Delta C) \geq 2 \cdot \vol(S)$.
\end{lemma}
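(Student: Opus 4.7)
My plan is to translate the volume inequality into a determinant lower bound via Lemma~\ref{lem:vol-determinant}, then exploit both the $f$-violating condition on $C$ and the minimality of $C$ to prove that bound through a careful Leibniz expansion. Applying Lemma~\ref{lem:vol-determinant} with $X = C \setminus S$ and $Y = C \cap S$ (both of size $\ell = |C|/2$) yields $\vol(S \Delta C)^2 = \vol(S)^2 \cdot \det(A_C A_C^\top) = \vol(S)^2 \cdot |\det(A_C)|^2$ since $A_C$ is square, so it suffices to prove $|\det(A_C)| \geq 2$.

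After ordering rows by $u_1, \ldots, u_\ell$ and columns by $v_1, \ldots, v_{\ell-1}, v_0$ along the cyclic order of $C$ (as in the excerpt), the diagonal entries $a_{ii}$ of $A_C$ are exactly the coefficients on the forward arcs of $C$, so $w_0(C) < -\log f(\ell)$ rearranges to $\prod_{i=1}^\ell |a_{ii}| > f(\ell) = 2(\ell!)^3$. For each off-diagonal pair $(i, j)$ with $i \neq j$, the chord $u_i \to v'_j$ together with the unique directed sub-path of $C$ from $v'_j$ back to $u_i$ closes into a sub-cycle $C_{ij}$ with $V(C_{ij}) \subsetneq V(C)$ and length $2d(j, i)$, where $d(j, i)$ is the cyclic distance from column-index $j$ to row-index $i$. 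By minimality of $C$, $C_{ij}$ is not $f$-violating, and unpacking its weight gives the key chord bound
\[
|a_{ij}| \cdot \prod_{k \in (j, i)} |a_{kk}| \;\leq\; f(d(j, i)),
\]
where $(j, i)$ is the cyclic open interval of size $d(j, i) - 1$.

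The remaining step is purely matrix-analytic: deduce $|\det(A_C)| \geq 2$ from these bounds. I would proceed via Leibniz, writing $|\det(A_C)| \geq \prod_i|a_{ii}| - \sum_{\sigma \neq \mathrm{id}}\prod_i |a_{i\sigma(i)}|$. For each non-identity $\sigma$, decompose into disjoint cycles and apply the chord bound on each non-trivial cycle $(i_1, \ldots, i_k)$, turning the factor $\prod_t |a_{i_t i_{t+1}}|$ into $\prod_t f(d_t)$ divided by a product of $|a_{k'k'}|$ factors, where $d_t = d(i_{t+1}, i_t)$. The key combinatorial observation is that for a single $k$-cycle with winding number $w = (\sum_t d_t)/\ell$, the cyclic intervals $(i_{t+1}, i_t)$ cover each index in $[\ell]\setminus\{i_1, \ldots, i_k\}$ exactly $w$ times and each index in $\{i_1, \ldots, i_k\}$ exactly $w-1$ times. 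For a general $\sigma$ with total winding $W = \sum_q w_q$ across its non-trivial cycles, combining with the fixed-point contributions produces the clean bound
\[
\prod_i |a_{i\sigma(i)}| \;\leq\; \frac{\prod_q \prod_t f(d_t^{(q)})}{(\prod_i |a_{ii}|)^{W-1}}.
\]
The multinomial inequality $\prod_t d_t! \leq (\sum_t d_t - k + 1)!$, together with the gap $\prod_i|a_{ii}| > 2(\ell!)^3$, then implies each term is much smaller than $\prod_i|a_{ii}|$—the dominant contributions come from winding-$1$ transpositions and are $O(1/\ell^3)$ each—so the sum over $\sigma \neq \mathrm{id}$ is well below $(1 - 2/f(\ell))\prod_i|a_{ii}|$, yielding $|\det(A_C)| \geq 2$.

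The main obstacle is this final Leibniz estimation: one must track the multiplicities carefully across multi-cycle permutations and all windings, then aggregate the super-exponentially small bounds into a sum strictly less than $1 - 2/f(\ell)$. The cubic power in $f(\ell) = 2(\ell!)^3$ provides comfortable slack to absorb the at-most-$\ell!$ non-identity terms. For very small $\ell$ one can verify the bound directly: $|\det(A_C)| = |a_{11}| > 2$ for $\ell = 1$, and $|\det(A_C)| \geq |a_{11} a_{22}| - |a_{12} a_{21}| > f(2) - f(1)^2 = 12$ for $\ell = 2$.
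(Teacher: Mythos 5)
Your setup matches the paper's: reduce to $|\det(A_C)| \geq 2$ via Lemma~\ref{lem:vol-determinant}, then use minimality of $C$ to bound off-diagonal entries by the chord inequality. Your winding/covering observation is also correct: for a single $k$-cycle with winding $w$, the open cyclic intervals $(i_{t+1},i_t)$ cover each non-cycle index $w$ times and each cycle index $w-1$ times, and your aggregated bound $\prod_i|a_{i\sigma(i)}| \leq \prod_{q,t} f(d_t^{(q)})/ (\prod_i|a_{ii}|)^{W-1}$ checks out. The difference from the paper is in the final estimation: you attack $|\det(A_C)|$ head-on by Leibniz, while the paper rescales rows and columns to obtain a matrix $B_\ell$ with unit diagonal and uniformly bounded off-diagonals and then tracks Gaussian elimination (Claim~\ref{claim:mat_det}, Lemma~\ref{lem:gaussian}).

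The gap is in your final estimation. The multinomial inequality $\prod_t d_t! \leq (\sum_t d_t - k + 1)!$ is stated without the constraint $1 \leq d_t \leq \ell-1$, and without that constraint the bound is useless whenever the winding number exceeds $1$. For example, with $\ell=5$, a $3$-cycle of winding $w=2$ has $\sum d_t = 10$, and the unconstrained bound gives $\prod d_t! \leq 8! = 40320$, while $(\ell!)^w = (5!)^2 = 14400$; cubing and comparing to $f(\ell)^w$ the per-term ratio would exceed $1$ by an enormous margin, far from being ``much smaller than $\prod_i |a_{ii}|$''. Only after enforcing $d_t \leq \ell-1$ does the constrained maximum drop to $4!\cdot 4!\cdot 2! = 1152$, which is small enough. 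You also don't account for the number of permutations with each cycle type and winding, which is needed to turn per-term bounds into a sum bounded by $1 - 2/f(\ell)$. None of this is fatal --- the Leibniz route does look viable if carried through with the active constraint and a careful count --- but as written the proposal leaps over exactly the part where the paper invests its effort (the normalization of $A_C$ into $B_\ell$ and the inductive Gaussian-elimination bounds), and the stated inequality, taken at face value, would let the estimation fail.
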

\begin{proof}
Let $C = (v_0 \rightarrow u_1 \rightarrow v_1 \rightarrow u_2 \rightarrow v_2\rightarrow \ldots  u_{\ell} \rightarrow v_{0})$ where $v_i, u_{i+1}$ belong to the same part and $v_i \in S$ (See Figure \ref{fig:exchange_cycle}).

\begin{figure}
    \centering
    \includegraphics[page=4,width=0.4\textwidth]{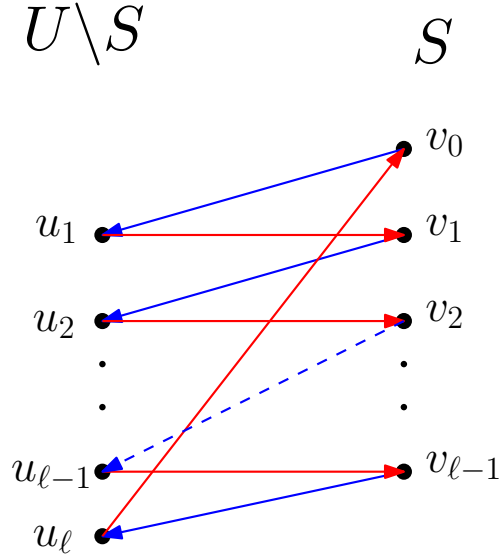}
    \caption{The cycle $C$}
    \label{fig:exchange_cycle}
\end{figure}

By the Lemma~\ref{lem:vol-determinant}, we know that $\vol(S \Delta C) = \det(A_C A_C^\top)^{1/2} \cdot \vol(S) = |\det(A_C)| \cdot \vol(S)$.  We will index the entries of $A_C$ according to the indices of $u_i$ and $v_j$ where the last column corresponds to $v_0$. Since $C$ has $2\ell$ hops, $A_C$ is an $\ell\times \ell$ matrix.

\begin{figure}
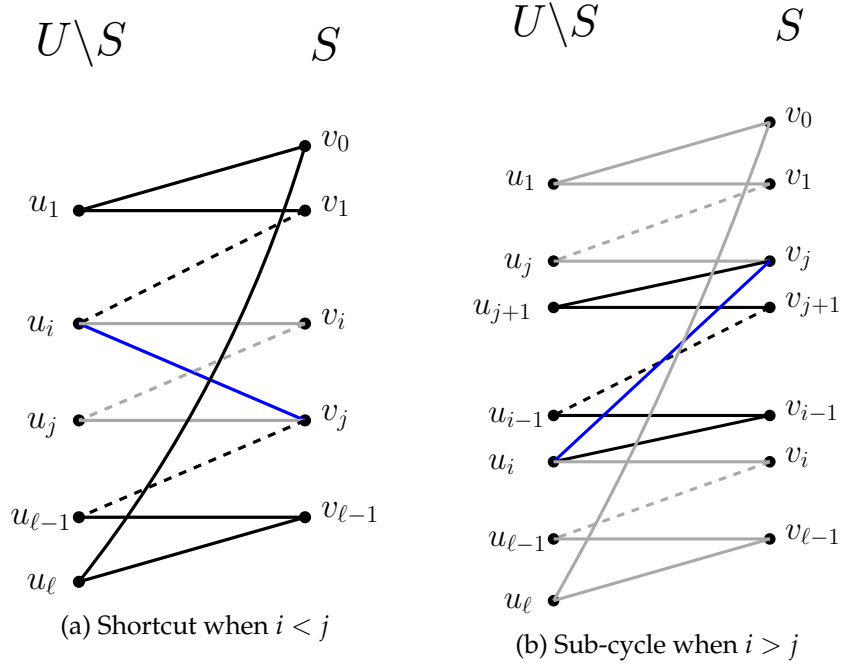

    \centering
    \begin{subfigure}{0.3\textwidth}
        \centering
        \includegraphics[page=5,width=\textwidth]{figures.pdf}
        \caption{Shortcut when $i<j$}
    \end{subfigure}
    \hspace{1cm}
    \begin{subfigure}{0.3\textwidth}
        \centering
        \includegraphics[page=6,width=\textwidth]{figures.pdf}
        \caption{Sub-cycle when $i>j$}
    \end{subfigure}

    \caption{Structure when the edge $u_i \rightarrow v_j$ (in blue) is added}
    \label{fig:my_label}
\end{figure}

We now bound each entry of the matrix $A_C$ in terms of the its diagonal entries, $a_{i,i}$ for $i=1,\ldots, n$. We show upper bounds on the absolute value of each entry as a function of the diagonal entries. Consider the $i,j$-th entry of $A_C$. For $i > j$, define the cycle $C_{i,j} := (u_i \rightarrow v_j \rightarrow u_{j+1} \rightarrow v_{j+1}\rightarrow \ldots  v_{i-1} \rightarrow u_i)$. $C_{i,j}$ is a cycle with $2(i-j)$ hops and $V(C_{i,j}) \subset V(C)$. $C$ being a minimal $f$-violating cycle implies that $C_{i,j}$ is not an $f$-violating cycle. Therefore, $e^{-w_0(C_{i,j})} = |a_{i,j}| \cdot \prod_{s = j+1}^{i-1} |a_{s,s}| < f(i-j)$.
This implies
\begin{equation}
    |a_{i,j}| < \frac{f(i-j)}{\prod_{s = j+1}^{i-1} |a_{s,s}|}.
\end{equation}

Similarly for $j = \ell$, we have $ |a_{i,\ell}| < \frac{f(i)}{\prod_{s = 1}^{i-1} |a_{s,s}|}\,.$

For $i < j < \ell$, define $C'_{i,j} := (v_0 \rightarrow u_1 \rightarrow v_1\rightarrow \ldots u_i \rightarrow v_j\rightarrow \ldots  u_{\ell} \rightarrow v_{0})$. Again, $C'_{i,j}$ is a cycle with $2(\ell - j+i)$ hops which is not $f$-violating. Therefore, \begin{equation}
   |a_{i,j}| \cdot \prod_{s = 1}^{i-1} |a_{s,s}| \cdot \prod_{s = j+1}^{\ell} |a_{s,s}| < f(\ell - j + i). \label{eq:02}
\end{equation}
Since $C$ is an $f$-violating cycle, we also have \begin{equation}
    \prod_{s = 1}^\ell |a_{s,s}| > f(\ell). \label{eq:04}
\end{equation}
Combining~\eqref{eq:02} and~\eqref{eq:04} gives
 \begin{align*}
     |a_{i,j}| &< \frac{f(\ell-j+i)}{f(\ell)} \cdot \prod_{s=i}^j |a_{s,s}|.
\end{align*}
Let $B_{\ell}$ be the matrix obtained by applying the following operations to $A_C$ 
\begin{itemize}
    \item Multiply the last column by $a_{1,1}$ and for $j < \ell $, divide the $j$-th column by $\prod_{s=2}^{j} a_{s,s}$
    \item Divide the first row by $a_{1,1}$ and for $i > 1$, multiply the $i$-th row by $\prod_{s = 2}^{i-1} a_{s,s}$
    \item Divide the last column by $f(\ell)$ and, if needed, flip the sign of the last column so that $a_{\ell, \ell} > 0$.
\end{itemize}
Then $|\det(A_C)| = f(\ell) \cdot |\det(B_{\ell})|$, and $B_{\ell}$ satisfies the following properties:
\begin{itemize}
    \item $b_{i,i}$ = 1 for all $i \in [\ell-1]$, $b_{\ell, \ell} \geq 1$,
    \item $|b_{i,j}| \leq f(i-j)$ for all $j < i \leq \ell$, and
    \item $|b_{i,j}| \leq f(\ell - j + i)/f(\ell)$ for all $i < j \leq \ell$.
\end{itemize}

For $\ell \geq 2$, we have the following claim:
\begin{claim}\label{claim:mat_det}
  $\det(B_2) \geq 0.75$ and $\det(B_{\ell}) > 0.1$ for all $\ell \geq 3$.
\end{claim}
With this claim in hand, it implies that $|\det(A_C)| > 0.1 \cdot f(\ell) > 2$ for all $\ell \geq 3$. For $\ell = 2$, $\det(B_{2}) \geq 0.75$, and $\det(A_C) \geq 0.75 \cdot f(2) > 2$. Therefore, $\vol(S\Delta C) \geq |\det(A_C)| \cdot \vol(S) \geq 2 \cdot \vol(S)$.
\end{proof}

\begin{proof}[of Claim ~\ref{claim:mat_det}]
Consider the following process on $B_{\ell}$:
\begin{algorithm}[H]
\caption{Gaussian Elimination Process (Column Operations)}\label{alg:gaussian_elimination}
\begin{algorithmic}
\For{$s = 1, \ldots, \ell$} \Comment{Outer Loop}
    \For{$j = s+1, \ldots, \ell$} \Comment{Inner Loop}
    \State $b_{:,j} = b_{:,j} - b_{:,s} \cdot \frac{b_{s,j}}{b_{s,s}} $
    \EndFor
\EndFor
\end{algorithmic}
\end{algorithm}

Note that $\det(B_2) \geq 0.75$, $\det(B_3) \geq 0.73$, and $\det(B_4) \geq 0.83$ (see the end of the Appendix). From hereafter, we will assume that $\ell \geq 5$.

The output of the Algorithm~\ref{alg:gaussian_elimination} is a lower triangular matrix.
Let $b_{i,j}(s)$ denote the value of $b_{i,j}$ before the $s$-th iteration of the outer loop of Gaussian Elimination. For example, $b_{i,j}(1) = b_{i,j}$ for all $i,j$.

For any $i < j$, $b_{i,j}$ becomes $0$ at the end of the $i$-th iteration of the outer loop of the algorithm, and does not change after that. So, the final value of $b_{i,j}$, before it becomes $0$, is $b_{i,j}(i)$. Similarly, for $i \geq j$, the value of $b_{i,j}$ does not change after the $(j-1)$-th iteration of the outer loop, and therefore the final value of $b_{i,j}$, i.e., $b_{i,j}(\ell)$ satisfies $b_{i,j}(\ell) = b_{i,j}(j)$.

Since this process does not change the determinant of $B_\ell$, we have $\det(B_\ell) = \prod_{j=1}^\ell b_{j,j}(j)$. By Lemma \ref{lem:gaussian}, $b_{j,j}(j) > 1-0.92/\ell$ for $j < \ell$ and $b_{\ell, \ell}(\ell) > 0.303$.
Therefore,
\begin{align*}
  \det(B_{\ell}) = \prod_{j=1}^\ell b_{j,j}(j) \geq \left(1-\frac{0.92}{\ell}\right)^{\ell-1}\cdot 0.303\,.
\end{align*}
The function $\left(1-\frac{0.92}{\ell}\right)^{\ell-1}$ is a decreasing function of $\ell$, but has a horizontal asymptote at $\sim0.39$. Thus, $\left(1-\frac{0.92}{\ell}\right)^{\ell-1} \geq 0.39$ and this gives
\begin{equation*}
     \det(B_\ell) > 0.39 \times 0.303 > 0.1\,.
\end{equation*}
\end{proof}

\begin{lemma} \label{lem:gaussian}
For $\ell \geq 5$, the final values of entries of $B_\ell$ after Algorithm \ref{alg:gaussian_elimination} are bounded as follows:
\begin{enumerate}
    \item $|b_{i,j}(j)| < \binom{i}{j} \cdot f(i-j) $ for $1< j < i$, \label{itm:a}
    \item $|b_{i,j}(i)| < 1.5 \cdot f(\ell - j + i)/f(\ell)$ for $i < j < \ell$,\label{itm:b}
    \item $|b_{i,\ell}(i)| < 2.84 \cdot f(i)/f(\ell)$ for $i < \ell$,\label{itm:c}
    \item $ b_{j,j}(j) >  1- \frac{0.92}{\ell}$ for all $j < \ell$,\label{itm:d}
    \item $ b_{\ell,\ell}(\ell) > 0.303 $ .\label{itm:e}
\end{enumerate}
\end{lemma}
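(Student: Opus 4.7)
The plan is to prove Parts 1--5 of Lemma~\ref{lem:gaussian} simultaneously by induction on the iteration index $k$. The workhorse is the telescoped recurrence
\begin{equation*}
b_{i,j}(t) \;=\; b_{i,j}(1) \;-\; \sum_{s=1}^{t-1} \frac{b_{i,s}(s)\, b_{s,j}(s)}{b_{s,s}(s)}, \qquad t \leq \min(i,j),
\end{equation*}
obtained by unrolling the update rule of Algorithm~\ref{alg:gaussian_elimination}. A key observation is that, because column $s$ is never touched after iteration $s-1$, the entry $b_{i,s}(s)$ appearing in each summand is already its ``final'' lower-triangular value (exactly what Parts 1, 4, 5 bound), and $b_{s,j}(s)$ for $s<j$ is the row-$s$ upper-triangular value immediately before being zeroed out (exactly what Parts 2 and 3 bound). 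Hence the identity refers only to quantities on which the lemma places bounds.

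At stage $k$ of the induction I would assume the five bounds hold for every entry finalized in iterations $1,\ldots,k-1$, and establish them for entries finalized in iteration $k$: the column-$k$ entries $b_{i,k}(k)$ for $i>k$ (Part 1); the row-$k$ upper-triangular entries $b_{k,j}(k)$ for $k<j<\ell$ or $j=\ell$ (Parts 2 and 3); and the diagonal pivot $b_{k,k}(k)$ (Part 4 if $k<\ell$, Part 5 if $k=\ell$). For each such entry I would apply the identity, upper-bound the initial term $b_{i,j}(1)$ by the hypotheses on $B_\ell$ stated just before the claim, upper-bound each summand using the inductive hypothesis, and conclude by the triangle inequality. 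For the diagonal case of Part 4, specializing to $i=j$ gives $b_{j,j}(j) = 1 - \sum_{s=1}^{j-1} b_{j,s}(s)\, b_{s,j}(s)/b_{s,s}(s)$, and the task reduces to showing the sum has absolute value at most $0.92/\ell$; Part 5 is analogous but uses Part 3 instead of Part 2 and so delivers only the weaker constant $0.303$.

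The main obstacle will be the tight constant tracking. The sum defining $1 - b_{j,j}(j)$ has up to $\ell - 1$ terms, yet must be bounded by $0.92/\ell$, so each term can contribute only on the order of $1/\ell^2$ on average. This leaves essentially no slack and is only possible because the coefficients $f(\ell - j + s)/f(\ell) = \bigl((\ell-j+s)!/\ell!\bigr)^3$ decay super-geometrically in $j-s$, which comfortably offsets the factorial growth $f(i-s) = 2((i-s)!)^3$ and the polynomial factor $\binom{i}{s}$ in the other summands. Stirling-type estimates identify one or two dominant terms in the sum, and a careful computation---likely requiring separate hand-verification for small $\ell$ in the same spirit as the explicit checks of $\det(B_2), \det(B_3), \det(B_4)$ used in Claim~\ref{claim:mat_det}---closes the induction. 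The peculiar constants $1.5$, $2.84$, and $0.303$ in Parts 2, 3, and 5 are precisely calibrated so that each bound leaves just enough room to absorb the error from one more iteration of Algorithm~\ref{alg:gaussian_elimination}.
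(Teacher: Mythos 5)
Your setup is the same as the paper's: the unrolled recurrence $b_{i,j}(t)=b_{i,j}(1)-\sum_{s=1}^{t-1} b_{i,s}(s)\,b_{s,j}(s)/b_{s,s}(s)$, the observation that each summand involves only already-finalized values, and an induction that closes all five bounds simultaneously via the triangle inequality. The problem is that everything after that point --- which is the actual content of the lemma --- is asserted rather than carried out. The lemma is a purely quantitative statement with hard-wired constants ($1.5$, $2.84$, $1-0.92/\ell$, $0.303$), and your proposal defers exactly the step that produces them to ``Stirling-type estimates'' and ``a careful computation,'' with only a heuristic remark that the super-geometric decay of $f(\ell-j+s)/f(\ell)$ should beat the factorial growth of $f(i-s)$ and the binomial factor. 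That is not yet a proof: you never verify that each term of the $(\ell-1)$-term sum is $O(1/\ell^{2})$, nor that the accumulated constants in Parts 1--3 are self-consistent (e.g.\ that the $1.5$ and $2.84$ slack absorbs the $1/(1-0.92/\ell)$ pivot loss while still feeding back a sum small enough to give $0.92/\ell$ on the diagonal). The paper does this with concrete identities you would need to find: rewriting $\frac{f(i-s)f(\ell-j+s)}{f(i-j)f(\ell)}$ as $\bigl(\binom{\ell+i-j}{\ell}/\binom{\ell+i-j}{i-s}\bigr)^{3}$, bounding that ratio by $(i-j+1)/\ell$, collapsing $\sum_{s=1}^{j-1}\binom{\ell-j+s}{\ell-j}$ via the hockey-stick identity to $\binom{\ell}{\ell-j+1}$, and then elementary maximizations such as $(\ell-j+1)j/\ell^{2}\le 0.36$ and $i/(\ell-j+i)^{2}\le 1/\bigl(4(\ell-j)\bigr)$ for $\ell\ge 5$. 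Without some substitute for these computations the specific bounds in Parts 1--5 are not established, and they are exactly what Claim~\ref{claim:mat_det} consumes.

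Two smaller points. First, your suggestion that small cases may need separate hand-verification ``as for $\det(B_2),\det(B_3),\det(B_4)$'' conflates two things: those explicit checks live in Claim~\ref{claim:mat_det} for $\ell\le 4$, while this lemma assumes $\ell\ge 5$ precisely so that the inductive bounds (e.g.\ $b_{s,s}(s)\ge 1-0.92/\ell\ge 0.816$ and $0.3376\,i\ge 1$ when $i\ge 3$) hold uniformly; no extra case analysis inside the lemma is needed, but you do need $\ell\ge 5$ in several inequalities and should say where. Second, in Part 1 the target bound $\binom{i}{j}f(i-j)$ is not simply ``initial term plus small error'': the paper's argument gives $|b_{i,j}(j)|\le f(i-j)\bigl(1+0.6624\binom{i}{j}\bigr)$ and then uses $1\le 0.3376\binom{i}{j}$ (valid since $i\ge 3$, $j<i$) to fold the additive $1$ into the binomial coefficient --- a step your sketch does not anticipate and that fails for the trivial first-column/first-row cases, which must be handled as base cases exactly as the paper does.
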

\begin{proof}
We will prove the lemma by induction on $j$, the column index. Note that Algorithm \ref{alg:gaussian_elimination} does not change the values of the first column of $B_\ell$, and it also does not change the values of the first row of $B_\ell$ before they become $0$. So, the bounds are trivially true for the first column and the first row.

For $i \geq j$,
\begin{equation}
    b_{i,j}(j) = b_{i,j}(1) - \sum_{s = 1}^{j-1} b_{i,s}(s) \cdot \frac{b_{s, j}(s)}{b_{s,s}(s)} \,. \label{eq:g1}
\end{equation}

Taking absolute values gives
\begin{equation}
    |b_{i,j}(j) - b_{i,j}(1)| \leq \sum_{s = 1}^{j-1} |b_{i,s}(s)| \cdot \frac{|b_{s, j}(s)|}{|b_{s,s}(s)|}\,. \label{eq:g2}
\end{equation}
The induction hypothesis implies that for  all $s < j$, $|b_{i,s}(s)| < \binom{i}{s} \cdot f(i-s)$, $|b_{s, j}(s)| < 1.5 \cdot f(\ell-j+s)/f(\ell)$, and $b_{s,s}(s) > 1-0.92/\ell \geq 0.816$ (since $\ell \geq 5$). Plugging these bounds in~\eqref{eq:g2}, we get
\begin{align}
    |b_{i,j}(j) - b_{i,j}(1)| &<  \frac{1.5}{0.816} \cdot \sum_{s = 1}^{j-1} \binom{i}{s} \cdot f(i-s) \cdot \frac{f(\ell-j+s)}{f(\ell)}\,. \label{eq:g3}
\end{align}
Note that
\begin{equation*}
    \frac{f(i-s)\cdot f(\ell-j+s)}{f(i-j)\cdot f(\ell)} = \frac{((i-s)!)^3 \cdot ((\ell-j+s)!)^3}{((i-j)!)^3 \cdot (\ell!)^3} = \left(\frac{\binom{\ell+i-j}{\ell}}{\binom{\ell+i-j}{i-s}}\right)^3 \,.
\end{equation*}
For any $1 \leq s \leq j-1$, $\frac{\binom{\ell+i-j}{\ell}}{\binom{\ell+i-j}{i-s}} \leq \frac{(i-j+1)}{\ell}$. Therefore,
\begin{align*}
    \frac{f(i-s)\cdot f(\ell-j+s)}{f(i-j)\cdot f(\ell)} &\leq  \frac{\binom{\ell+i-j}{\ell}}{\binom{\ell+i-j}{i-s}} \cdot \frac{(i-j+1)^2}{\ell^2}\,.
\end{align*}
Plugging this in~\eqref{eq:g3} gives
\begin{align}
    |b_{i,j}(j) - b_{i,j}(1)| &\leq f(i-j)\left( 1.84\cdot \frac{(i-j+1)^2}{\ell^2} \cdot \sum_{s = 1}^{j-1} \binom{i}{s} \cdot \frac{\binom{\ell+i-j}{\ell}}{\binom{\ell+i-j}{i-s}} \right) \notag\\
    &\leq f(i-j)\left( 1.84\cdot \frac{(i-j+1)^2}{\ell^2} \cdot\frac{i! \binom{\ell+i-j}{\ell}}{(\ell+i-j)!} \sum_{s = 1}^{j-1} \frac{(\ell-j+s)!}{s!}\right) \notag\\
    &= f(i-j)\left( 1.84\cdot \frac{(i-j+1)^2}{\ell^2} \cdot\frac{i! \binom{\ell+i-j}{\ell}(\ell-j)!}{(\ell+i-j)!} \sum_{s = 1}^{j-1} \binom{\ell-j+s}{\ell-j}\right) \notag\\
    &= f(i-j)\left( 1.84\cdot \frac{(i-j+1)^2}{\ell^2} \cdot\frac{i!(\ell-j)! }{\ell!(i-j)!} \sum_{s = 1}^{j-1} \binom{\ell-j+s}{\ell-j}\right)\,.\label{eq:g4}
\end{align}

For positive integers $a, b, x$ with $ x \leq a \leq b$,
\begin{equation}
  \binom{a}{x} + \binom{a+1}{x} + \binom{a+2}{x} + \ldots + \binom{b}{x} = \binom{b+1}{x+1} - \binom{a}{x-1}\,. \label{eq:g5}
\end{equation}
Using~\eqref{eq:g5} with $a = \ell-j+1$, $b = \ell-1$, and $x = \ell - j$ gives $\sum_{s = 1}^{j-1} \binom{\ell-j+s}{\ell-j} \leq \binom{\ell}{\ell-j+1}$ and from~\eqref{eq:g4},
\begin{align}
    |b_{i,j}(j) - b_{i,j}(1)| &\leq f(i-j)\left(  1.84\cdot \frac{(i-j+1)^2}{\ell^2} \cdot\frac{i!(\ell-j)!}{(i-j)!\ell!} \cdot \binom{\ell}{\ell-j+1}\right) \notag\\
     &= f(i-j)\left( 1.84\cdot \binom{i}{j} \cdot \frac{(i-j+1)^2j}{\ell^2(\ell-j+1)} \right) \label{eq:g6}\\
     &\leq f(i-j)\left( 1.84\cdot \binom{i}{j} \cdot \frac{(\ell-j+1)j}{\ell^2} \right).\label{eq:g7}
\end{align}

Since $(\ell-j+1)j$ is maximized at $j = (\ell+1)/2$, we have $ \frac{(\ell-j+1)j}{\ell^2}\leq \frac{(\ell+1)^2}{4\ell^2} \leq 0.36$ for any $\ell \geq 5$.

Plugging this in~\eqref{eq:g7} gives
\begin{align*}
    |b_{i,j}(j)| &\leq |b_{i,j}(j)| + f(i-j)\cdot 0.6624 \cdot \binom{i}{j}  \leq f(i-j)\left(1+ 0.6624\cdot \binom{i}{j} \right).
\end{align*}

Now we will restrict ourselves to the case when $i > j$. For $i = 2$, $j$ can only be $1$ and this corresponds to an entry in the first column for which the bounds are trivially true. So, we only need to consider $i \geq 3$.
Since $1 \leq j < i$, we have $\binom{i}{j} \geq i$. Furthermore, since $\ell \geq 5$ and $i \geq 3$, we have $1 \leq 0.3376\cdot i < 0.3376\binom{i}{j}$. This gives
\begin{align*}
    |b_{i,j}(j)| &\leq f(i-j)\left( 0.3376\cdot \binom{i}{j} +0.6624\cdot \binom{i}{j} \right)
    \leq  f(i-j) \cdot \binom{i}{j}.
\end{align*}
This concludes the proof of part \ref{itm:a}.

For $j > i$, we have
\begin{equation}
 \label{eq:5}    |b_{i,j}(i) - b_{i,j}(1)| \leq \sum_{s = 1}^{i-1} |b_{i,s}(s)| \cdot \frac{|b_{s, j}(s)|}{|b_{s,s}(s)|}\,.
\end{equation}

By the induction hypothesis, $|b_{s,j}(s)| < 1.5 \cdot f(\ell-j+s)/f(\ell)$, $|b_{i, s}(s)| < \binom{i}{s} \cdot f(i-s)$, and $b_{s,s}(s) > 1-0.92/\ell \geq 0.816$. Plugging these bounds in~\eqref{eq:5}, we get
\begin{align}
    |b_{i,j}(i)- b_{i,j}(1)| &<   \frac{1.5}{0.816}\sum_{s = 1}^{i-1}  \frac{f(\ell-j+s)}{f(\ell)} \cdot \binom{i}{s} \cdot f(i-s). \label{eq:temp}
\end{align}
Note that $\frac{f(\ell-j+s)\cdot f(i-s)}{f(\ell-j+i)} = \frac{2((\ell-j+s)!)^3 \cdot ((i-s)!)^3}{((\ell-j+i)!)^3 } = 2 \cdot \left(\frac{1}{\binom{\ell-j+i}{i-s}}\right)^3 $. For any $1 \leq s \leq i-1$, $\frac{1}{\binom{\ell-j+i}{i-s}} \leq \frac{1}{\ell-j+i}$. Therefore,
 \begin{align*}
    \frac{f(\ell-j+i-s)\cdot f(s)}{f(\ell-j+i)} \leq  \frac{1}{\binom{\ell-j+i}{i-s}} \cdot \frac{2}{(\ell-j+i)^2} \,.
\end{align*}
Plugging this in~\eqref{eq:temp} gives
\begin{align*}
    |b_{i,j}(i) - b_{i,j}(1)| &\leq  \frac{f(\ell-j+i)}{f(\ell)}  \cdot \frac{3.68}{(\ell-j+i)^2} \cdot \left(\sum_{s = 1}^{i-1}  \binom{i}{s} \cdot  \frac{1}{\binom{\ell-j+i}{i-s}} \right) \\
    &=  \frac{f(\ell-j+i)}{f(\ell)} \cdot  \frac{3.68\cdot i!(\ell-j)!}{(\ell-j+i)^2\cdot(\ell-j+i)!}\cdot \sum_{s = 1}^{i-1}   \binom{\ell-j+s}{\ell-j}\,.
\end{align*}
Using~\eqref{eq:g5} again, we get $\sum_{s = 1}^{i-1}   \binom{\ell-j+s}{\ell-j} \leq \binom{\ell-j+i}{\ell-j+1}$ and this gives
\begin{align}
     |b_{i,j}(i) - b_{i,j}(i)| &\leq  \frac{f(\ell-j+i)}{f(\ell)} \cdot \frac{3.68\cdot i!(\ell-j)!}{(\ell-j+i)^2\cdot(\ell-j+i)!} \cdot \binom{\ell-j+i}{\ell-j+1} \notag\\
      &=  \frac{f(\ell-j+i)}{f(\ell)} \cdot \frac{3.68\cdot i}{(\ell-j+i)^2\cdot(\ell-j+1)}\,. \label{eq:g8}
\end{align}
The function $\frac{i}{(\ell-j+i)^2}$ is maximized at $i = \ell-j$. So for any $j < \ell$, we have
\begin{align*}
     |b_{i,j}(i) - b_{i,j}(1)| &\leq   \frac{f(\ell-j+i)}{f(\ell)} \cdot \frac{3.68}{4(\ell-j)\cdot(\ell-j+1)} \leq 0.5 \cdot   \frac{f(\ell-j+i)}{f(\ell)}.
\end{align*}
Using the fact that $|b_{i,j}(1)| \leq f(\ell-j+i)/f(\ell) $, we have $|b_{i,j}(i)| \leq 1.5 \cdot f(\ell-j+i)/f(\ell)$ for $i < j < \ell$.

For $j = \ell$ and $i \geq 2$, equation~\eqref{eq:g8} gives
\begin{align*}
     |b_{i,\ell}(i)- b_{i,\ell}(1)| &\leq   \frac{f(i)}{f(\ell)} \cdot \frac{3.68}{i}  \leq 1.84 \cdot   \frac{f(i)}{f(\ell)},
\end{align*}
and therefore $|b_{i,\ell}(i)| \leq 2.84\cdot \frac{f(i)}{f(\ell)}$.
This concludes the proof of parts \ref{itm:b} and \ref{itm:c}.

For $i = j$ and $j < \ell$, using~\eqref{eq:g6}, we get
\begin{align*}
  |b_{j,j}(j) -1| \leq  \frac{1.84\cdot j}{\ell^2(\ell-j+1)} \leq  \frac{1.84\cdot (\ell-1)}{2\ell^2} \leq \frac{0.92}{\ell}.
\end{align*}

For $i = j = \ell$, by~\eqref{eq:g3} and the induction hypothesis,
\begin{align*}
    |b_{\ell,\ell}(\ell) - b_{\ell, \ell}(1)| &<  \frac{2.84}{1-0.92/\ell}\sum_{s = 1}^{\ell-1} \binom{\ell}{s} \cdot f(\ell-s) \cdot \frac{f(s)}{f(\ell)}\,.
\end{align*}
Following the proof outline of equation~\eqref{eq:g6} gives $|b_{\ell,\ell}(\ell)- b_{\ell, \ell}(1)| \leq   \frac{2.84}{0.816}\cdot  \frac{1}{\ell} \leq 0.697$.
Since $b_{\ell, \ell}(1) \geq 1$, we have $b_{\ell,\ell}(\ell) \geq b_{\ell,\ell}(1) - 0.697 \geq 0.303$.
\end{proof}

\section{Update Step for General Matroids}\label{sec:matroid}

Consider the case when $\mathcal{M} = ([n], \mathcal{I})$ is a general matroid of rank $d$. When we exchange on a cycle $C$ and update $S\leftarrow S\Delta C$, the resulting set is guaranteed to be independent in the linear matroid because of the determinant bounds in Lemma~\ref{thm:exch}, but it is not clear that it would be independent in the general constraint matroid, $\mathcal{M}$, when $\mathcal{M}$ is not a partition matroid. However, by exchanging on a minimal $f$-violating cycle in our algorithm, we can make the same guarantee.

In this section, we prove the existence of an $f$-violating cycle for any matroid $\mathcal{M}$ with rank $d$ when the current basis $S$ is sufficiently smaller in volume than the optimal solution $OPT$. We also prove that exchanging on a minimal $f$-violating cycle preserves independence in $\mathcal{M}$.

\begin{theorem}
For any basis $S$ with $|S| = d$ and $\vol(S) > 0$, if $\vol(S) < \vol(OPT) \cdot e^{-5d\log(d)}$, then there exists an $f$-violating cycle in $G(S)$.
\end{theorem}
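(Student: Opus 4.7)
The plan is to lift the partition-matroid proof of Lemma~\ref{thm:existence} to a general matroid. Examining that proof, the only place where partition structure was used was the remark that for $u_i,v_i\in\mathcal{P}_i$ the backward arc $(v_i,u_i)$ is automatically present in $G(S)$. Once we find, for an arbitrary matroid $\M$, a pairing between $S$ and $OPT$ that similarly guarantees all needed backward arcs, the determinant/Leibniz/cycle-decomposition machinery from the partition case carries over with no changes.

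The tool that provides such a pairing is the classical symmetric exchange theorem of Brualdi (equivalently, a matroid-intersection/matroid-union corollary; see \cite{schrijver2003combinatorial}): for any two bases $B_1,B_2$ of $\M$ there is a bijection $\phi\colon B_1\setminus B_2\to B_2\setminus B_1$ such that $B_1-x+\phi(x)\in\I$ for every $x\in B_1\setminus B_2$. I would apply this with $B_1=S$ and $B_2=OPT$, then relabel so that $S=\{v_1,\dots,v_d\}$, $OPT=\{u_1,\dots,u_d\}$, where $u_i:=v_i$ for $v_i\in S\cap OPT$ and $u_i:=\phi(v_i)$ otherwise. After this relabeling, for every $i\in[d]$ with $u_i\neq v_i$ we have $S-v_i+u_i\in\I$, so $(v_i,u_i)$ is a backward arc of $G(S)$; this is the general-matroid replacement for the ``same part'' observation used in Lemma~\ref{thm:existence}.

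With this indexing fixed, I would repeat the proof of Lemma~\ref{thm:existence} essentially verbatim. Let $A$ be the coefficient matrix with $OPT=SA^\top$, and let $A_k$ be the principal submatrix indexed by $X=OPT\setminus S$ and $Y=S\setminus OPT$; because rows/columns in $S\cap OPT$ contribute only an identity block, $\det(A)=\det(A_k)$. The volume hypothesis still yields $|\det(A_k)|\geq e^{5k\log k}$, and the Leibniz expansion still produces a permutation $\sigma\in\mathcal{S}_k$ with $\prod_{i=1}^k|a_{i\sigma(i)}|>e^{4k\log k}$. For each cycle $C_j$ of $\sigma$, I assemble a $2|C_j|$-hop directed cycle in $G(S)$ by alternating forward arcs $(u_i,v_{\sigma(i)})$, which exist because $a_{i\sigma(i)}\neq 0$ (linear matroid exchange), and backward arcs $(v_i,u_i)$, which exist by the Brualdi relabeling (constraint matroid exchange). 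The same counting argument as in Lemma~\ref{thm:existence} then forces at least one of these cycles to be $f$-violating, yielding the claim.

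The main (and only) new obstacle over the partition case is exactly the one resolved above: producing all the backward arcs simultaneously. A na\"ive matroid exchange gives backward arcs one at a time, which is not enough, since the cycle $C_j$ needs every backward arc $(v_i,u_i)$ with $i$ in the support of $C_j$ to be present in the \emph{same} exchange graph $G(S)$. The simultaneous bijection guaranteed by the symmetric exchange theorem is precisely what closes this gap; once it is in hand, none of the subsequent linear-algebraic estimates need to be redone.
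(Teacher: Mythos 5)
Your proposal is correct and follows essentially the same route as the paper: the paper's proof also obtains all backward arcs simultaneously via the perfect-matching/symmetric-exchange property for two bases (Corollary 39.12a in~\cite{schrijver2003combinatorial}, equivalent to the Brualdi bijection you invoke) and then reruns the Leibniz-expansion and cycle-decomposition argument of Lemma~\ref{thm:existence} verbatim. No gaps to report.
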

\begin{proof}
Since $S$ and $OPT$ are independent and $|S| = |OPT|$, there exists a
perfect matching between $OPT\backslash S$ and $S \backslash OPT$ using the backward arcs in $G(S)$ (Chapter 39, Corollary 39.12a, ~\cite{schrijver2003combinatorial}). Let $X = OPT \backslash S$, $Y = S \backslash OPT$, and $|X| = |Y| = k$. Without loss of generality, let $Y = \{v_1, \ldots, v_k\}$ and $X = \{u_1, \ldots, u_k\}$ such that $(v_i \rightarrow u_i)$ is an arc in $G(S)$ for all $i \in [k]$. 

Let $T$ and $S$ be matrices whose columns are the vectors in $OPT$ and $S$,  respectively.  Let $A$ be the coefficient matrix of $T$ w.r.t. $S$, i.e.,
$T = SA^\top$. Then $A = \begin{bmatrix} A_k & A' \\ 0 & I_{d-k} \end{bmatrix}$, where $A_k$ is the sub-matrix of $A$ corresponding to rows in $X$ and columns in $Y$. Then by the same proof as in~\ref{thm:existence}, there exists a permutation $\sigma \in \mathcal{S}_k$ such that
\begin{align}
   \prod_{i=1}^k |a_{i\sigma(i)}| > |\det(A)|\cdot e^{-k\log(k)} \geq e^{4k\log(k)}\,. \label{eq:exist2}
\end{align}

Let the cycle decomposition of $\sigma$ be $\sigma = \{C_1, C_2, \ldots, C_\ell\}$ where $C_i = (i_1 \rightarrow i_2 \rightarrow \ldots i_{j} \rightarrow i_1)$. Since there is an edge from $v_{\sigma(j)}$ to $u_{\sigma(j)}$ for all $j$, every cyclic permutation $C_i$ corresponds to a cycle $(u_{i_1} \rightarrow v_{i_2} \rightarrow u_{i_2} \rightarrow v_{i_3} \ldots \rightarrow u_{i_j} \rightarrow v_{i_1} \rightarrow u_{i_1})$ in $G(S)$. We claim that at least one of these cycles is an $f$-violating cycle. If not, then by the definition of $f$-violating cycles, we have $\prod_{i\in C_j} |a_{i\sigma(i)}| \leq 2(|C_j|!)^3$ for all $j \leq \ell$. Multiplying over all the cycles in $\sigma$ gives \begin{align*}
    \prod_{i=1}^k |a_{i\sigma(i)}| = \prod_{j=1}^\ell \prod_{i \in C_j} |a_{i\sigma(i)}| \leq \prod_{j=1}^\ell 2 (|C_j|!)^3 < e^{4k\log(k)},
\end{align*}
where the last inequality follows from $\sum_{j=1}^\ell |C_j| = k$. This contradicts~\eqref{eq:exist2}, so $G(S)$ must contain an $f$-violating cycle.
\end{proof}

\begin{lemma} \label{lem:matroid-indep}
    If $C$ is a minimal $f$-violating cycle in $G(S)$, then $S\Delta C$ is independent in $\mathcal{M}$.
\end{lemma}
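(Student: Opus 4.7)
Plan. The strategy is to assume $S\Delta C\notin\mathcal{I}$ and derive a contradiction from the minimality of $C$ by combining several minimality inequalities. By a classical matroid exchange theorem (Schrijver~\cite{schrijver2003combinatorial}, Theorem~39.13), the fact that $S$ is a basis while $S\Delta C$ is dependent (and $|S|=|S\Delta C|$) forces the perfect matching $M$ between $C\cap S$ and $C\setminus S$ induced by the backward arcs of $C$ to be \emph{not the unique} perfect matching in the bipartite $\mathcal M$-exchange graph on these two vertex classes. Hence there is another perfect matching $M'\ne M$ whose edges are backward arcs of $G(S)$. Viewing $\sigma := M'\circ M^{-1}$ as a permutation of the cyclic positions of $C\cap S$, I pick any nontrivial cycle $(i_1,\ldots,i_k)$ of $\sigma$ with $k\ge 2$.

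For each $p\in[k]$, the $M'$-edge at $v_{i_p}$ is a backward arc of $G(S)$ that is not an arc of $C$ (a ``chord''); splicing it into $C$ produces a cycle $C_p$ in $G(S)$ whose half-length equals $d_p:=(i_p - i_{p+1})\bmod\ell\in\{1,\ldots,\ell-1\}$. In particular $V(C_p)\subsetneq V(C)$, so by the minimality of $C$ each $C_p$ fails to be $f$-violating:
\[
\prod_{(u,v)\in F_p}|a_{uv}| \;\le\; f(d_p),
\]
where $F_p$ is the cyclic interval of $C$'s forward arcs traversed by $C_p$, with $|F_p|=d_p$. The key observation is that, since $F_p$ runs forward from $i_{p+1}$ to $i_p$ and these endpoints chain through $i_1,\ldots,i_k$, the intervals $F_k,F_{k-1},\ldots,F_1$ concatenate into a closed walk on $\mathbb{Z}/\ell$ of length $\sum_p d_p = q\ell$ (where $q:=\sum_p d_p/\ell$), which covers each forward arc of $C$ exactly $q$ times. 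Multiplying the $k$ minimality bounds and invoking the $f$-violating property of $C$, namely $\prod_{(u,v)\in F(C)}|a_{uv}|>f(\ell)$ over its forward arcs $F(C)$, yields
\[
f(\ell)^q \;<\; \prod_{p=1}^k f(d_p).
\]

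The main obstacle is the final purely combinatorial step: ruling out this strict inequality by proving $f(\ell)^q \ge \prod_p f(d_p)$ whenever $d_p\in[1,\ell-1]$ and $\sum_p d_p = q\ell$ (the constraints also force $k\ge\lceil q\ell/(\ell-1)\rceil > q$). Substituting $f(n)=2(n!)^3$, this reduces to $2^{q-k}(\ell!)^{3q}\ge\prod_p (d_p!)^3$, which I will verify using the Schur-convexity of $x\mapsto\log(x!)$: the right-hand side is maximized over admissible $(d_p)$ at extremal configurations with entries in $\{1,\ell-1\}$, and a direct elementary calculation shows that the inequality holds at these extremes with substantial slack. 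Combined with the strict inequality above this produces the desired contradiction, so $S\Delta C\in\mathcal I$.
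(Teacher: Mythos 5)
Your proposal is correct, but it reaches the contradiction by a genuinely different route than the paper. Both arguments start identically: assuming $S\Delta C\notin\mathcal{I}$, Schrijver's unique-matching exchange theorem yields a second perfect matching $M'\neq M$ of backward arcs on $V(C)$. The paper then forms the multigraph consisting of every forward arc of $C$ twice together with $M\cup M'$, decomposes it into directed circuits (it is Eulerian), and uses an averaging argument with the normalized weights $w_\ell$: the circuits' total weight is $2w_\ell(C)<0$, so some circuit on a proper vertex subset has negative $w_\ell$-weight, and the monotonicity of $i\mapsto \log f(i)/i$ converts that into an $f$-violating sub-cycle, contradicting minimality. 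You instead splice each chord of $M'$ (restricted to one nontrivial cycle of $\sigma=M'\circ M^{-1}$) individually into $C$, obtaining sub-cycles $C_p$ with $V(C_p)\subsetneq V(C)$ and half-lengths $d_p$; multiplying the non-violation bounds $\prod_{F_p}|a_{uv}|\le f(d_p)$ and using the (correct) winding-number observation that the intervals $F_p$ cover each forward arc of $C$ exactly $q$ times gives $f(\ell)^q<\prod_p f(d_p)$, which you then refute numerically. The one step you leave as a sketch—the inequality $f(\ell)^q\ge\prod_p f(d_p)$ for $d_p\in[1,\ell-1]$ with $\sum_p d_p=q\ell$—is true, but your proposed verification via Schur-convexity and extremal configurations is heavier than needed: since $\log f(i)/i$ is nondecreasing (the same fact the paper exploits), one has $f(d_p)\le f(\ell)^{d_p/\ell}$ for each $p$, and the inequality follows in one line. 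The trade-off: the paper's Eulerian decomposition needs only the existence of a single bad sub-circuit and generalizes verbatim to the augmented graph in the rank-$r<d$ appendix, while your argument is more explicit about which chords cause trouble and avoids the circuit-decomposition machinery, at the cost of an extra (though easy) numerical inequality about $f$.
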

\begin{proof}
For clarity, let $V(C)$ denote the vertex set of $C$. Let $T := S \Delta V(C)$ and let $|C| = 2\ell$. Lets consider the graph $G(S)$ with weights $w_\ell$, and define $w_\ell(D) := \sum_{e\in D} w_\ell(e)$ for any cycle $D$. Since $C$ is an $f$-violating cycle, $w_\ell(C) = w_0(C) + \log(f(\ell)) < 0$.

Let the set of backward arcs in $C$ be $N_1$, and the set of forward arcs be $N_2$.
For the sake of contradiction, assume that $T \notin \mathcal{I}$. Then, there exists a matching $N_1'$ on $V(C)$ consisting of only backward arcs such that $N_1 \neq N_1'$ (Chapter 39, Theorem 39.13, ~\cite{schrijver2003combinatorial}). Let $A$ be a multiset of arcs consisting of all arcs in $N_2$ twice and all arcs $N_1$ and $N_1'$ (with arcs in $N_1 \cap N_1'$ appearing twice). Consider the directed
graph $D = (V(C), A)$. Since $N_1 \neq N_1'$, $D$ contains
a directed circuit $C_1$ with $V(C_1) \subsetneq V(C)$. Every vertex in $V(C)$ has exactly two in-edges and two out-edges in $A$. Therefore, $D$ is Eulerian, and we can decompose $A$ into directed circuits $C_1, \ldots, C_k$. Since only arcs in $N_2$ have non-zero weights, we have $\sum_{i=1}^k w_\ell(C_i) = 2w_\ell(C)$.

Because $V(C_1) \subsetneq V(C)$, at most one cycle $C_j$ can have $V(C_j) = V(C)$. If for some $j$, $V(C) = V(C_j)$, then $w_\ell(C_j) = w_\ell(C)$ as $C_j$ must contain every edge in $N_2$. So, $\sum_{i \neq j} w_\ell(C_i) = w_\ell(C) < 0$ and there exists a cycle $C_i$ such that $V(C_i) \subsetneq V(C)$ and $w_\ell(C_i) < 0$. Otherwise $V(C_j) \subsetneq V(C')$ for all $j$ and  $\sum_{i} w_\ell(C_i) = 2w_\ell(C) < 0$. Again, there exists a cycle $C_i$ such that $V(C_i) \subsetneq V(C)$ and $w_\ell(C_i) < 0$.

Let $C'$ be the directed cycle such that $V(C') \subsetneq V(C)$ and $w_\ell(C') \leq w_\ell(C) \leq 0$.  Define $y :=|C'|/2$. Thus $w_\ell(C') = y\cdot \log(f(\ell))/\ell + w_0(C') < 0$. Since $y < \ell$, $\log(f(y))/y \leq \log(f(\ell))/\ell$.  Therefore $w_0(C') \leq -y\cdot \log(f(\ell))/\ell  \leq -\log(f(y))$.
So $C'$ is an $f$-violating cycle with $V(C') \subset V(C)$, which contradicts the fact that $C$ is a minimal $f$-violating cycle.
\end{proof}

\section{Acknowledgements}

Adam Brown and Mohit Singh were supported in part by NSF CCF-2106444 and NSF CCF-1910423. Adam Brown was also supported by an ACO-ARC Fellowship. Aditi Laddha was supported in part by a Microsoft fellowship and NSF awards CCF-2007443 and CCF-2134105. Madhusudhan Pittu is supported in part by NSF awards CCF-1955785 and CCF-2006953. Prasad Tetali was supported in part by the NSF grant DMS-2151283.

\bibliographystyle{alpha}
\bibliography{references}

\newcommand{\etalchar}[1]{$^{#1}$}
\begin{thebibliography}{AMGV18}

\bibitem[AG17]{anari2017generalization}
Nima Anari and Shayan~Oveis Gharan.
\newblock A generalization of permanent inequalities and applications in
  counting and optimization.
\newblock In {\em Proceedings of the 49th Annual ACM SIGACT Symposium on Theory
  of Computing}, pages 384--396. ACM, 2017.

\bibitem[AGSS16]{anari2016nash}
Nima Anari, Shayan~Oveis Gharan, Amin Saberi, and Mohit Singh.
\newblock Nash social welfare, matrix permanent, and stable polynomials.
\newblock {\em In Proceedings of Conference on Innovations in Theoretical
  Computer Science}, 2016.

\bibitem[AGV18]{AnariGV18}
Nima Anari, Shayan~Oveis Gharan, and Cynthia Vinzant.
\newblock Log-concave polynomials, entropy, and a deterministic approximation
  algorithm for counting bases of matroids.
\newblock In {\em 2018 IEEE 59th Annual Symposium on Foundations of Computer
  Science (FOCS)}, pages 35--46. IEEE, 2018.

\bibitem[ALGV19]{AnariLGV19}
Nima Anari, Kuikui Liu, Shayan~Oveis Gharan, and Cynthia Vinzant.
\newblock Log-concave polynomials ii: high-dimensional walks and an fpras for
  counting bases of a matroid.
\newblock In {\em Proceedings of the 51st Annual ACM SIGACT Symposium on Theory
  of Computing}, pages 1--12. ACM, 2019.

\bibitem[ALSW17]{allen2017near}
Zeyuan Allen{-}Zhu, Yuanzhi Li, Aarti Singh, and Yining Wang.
\newblock Near-optimal discrete optimization for experimental design: A regret
  minimization approach.
\newblock {\em arXiv preprint arXiv:1711.05174}, 2017.

\bibitem[AMGV18]{anari2018nash}
Nima Anari, Tung Mai, Shayan~Oveis Gharan, and Vijay~V Vazirani.
\newblock Nash social welfare for indivisible items under separable,
  piecewise-linear concave utilities.
\newblock In {\em Proceedings of the Twenty-Ninth Annual ACM-SIAM Symposium on
  Discrete Algorithms}, pages 2274--2290. SIAM, 2018.

\bibitem[BKV18a]{barman2018finding}
Siddharth Barman, Sanath~Kumar Krishnamurthy, and Rohit Vaish.
\newblock Finding fair and efficient allocations.
\newblock In {\em Proceedings of the 2018 ACM Conference on Economics and
  Computation}, pages 557--574. ACM, 2018.

\bibitem[BKV18b]{barman2018greedy}
Siddharth Barman, Sanath~Kumar Krishnamurthy, and Rohit Vaish.
\newblock Greedy algorithms for maximizing nash social welfare.
\newblock In {\em Proceedings of the 17th International Conference on
  Autonomous Agents and MultiAgent Systems}, pages 7--13. International
  Foundation for Autonomous Agents and Multiagent Systems, 2018.

\bibitem[Bre90]{brenti1990unimodal}
Francesco Brenti.
\newblock Unimodal polynomials arising from symmetric functions.
\newblock {\em Proceedings of the American Mathematical Society},
  108(4):1133--1141, 1990.

\bibitem[CDG{\etalchar{+}}17]{cole2017convex}
Richard Cole, Nikhil Devanur, Vasilis Gkatzelis, Kamal Jain, Tung Mai, Vijay~V
  Vazirani, and Sadra Yazdanbod.
\newblock Convex program duality, fisher markets, and nash social welfare.
\newblock In {\em Proceedings of the 2017 ACM Conference on Economics and
  Computation}, pages 459--460, 2017.

\bibitem[CG15]{cole2015approximating}
Richard Cole and Vasilis Gkatzelis.
\newblock Approximating the nash social welfare with indivisible items.
\newblock In {\em Proceedings of the forty-seventh annual ACM symposium on
  Theory of computing}, pages 371--380, 2015.

\bibitem[CKM{\etalchar{+}}16]{moulin2}
Ioannis Caragiannis, David Kurokawa, Herv{\'e} Moulin, Ariel~D. Procaccia,
  Nisarg Shah, and Junxing Wang.
\newblock The unreasonable fairness of maximum nash welfare.
\newblock In {\em Proceedings of the 2016 ACM Conference on Economics and
  Computation}, EC '16, pages 305--322, New York, NY, USA, 2016. ACM.

\bibitem[Com74]{comtet1974advanced}
Louis Comtet.
\newblock Advanced combinatorics, enlarged ed., d, 1974.

\bibitem[ESV17]{Ebrahimi17}
Javad~B Ebrahimi, Damian Straszak, and Nisheeth~K Vishnoi.
\newblock Subdeterminant maximization via nonconvex relaxations and
  anti-concentration.
\newblock In {\em 2017 IEEE 58th Annual Symposium on Foundations of Computer
  Science (FOCS)}, pages 1020--1031. Ieee, 2017.

\bibitem[GHM18]{garg2018approximating}
Jugal Garg, Martin Hoefer, and Kurt Mehlhorn.
\newblock Approximating the nash social welfare with budget-additive
  valuations.
\newblock In {\em Proceedings of the Twenty-Ninth Annual ACM-SIAM Symposium on
  Discrete Algorithms}, pages 2326--2340. SIAM, 2018.

\bibitem[GHV21]{garg2021approximating}
Jugal Garg, Edin Husi{\'c}, and L{\'a}szl{\'o}~A V{\'e}gh.
\newblock Approximating nash social welfare under rado valuations.
\newblock In {\em Proceedings of the 53rd Annual ACM SIGACT Symposium on Theory
  of Computing}, pages 1412--1425, 2021.

\bibitem[Kha96]{Khachiyan1996}
Leonid~G Khachiyan.
\newblock Rounding of polytopes in the real number model of computation.
\newblock {\em Mathematics of Operations Research}, 21(2):307--320, 1996.

\bibitem[KT12]{kulesza2012determinantal}
Alex Kulesza and Ben Taskar.
\newblock Determinantal point processes for machine learning.
\newblock {\em Foundations and Trends{\textregistered} in Machine Learning},
  5(2--3):123--286, 2012.

\bibitem[LV22]{li2022constant}
Wenzheng Li and Jan Vondr{\'a}k.
\newblock A constant-factor approximation algorithm for nash social welfare
  with submodular valuations.
\newblock In {\em 2021 IEEE 62nd Annual Symposium on Foundations of Computer
  Science (FOCS)}, pages 25--36. IEEE, 2022.

\bibitem[LZ21]{lau2021local}
Lap~Chi Lau and Hong Zhou.
\newblock A local search framework for experimental design.
\newblock In {\em Proceedings of the 2021 ACM-SIAM Symposium on Discrete
  Algorithms (SODA)}, pages 1039--1058. SIAM, 2021.

\bibitem[MNST20]{madan2020maximizing}
Vivek Madan, Aleksandar Nikolov, Mohit Singh, and Uthaipon Tantipongpipat.
\newblock Maximizing determinants under matroid constraints.
\newblock In {\em 2020 IEEE 61st Annual Symposium on Foundations of Computer
  Science (FOCS)}, pages 565--576. IEEE, 2020.

\bibitem[MSS15]{marcus2015interlacing}
Adam~W Marcus, Daniel~A Spielman, and Nikhil Srivastava.
\newblock Interlacing families ii: Mixed characteristic polynomials and the
  kadison—singer problem.
\newblock {\em Annals of Mathematics}, pages 327--350, 2015.

\bibitem[MSTX19]{MadanSTU19}
Vivek Madan, Mohit Singh, Uthaipon Tantipongpipat, and Weijun Xie.
\newblock Combinatorial algorithms for optimal design.
\newblock In {\em Conference on Learning Theory}, pages 2210--2258, 2019.

\bibitem[Nik15a]{Nikolov2015}
Aleksandar Nikolov.
\newblock Randomized rounding for the largest simplex problem.
\newblock In {\em Proceedings of the Forty-Seventh Annual ACM on Symposium on
  Theory of Computing}, pages 861--870. ACM, 2015.

\bibitem[Nik15b]{Nikolov15}
Aleksandar Nikolov.
\newblock Randomized rounding for the largest simplex problem.
\newblock In {\em Proceedings of the Forty-Seventh Annual ACM on Symposium on
  Theory of Computing}, pages 861--870. ACM, 2015.

\bibitem[NS16]{NikolovS16}
Aleksandar Nikolov and Mohit Singh.
\newblock Maximizing determinants under partition constraints.
\newblock In {\em ACM symposium on Theory of computing}, pages 192--201, 2016.

\bibitem[NST19a]{nikolov2019proportional}
Aleksandar Nikolov, Mohit Singh, and Uthaipon~Tao Tantipongpipat.
\newblock Proportional volume sampling and approximation algorithms for
  {A}-optimal design.
\newblock In {\em Proceedings of the Thirtieth Annual ACM-SIAM Symposium on
  Discrete Algorithms}, pages 1369--1386. SIAM, 2019.

\bibitem[NST19b]{nikolov2018proportional}
Aleksandar Nikolov, Mohit Singh, and Uthaipon~Tao Tantipongpipat.
\newblock Proportional volume sampling and approximation algorithms for
  a-optimal design.
\newblock {\em Proceedings of SODA 2019}, 2019.

\bibitem[Puk06]{pukelsheim2006optimal}
Friedrich Pukelsheim.
\newblock {\em Optimal design of experiments}.
\newblock SIAM, 2006.

\bibitem[Sch00]{schrijver2000linear}
Alexander Schrijver.
\newblock {\em Theory of Linear and Integer Programming}.
\newblock Wiley-Interscience, 2000.

\bibitem[Sch03]{schrijver2003combinatorial}
Alexander Schrijver.
\newblock {\em Combinatorial optimization: polyhedra and efficiency},
  volume~24.
\newblock Springer Science \& Business Media, 2003.

\bibitem[SEFM15]{Summa2015}
Marco~Di Summa, Friedrich Eisenbrand, Yuri Faenza, and Carsten Moldenhauer.
\newblock On largest volume simplices and sub-determinants.
\newblock In {\em Proceedings of the Twenty-Sixth Annual ACM-SIAM Symposium on
  Discrete Algorithms}, pages 315--323. Society for Industrial and Applied
  Mathematics, 2015.

\bibitem[SX18]{SinghX18}
Mohit {Singh} and Weijun {Xie}.
\newblock Approximate positive correlated distributions and approximation
  algorithms for {D-optimal} design.
\newblock {\em In Proceedings of SODA}, 2018.

\end{thebibliography}

\appendix

%%%%%%%%%%%%%%%%%%%%%%%%%%%%%%%%%%%%%%%%%%%%%%%%%%%%%%%%%
\section{Omitted Proofs}
\label{sec:ommited_proof}

\begin{proof}[of Lemma \ref{lem:weight_w0}]
	Recall the statement of the Lemma: Let $S$ be a solution with $\vol(S)>0$ and $u\not\in S$. Then for any $v\in S$, we have $w_0(u,v)=-\log \frac{\vol(S+u-v)}{\vol(S)}$.\\
	 Let $S = \{v_1,\ldots, v_d\}$ so that $v = v_1$ and write $u = \sum_{i=1}^d a_i v_i$. We can also write $v = v^\perp + \sum_{i=2}^d b_i v_i$ where $v^\perp$ is orthogonal to $S\del \{v\}$. Then $u = a_1v^\perp + \sum_{i=2}^d (a_1b_i + a_i)v_i$. For $X \subseteq \mathbb{R}^d$ with $|X| = k \leq d$, let $\vol(X)$ denote the $k$-dimensional volume of the parallelepiped spanned by $X$. Then
	 \[ \vol(S) = \vol(S-v) \cdot \|v^\perp\|,\]
	 while
	 \[\vol(S+u-v) = \vol(S-v) \cdot |a_1|\|v^\perp\|,\]
	 since the change in volume from adding a single new vector is proportional to the length of the component of that vector which is orthogonal to our current set. Thus
	 \[-\log\frac{\vol(S+u-v)}{\vol(S)} = -\log\frac{\vol(S-v) \cdot |a_{uv}|\|v^\perp\|}{\vol(S-v) \cdot \|v^\perp\|} = -\log|a_{uv}|.\]
\end{proof}
\begin{proof}[of Observation \ref{obs:fcycle-weight}]
	Recall the statement of the Observation: If $C$ is a $f$-violating cycle then $\prod_{(u, v)\in C: u\in L, v\in R} |a_{uv}| > 2(|C|/2!)^3$.\\
	If $C$ is $f$-violating then $\ell(C) < -\log f(|C|/2)$, where $\ell(C)$ is the sum of the $w_0$ edge weights in $C$, and $f(|C|/2) = 2((|C|/2)!)^3$. Note that $|C\cap R| = |C\cap L| = |C|/2$, so $f(|C/2|) = 2(|C\cap R|!)^3$. By expanding $\ell(C)$ we see that
	\begin{align*}
		\ell(C)
		&= \sum_{(u, v)\in C: u\in L, v\in R} w_0(u,v)\\
		&= -\log\left(\prod_{(u, v)\in C: u\in L, v\in R} |a_{uv}|\right).
	\end{align*}
	Since $\ell(C) < -\log f(|C|/2)$, we can take the exponential to remove the logarithms and attain the desired inequality.
\end{proof}

\begin{proof}[of Lemma \ref{lem:minf-alg}]
	Recall the statement of the Lemma: Algorithm~\ref{alg:minf} finds the minimal $f$-violating cycle in $G(S)$, if one exists.\\
	In the $i$th iteration of Algorithm~\ref{alg:minf} we determine if there is a negative cycle in $G(S)$ with weights $w_i$ and $2i$ hops, as follows. For each vertex of $G(S)$, we start an instance of Bellman-Ford (See Chapter 8, Section 8.3, ~\cite{schrijver2003combinatorial}) with that vertex as the root, and proceed for $2i$ iterations. For source $u$, after $2i$ iterations, we check whether the distance from $u$ to $u$ is negative. If so, we have found a negative cycle with at most $2i$ hops. Note that for weights $w_i$, any negative cycle with at most $2i$ hops is an $f$-Violating cycle. Since the $(i-1)$-th iteration of the algorithm ensured that there are no $f$-Violating cycles with at most $2(i-1)$ hops, a negative cycle in the $i$-th iteration (if any) must have exactly $2i$ hops.
	
	Suppose there is an $f$-violating cycle $C$ in $G(S)$, so that $\ell = |C|/2$. Then, with weight $w_\ell$, the total weight of the cycle $C$ is
	\[ w_\ell(C) = \sum_{(u,v) \in C} w_\ell(u,v) = \sum_{(u,v) \in C} \log(f(\ell))/\ell + w_0(u,v) = \log f(\ell) + w_0(C). \]
	Since $C$ is $f$-violating we know that $\log f(\ell) < - w_0(C)$, so the above calculation shows that $C$ has negative total weight with weights $w_\ell$.  This guarantees that Algorithm \ref{alg:minf} will return an $f$-violating cycle whenever one exists.
	
	Now suppose that $C$ is the cycle returned by Algorithm $\ref{alg:minf}$ and we must show that $C$ is minimal $f$-violating. Let $C'$ be another cycle such that $V(C')\subset V(C)$. Then $C'$ has fewer hops than $C$, but it was not returned in iteration $|C'|/2$, so we know that $C'$ must not be $f$-violating. Thus $C$ is indeed minimal.
\end{proof}

\begin{proof}[of Lemma \ref{lem:vol-determinant}]
    Recall the statement of the Lemma \ref{lem:vol-determinant}: Let $S$ be a basis, let $X$ and $Y$ be sets with $|X| = |Y| = \ell$ and $Y\subseteq S$. Let $A$ be the $\ell \times d$ matrix of coefficients so that $X = SA^\top $, and let $A_C$ be the $\ell\times \ell$ submatrix of only the coefficients corresponding to columns in $Y$. If $T = (S \cup X)\del Y$ then $\vol(T)^2=\vol(S)^2\cdot \det(A_CA_C^\top )$.
    
    We will abuse notation slightly to let $S,X,Y$ also denote the matrices with columns from their respective sets. Order the columns of $S$ so that $Y$ makes up the first $\ell$ columns of $S$. Let $A'$ be the $\ell\times (d-\ell)$ submatrix of $A$ consisting of the remaining columns not already in $A_C$. Then
    \[ T = S \begin{bmatrix} A_C & A' \\ 0 & I_{d-\ell} \end{bmatrix}^\top, \]
    which implies that
    \[ \det(T) = \det(S) \cdot \det(A_C). \]
\end{proof}

%%%%%%%%%%%%%%%%%%%%%%%%%%%%%%%%%%%%%%%%%%%%%%%%%%%%
\textbf{Bounds on $\det(B_2)$, $\det(B_3)$, $\det(B_4)$ mentioned in Proof of Claim ~\ref{claim:mat_det}}: 

For $\ell = 2$, $b_{1,1}(1) = 1$,  $|b_{1,2}(1)| \leq 0.125$,  $|b_{2,1}(1)| \leq 2$,  and $|b_{2,2}(2)| \geq 0.75$.
\begin{equation*}
    \det(B_2) \geq \prod_{i=1}^2 b_{i,i}(i) \geq 0.75.
\end{equation*}

The bounds on final values of $B_3$ are:
\bgroup
\def\arraystretch{1.5}%
\begin{table}[H]
\centering
\begin{tabular}{ |c|c|c| }
\hline
$b_{1,1}(1) =  1$ & $ |b_{1,2}(1) | \leq 0.0371$ &  $|b_{1,3}(1) | \leq 0.00463$  \\
\hline
$ |b_{2,1}(1) | \leq 2.0$ & $b_{2,2}(2) \geq 0.92$ & $|b_{2,3}(2) | \leq 0.0463$ \\
\hline
$ |b_{3,1}(1) | \leq 16.0$ & $|b_{3,2}(2)| \leq 2.5926$  & $ b_{3,3}(3)  \geq 0.79$ \\
\hline
\end{tabular}
\begin{equation*}
    \det(B_3) \geq \prod_{i=1}^3 b_{i,i}(i) \geq 0.73.
\end{equation*}
\end{table}

The bounds on final values of $B_4$ are:

\begin{table}[h]
\centering
\begin{tabular}{ |c|c|c|c| }
\hline
$b_{1,1}(1) =  1$ & $ |b_{1,2}(1) | \leq 0.015625 $ &  $|b_{1,3}(1) | \leq 0.00057871$ &  $|b_{1,4}(1) | \leq 0.000073$ \\
\hline
$ |b_{2,1}(1) | \leq 2.0$ & $b_{2,2}(2) \geq 0.96875$ & $|b_{2,3}(2) | \leq 0.01678241$ & $|b_{2,4}(2) | \leq 0.00072338$\\
\hline
$ |b_{3,1}(1) | \leq 16.0$ & $|b_{3,2}(2)| \leq 2.25$  & $ b_{3,3}(3)  \geq 0.95$ & $| b_{3,4}(3)|  \leq 0.018463$ \\
\hline
$ |b_{4,1}(1) | \leq 432.0$ & $|b_{4,2}(2)| \leq 22.75$  & $ |b_{4,3}(3)|  \leq 2.645$ & $ b_{4,4}(4)  \geq 0.9$ \\
\hline
\end{tabular}
\end{table}
\begin{equation*}
    \det(B_4) \geq \prod_{i=1}^4 b_{i,i}(i) \geq 0.83.
\end{equation*}

\section{Rank $r \leq d$}
\label{sec:rlessd}
In this section, we prove Theorem \ref{thm:main2}. Consider a matroid $\mathcal{M} = ([n], \mathcal{I})$ with rank $r\leq d$. Starting with a basis $S$ with non-zero volume, we will use a slight modification of Algorithm \ref{alg:exch} to iteratively find a basis with strictly larger volume. However since the set $S$ is not full dimensional, our edge weight functions will be different. 

Let $S = \{v_1, v_2, \ldots, v_r\}$ be a basis of $\mathcal{M}$ with $\vol(S) > 0$. We can write any vector $u_i$ in $\mathcal{M}$ as
 \begin{equation*}
 u_i = \sum_{j=1}^r a_{i,j} v_j + u_i^\perp,
 \end{equation*}
 where $u_i^\perp$ is orthogonal to $\Span(S)$.

The change in volume on replacing some $v \in S$ by $u \notin S$ is given by 
\begin{equation}
    \frac{\vol(S - u + v)}{\vol(S)} = \sqrt{a_{uv}^2 + \frac{\norm{u^\perp}^2}{\norm{v^\perp}^2}},  \label{eq:cv}
\end{equation}
where $v^\perp$ is the component of $v$ orthogonal to $\Span(S-v)$. The two terms in equation~\eqref{eq:cv} have geometric meanings. Let us decompose $u$ into $u^\perp + u^\parallel$, where $u^\perp$ is the component of $u$ orthogonal to $\Span(S)$. Then $|a_{uv}|$ is exactly the change in the volume if we project $u$ to $\Span(S)$ before replacing $v$, i.e.,  $|a_{uv}| = \frac{\vol(S-v+u^\parallel)}{\vol(S)}$, and $\frac{\norm{u^\perp}}{\norm{v^\perp}}$ is the change in the volume if we project $u$ orthogonal to $\Span(S)$ before replacing $v$, i.e., $\frac{\norm{u^\perp}}{\norm{v^\perp}} = \frac{\vol(S-v+u^\perp)}{\vol(S)}$. So, we augment the exchange graph to reflect this.

Like Lemma \ref{thm:existence}, when $\sym_r(S) < \sym_r(OPT) \cdot r^{-\Omega(r)}$, we can find an $\tilde{f}$-violating cycle (for an appropriate function $\tilde{f}$) in the augmented exchange graph.

However unlike Lemma \ref{thm:exch}, the change in the objective induced by a cycle $C$ in the augmented graph is not a simple function of the weights of chords and arcs of $C$. To get around this issue, we use the geometric relation between $\sym_r$ and $\vol$, specifically the subadditivity of $\vol$ to relate $\sym_r$ to the chord and arc weights of $C$.

We define the exchange graph $G(S)$ exactly as Definition 1 with $w_0(u_i, v_j) = -\log(|a_{i,j}|)$. Our approach to find a basis with larger volume is to first try to exchange on an $f$-violating cycle in $G(S)$. Like  Algorithm 1, exchanging on an $f$-violating cycle implies increase in volume. Unlike Algorithm 1, failure to find an $f$-violating cycle does not imply that the volume of the current solution is close to optimal. So, we move to Stage 2, where we work with an augmented version of the exchange graph defined below.

We decompose every vector $u_i$ in $\mathcal{M}$ as
 \begin{equation*}
 u_i = u_i^\parallel + u_i^\perp,
 \end{equation*}
where $u_i^\parallel \in \Span(S)$ and $u_i^\perp$ is orthogonal to $\Span(S)$.

In the augmented exchange graph $\widetilde{G}(S)$, for every vector $u_i \in \mathcal{M}$, we create two vertices $u_i^\parallel$ (called a parallel vertex) and  $u_i^\perp$ (called a perpendicular vertex) in the left-hand side.

\begin{definition} [Augmented Exchange Graph] For a subset of vectors $S = \{v_1, v_2, \ldots, v_r\}$, we define the augmented exchange graph of $S$, denoted by $\widetilde{G}(S)$ as a bipartite graph, where the right-hand side consists of vectors in $S$, i.e., $R = \{v_1, v_2, \ldots, v_r\}$ and the left-hand side consists of all the vectors $L = \bigcup_{u \in U\del S } \{u_i^\perp , u_i^\parallel\}$.  For each $v_i \in R$, if $S - v_i + u \in \mathcal{I}$, then $v_i$ has an edge to $u^\perp$ and an edge to $u^\parallel$.  The vertices on the left-hand side have forward edges to every vertex in $S$ ( See figure \Cref{fig:augmented_graph}).
	\begin{figure}
		\centering
		\includegraphics[page=7,width=0.6\textwidth]{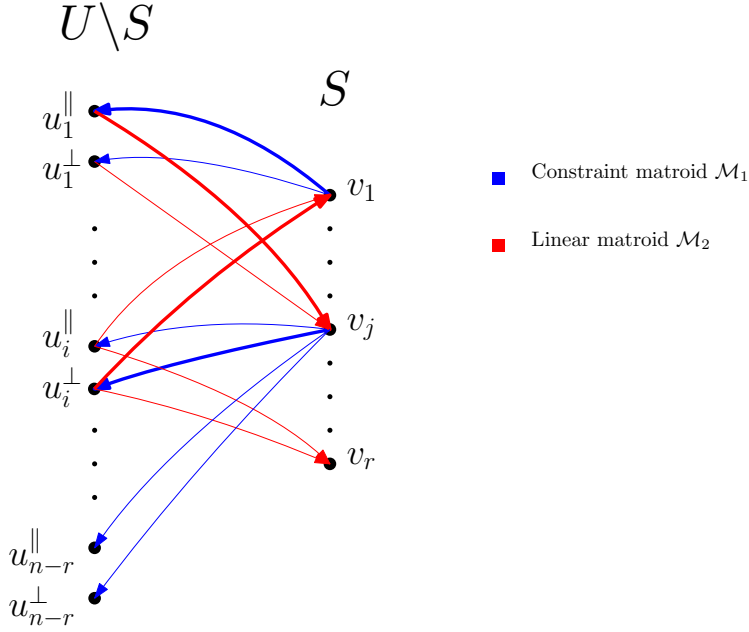}
		\caption{The augmented exchange graph $\widetilde{G}(S)$(An exchange cycle is shown in bold edges)}
		\label{fig:augmented_graph}
	\end{figure}
	
\end{definition}

Each vector $v_j \in S$ can be decomposed as $v_j = \sum_{i \neq j} \alpha_{j,i} v_i + v_j^\perp $, where $v_j^\perp $ is orthogonal to the span of $S\backslash \{v_j\}$. We will call $v_j^\perp $ the orthogonal component of $v_j$ and use it to define edge weights in $\widetilde{G}(S)$.

\begin{definition}[Weight functions on the Augmented Exchange graph]
All the arcs from some vertex in $R$ to some vertex in $L$, have weight $0$. The weights of the forward arcs are defined as
\begin{itemize}
\item for every parallel vertex $u_i^\parallel \in L$, the weight of the arc $u_i^\parallel \rightarrow v_j$ is 
\[\widetilde{w}_0(u_i^\parallel \rightarrow v_j) := -\log(|a_{i, j}|),\]
\item for every perpendicular vertex $u_i^\perp \in L$, the weight of the arc $u_i^\perp \rightarrow v_j$ is 
\[\widetilde{w}_0(u_i^\perp \rightarrow v_j) := -\log\left(\frac{\norm{u_i^\perp}}{\norm{v_j^\perp}}\right).\]
\end{itemize}

We define a family of weight functions on the exchange graph. To define these weights we use the new function $\Tilde{f}(i)=(i!)^{11}$ if $i\geq 2$ and $\tilde{f}(1)=2$. Now we define the weight function $\widetilde{w}_{\ell}$ analogously to $w_{\ell}$, i.e., all backward arcs still have weight $0$ but every forward edge $(u, v)$ has weight 
\[\widetilde{w}_{\ell}(u, v) := \frac{\log(\tilde{f}(\ell))}{\ell} + \widetilde{w}_0(u , v).\]
\end{definition}

\begin{observation}\label{obs:parallel-matroid}
    For a current solution $S$, let $\widetilde{\mathcal{M}}(S)$ be the matroid obtained from $\mathcal{M}$ by adding an element parallel to every $u \in U\del S$, and labelling the pair $u^\perp$, $u^\parallel$. Then the subgraph of $\widetilde{G}(S)$ induced by edges with finite weight is the matroid exchange graph where $\mathcal{M}_1 = \widetilde{\mathcal{M}}(S)$, and $\mathcal{M}_2$ is the linear matroid on $S \cup \bigcup_{u \in U\backslash S} \{u^\perp , u^\parallel\} $.
    
    By construction, no independent set in $\widetilde{\mathcal{M}}(S)$ contains both $u^\perp$ and $u^\parallel$, for any $u \in U\del S$. Thus, if $\widetilde{\mathcal{I}}$ is an independent set in $\widetilde{\mathcal{M}}(S)$ and $\mathcal{I}$ is obtained from $\widetilde{\mathcal{I}}$ by replacing each instance of $u^\perp$, or $u^\parallel$ with the original element $u \in U\backslash S$, then $\mathcal{I}$ is independent in $\mathcal{M}$.
\end{observation}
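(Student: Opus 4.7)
My plan is to establish the two assertions in the observation in turn: first, the equivalence between the finite-weight subgraph of $\widetilde{G}(S)$ and the matroid intersection exchange graph for the pair $(\mathcal{M}_1, \mathcal{M}_2) = (\widetilde{\mathcal{M}}(S), \mathcal{L})$, where $\mathcal{L}$ denotes the linear matroid on $S \cup \bigcup_{u \in U\del S}\{u^\perp, u^\parallel\}$; second, the compatibility of independence in $\widetilde{\mathcal{M}}(S)$ with independence in $\mathcal{M}$ under the natural map $u^\perp, u^\parallel \mapsto u$.

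For the first assertion I would handle backward and forward arcs separately. A backward arc $v_j \to u^*$ with $u^* \in \{u^\perp, u^\parallel\}$ exists in $\widetilde{G}(S)$ exactly when $S - v_j + u \in \mathcal{I}$, and it carries weight $0$ (hence is finite). Because $u^\perp$ and $u^\parallel$ are parallel to $u$ in $\widetilde{\mathcal{M}}(S)$, this is precisely the $\mathcal{M}_1$ exchange condition $S - v_j + u^* \in \widetilde{\mathcal{I}}$. For a forward arc $u^\parallel \to v_j$, the weight $-\log|a_{i,j}|$ is finite iff $a_{i,j} \neq 0$; since $u^\parallel \in \Span(S)$ admits the unique expansion $u^\parallel = \sum_k a_{i,k} v_k$, the coefficient $a_{i,j}$ is nonzero iff $(S - v_j) \cup \{u^\parallel\}$ is linearly independent—the $\mathcal{L}$ exchange condition. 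For a forward arc $u^\perp \to v_j$, since $u^\perp$ is orthogonal to $\Span(S)$ and $v_j^\perp$ is orthogonal to $\Span(S - v_j)$, the set $(S - v_j) \cup \{u^\perp\}$ is linearly independent iff $u^\perp \neq 0$, iff $\|u^\perp\|/\|v_j^\perp\| > 0$, iff the weight $\widetilde{w}_0(u^\perp \to v_j)$ is finite.

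For the second assertion I would invoke the defining property of parallel elements: any two of $\{u, u^\perp, u^\parallel\}$ form a circuit in $\widetilde{\mathcal{M}}(S)$, so no independent set of $\widetilde{\mathcal{M}}(S)$ contains more than one of them. Thus, given an independent set $\widetilde{\mathcal{I}}$ in $\widetilde{\mathcal{M}}(S)$, substituting $u$ for whichever (at most one) of $u^\perp, u^\parallel$ appears yields a set of the same cardinality whose independence in $\widetilde{\mathcal{M}}(S)$ is preserved by the parallel property. Since $\widetilde{\mathcal{M}}(S)$ restricted to the original ground set $U$ is exactly $\mathcal{M}$, the resulting set is independent in $\mathcal{M}$.

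The only subtle step is the forward-arc analysis for the perpendicular vertices, where one must treat $u^\perp$ as a literal vector in $\mathbb{R}^d$ (its orthogonal projection away from $\Span(S)$) so that the linear-independence test in $\mathcal{L}$ applies directly; however, this is an unpacking of definitions rather than a genuine obstacle. The rest is a routine consistency check between the two structural definitions, with no delicate combinatorial or analytic estimates required.
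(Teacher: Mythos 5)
Your proof is correct and takes essentially the route the paper leaves implicit: the observation is stated without a separate proof, and your arc-by-arc unpacking (finite forward-arc weight $\Leftrightarrow$ linear-matroid exchange condition for $u^\parallel$ via $a_{i,j}\neq 0$ and for $u^\perp$ via $u^\perp\neq 0$; backward arcs $\Leftrightarrow$ exchange in $\widetilde{\mathcal{M}}(S)$ by parallelism; replacing a parallel copy by the original element preserves independence) is exactly the intended "by construction" argument. The only cosmetic point is that after the relabeling $u$ itself is not in the ground set of $\widetilde{\mathcal{M}}(S)$, so statements like "any two of $\{u,u^\perp,u^\parallel\}$ form a circuit" should be read under the natural identification of $u$ with one of its two copies — a bookkeeping matter, not a gap.
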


Similar to Lemma \ref{lem:weight_w0}, a function of $\widetilde{w}_0(u_i^\parallel,v_j)$ and $\widetilde{w}_0(u_i^\perp,v_j)$ measures the change in objective when we replace element $v_j$ by $u_i$.

\begin{lemma}\label{lem:weight_wa0}
Let $S$ be a solution with $\vol(S)>0$ and $u\not\in S$ with $u = u^\parallel + u^\perp$. Then for any $v\in S$, we have $\sqrt{e^{-2\widetilde{w}_0(u^\parallel,v)}+ e^{-2\widetilde{w}_0(u^\perp,v)}}=\frac{\vol\left(S - u + v\right)}{\vol(S)}$.
\end{lemma}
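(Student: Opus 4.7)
The plan is to generalize the geometric decomposition used in the proof of Lemma \ref{lem:weight_w0} by splitting $u$ into its parallel and perpendicular parts relative to $\Span(S)$ and computing the resulting volume ratio directly. Let $v^\perp$ denote the component of $v$ orthogonal to $\Span(S-v)$, so that $\vol(S)=\vol(S-v)\cdot\|v^\perp\|$, exactly as in the proof of Lemma \ref{lem:weight_w0}. Similarly, if $u'$ denotes the component of $u$ orthogonal to $\Span(S-v)$, then $\vol(S-v+u)=\vol(S-v)\cdot\|u'\|$, so the ratio I need to compute is
\[
\frac{\vol(S-v+u)}{\vol(S)}=\frac{\|u'\|}{\|v^\perp\|}.
\]

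First I would decompose $u=u^\parallel+u^\perp$ with $u^\parallel=\sum_{w\in S}a_{uw}w\in\Span(S)$ and $u^\perp$ orthogonal to $\Span(S)$. Splitting off the $v$-term, $u^\parallel=a_{uv}v+\sum_{w\in S\setminus\{v\}}a_{uw}w$. The second sum lies in $\Span(S-v)$ and so contributes nothing to $u'$. The contribution of $a_{uv}v$ to $u'$ is exactly $a_{uv}v^\perp$ (since $v=v^\perp+\text{(piece in }\Span(S-v))$). Finally $u^\perp$ is orthogonal to $\Span(S)\supseteq\Span(S-v)$, so it contributes itself entirely. Hence
\[
u'=a_{uv}\,v^\perp+u^\perp.
\]

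The key geometric observation is that the two summands on the right are orthogonal: $v^\perp\in\Span(S)$ while $u^\perp\perp\Span(S)$. Therefore by the Pythagorean theorem,
\[
\|u'\|^2=a_{uv}^2\,\|v^\perp\|^2+\|u^\perp\|^2,
\]
and dividing by $\|v^\perp\|^2$ gives
\[
\left(\frac{\vol(S-v+u)}{\vol(S)}\right)^2=a_{uv}^2+\frac{\|u^\perp\|^2}{\|v^\perp\|^2}.
\]

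To conclude, I would just unwrap the weight definitions: $\widetilde{w}_0(u^\parallel,v)=-\log|a_{uv}|$ gives $e^{-2\widetilde{w}_0(u^\parallel,v)}=a_{uv}^2$, and $\widetilde{w}_0(u^\perp,v)=-\log(\|u^\perp\|/\|v^\perp\|)$ gives $e^{-2\widetilde{w}_0(u^\perp,v)}=\|u^\perp\|^2/\|v^\perp\|^2$. Taking square roots yields the claimed identity. There is no real obstacle here; the only subtle point is to recognize that the two pieces of $u'$ live in orthogonal subspaces, which is what lets the single-vector-replacement volume formula from Lemma \ref{lem:weight_w0} extend cleanly to the augmented setting and justify the form of equation~\eqref{eq:cv}.
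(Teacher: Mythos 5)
Your proof is correct and matches the paper's intended argument: the paper states Lemma~\ref{lem:weight_wa0} and equation~\eqref{eq:cv} without a written proof, offering only the geometric gloss that $|a_{uv}|$ and $\norm{u^\perp}/\norm{v^\perp}$ are the volume changes for the parallel and perpendicular projections of $u$, and your extension of the Lemma~\ref{lem:weight_w0} computation (writing the component of $u$ orthogonal to $\Span(S-v)$ as $a_{uv}v^\perp + u^\perp$, noting these two summands are orthogonal since $v^\perp\in\Span(S)$ while $u^\perp\perp\Span(S)$, and applying the Pythagorean theorem) is precisely that argument carried out in full. The only caveat is the typo in the statement, where $\vol(S-u+v)$ should read $\vol(S-v+u)$, which you interpreted correctly.
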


Our algorithm works in 2 stages. In the first stage, we try to find an $f$-violating cycle in the exchange graph $G(S)$. If the algorithm finds such a cycle, it exchanges on it. If no $f$-violating cycle is found, we move to the augmented exchange graph $\widetilde{G}(S)$, and search for an $\Tilde{f}$-violating cycle in $\widetilde{G}(S)$. If no such cycle is found in Stage 2, then Lemma \ref{thm:existence_extended} guarantees that $\vol(S) \geq e^{-O(r\log(r))} \cdot \vol(OPT)$.

\newcommand{\per}{\mathrm{per}}
\newcommand{\proj}{\mathbf{P}}
\begin{algorithm}[H]
\caption{Algorithm to find an approximation to $OPT$}
	\label{alg:exchr}
\begin{algorithmic}
\State $S \leftarrow$ basis with $\vol(S) > 0$.

\State Let $f(i) = 2(i!)^3$ and $\Tilde{f}(i) = (i!)^{11}$
\While{There exists an $f$-violating cycle in $G(S)$ or an $\Tilde{f}$-violating cycle in $\widetilde{G}(S)$}
\If{There exists an $f$-violating cycle in $G(S)$}
\State \textbf{Stage 1}:
    \State $C =$ minimal $f$-violating cycle in $G(S)$
    \State $S = S\Delta C$
\Else
\State \textbf{Stage 2}:
    \State $C =$ minimal $\Tilde{f}$-violating cycle in $\widetilde{G}(S)$
    \State $\widetilde{C} = \{ u \in [n]\del S: u^\perp \text{ or } u^\parallel \in C\} \cup \{ v \in S: v \in C\}$
    \State $S = S\Delta \widetilde{C}$
\EndIf
\EndWhile

\State Return $S$

\end{algorithmic}
\end{algorithm}
\begin{lemma}
If Algorithm \ref{alg:exchr} finds an $f$-violating cycle, $C$, in $G(S)$, then $\vol(S \Delta C) \geq 2 \cdot \vol(S)$.
\end{lemma}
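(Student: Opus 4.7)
The plan is to reduce this to (an $r$-dimensional analog of) Lemma~\ref{thm:exch} by projecting onto $\Span(S)$. Write each $u_i \in C \cap L$ as $u_i = u_i^{\parallel} + u_i^{\perp}$ with $u_i^{\parallel} \in \Span(S)$ and $u_i^{\perp} \perp \Span(S)$, and set $X = C \cap L = \{u_1,\dots,u_\ell\}$, $Y = C \cap S$, $T := S \Delta C$, and the ``parallel-only'' counterpart $T^{\parallel} := (S\setminus Y) \cup \{u_1^{\parallel},\dots,u_\ell^{\parallel}\} \subset \Span(S)$. Because $C$ lives in $G(S)$ (and not in the augmented graph $\widetilde{G}(S)$), each arc weight $w_0(u_i,v_j) = -\log|a_{i,j}|$ depends only on the parallel coefficients $u_i^{\parallel} = \sum_j a_{i,j} v_j$.

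First I would apply Lemma~\ref{lem:vol-determinant} \emph{inside the $r$-dimensional subspace} $\Span(S)$, in which $S$ is a basis. Since $X^{\parallel} = S A^{\top}$ for the $\ell \times r$ matrix $A$ of parallel coefficients, the lemma (whose short linear-algebraic proof transfers verbatim to the $r$-dimensional case, giving $\vol(T^{\parallel})^2 = \det(M)^2 \det(S^{\top}S)$ for the obvious block matrix $M$) yields $\vol(T^{\parallel}) = |\det(A_C)|\cdot \vol(S)$, where $A_C$ is the $\ell\times \ell$ submatrix of $A$ with columns indexed by $Y$. The entries of $A_C$ are exactly the $a_{i,j}$ that define the forward-arc weights $w_0$, and the notions of $f$-violating and minimal $f$-violating cycle in $G(S)$ are defined here by the same formulas as in Section~2. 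Hence the chord/sub-cycle minimality analysis from the proof of Lemma~\ref{thm:exch} upper-bounds every off-diagonal entry of $A_C$ in terms of the diagonal entries, and the Gaussian-elimination calculation of Claim~\ref{claim:mat_det} transfers word-for-word (with $d$ replaced by $r$) to give $|\det(A_C)| > 2$. Therefore $\vol(T^{\parallel}) \geq 2\vol(S)$.

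The only genuinely new step is to show that passing from the parallel exchange $T^{\parallel}$ back to the actual exchange $T$ does not decrease volume. Aligning columns, write $T = T^{\parallel} + T^{\perp}$ as $d \times r$ matrices, where $T^{\perp}$ has column $u_i^{\perp}$ in each of the $\ell$ slots corresponding to $X$ and zero in the remaining $r-\ell$ slots (which are unchanged between $T$ and $T^{\parallel}$). Every column of $T^{\parallel}$ lies in $\Span(S)$ while every nonzero column of $T^{\perp}$ is orthogonal to $\Span(S)$, so $(T^{\parallel})^{\top} T^{\perp} = 0$ and
$$T^{\top} T = (T^{\parallel})^{\top} T^{\parallel} + (T^{\perp})^{\top} T^{\perp}.$$
Both summands are PSD, hence $\det(T^{\top} T) \geq \det((T^{\parallel})^{\top} T^{\parallel})$, i.e.\ $\vol(T) \geq \vol(T^{\parallel}) \geq 2\vol(S)$. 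The main obstacle, as in Lemma~\ref{thm:exch}, is the determinant lower bound $|\det(A_C)| > 2$; but since the combinatorial structure of $G(S)$ and the meaning of $w_0$ are unchanged by reducing to $\Span(S)$, that entire argument transfers, and the only new ingredient is the PSD-monotonicity step above.
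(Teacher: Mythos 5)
Your proof is correct and follows essentially the same route as the paper: project the cycle onto $\Span(S)$, show the projected exchange already doubles the volume, and then argue that restoring the orthogonal components cannot decrease it. For the doubling step you apply the subspace version of Lemma~\ref{lem:vol-determinant} and re-run the chord/minimality bounds and Claim~\ref{claim:mat_det} inside $\Span(S)$, which is exactly how the paper uses Lemma~\ref{thm:exch} on the projected cycle $C'$ (and, as you implicitly assume, the cycle returned by Algorithm~\ref{alg:exchr} in Stage~1 is a \emph{minimal} $f$-violating cycle, so the minimality hypothesis is available). The only genuine divergence is the monotonicity step: the paper invokes Lemma~\ref{vol-determinant2}, proved via a Schur-complement computation with respect to the unchanged columns $W$, to get $\vol(S\Delta C)^2 \geq \vol(S)^2\det(A_C^\top A_C) = \vol(S\Delta C')^2$, whereas you prove the needed special case directly from the Gram-matrix identity $T^\top T = (T^{\parallel})^\top T^{\parallel} + (T^{\perp})^\top T^{\perp}$ (valid because $(T^{\parallel})^\top T^{\perp}=0$) together with the fact that adding a PSD matrix cannot decrease the determinant of a PSD matrix. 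Your argument is shorter and fully sufficient for this lemma; what the paper's Lemma~\ref{vol-determinant2} buys in exchange for the extra work is a statement in terms of $\det(A_C^\top A_C)$ that is reused in its general form elsewhere in the $r\leq d$ analysis, but for the statement at hand the two arguments deliver the same inequality and your PSD shortcut is a clean simplification.
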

\begin{proof}
Let $C'$ be the projection of $C$ onto $\Span(S)$. By Lemma~\ref{vol-determinant2}, we know that $\vol(S\Delta C)\geq \vol(S\Delta C')$, and by Lemma~\ref{thm:exch}, we know that $\vol(S\Delta C')\geq 2 \cdot \vol(S)$. Therefore,
\begin{equation*}
    \vol(S\Delta C)\geq \vol(S\Delta C')\geq 2 \cdot \vol(S),
\end{equation*}
which concludes the proof of the Lemma.
\end{proof}

From hereafter we analyze the case when Algorithm \ref{alg:exchr} does not find an $f$-violating cycle in Stage 1, moves on to Stage 2 and finds an $\Tilde{f}$-violating cycle in $\widetilde{G}(S)$.

\subsection{Existence of $\tilde{f}$-violating cycle}
To guarantee we make progress at every iteration, we need to ensure there will always be an $\tilde{f}$-violating cycle, whenever our current volume is far from the optimal volume.
Before we prove the existence of an $\tilde{f}$-violating cycle, we state a couple of useful observations.

For convenience, to specify the volume of a set of vectors $\{v_1, v_2, \ldots, v_r\}$, instead of writing $\vol(\{v_1, v_2, \ldots, v_r\})$, we use $\vol(v_1, v_2, \ldots, v_r)$.
\begin{observation}\label{obs:volume_triangle} For any set of vectors $\{v_{1}^{a},v_{1}^{b},v_{2},\dots,v_{r}\}$,
		\begin{align*}
			\vol(v_{1}^{a}+v_{1}^{b},v_{2},\dots,v_{r}) &\leq \vol(v_{1}^{a},v_{2},\dots,v_{r})+\vol(v_{1}^b,v_{2},\dots,v_{r})\,, \\
			\vol(v_{1}^{a}+v_{1}^{b},v_{2},\dots,v_{r}) &\geq \vol(v_{1}^{a},v_{2},\dots,v_{r})-\vol(v_{1}^b,v_{2},\dots,v_{r})\,.
			\end{align*}
		\end{observation}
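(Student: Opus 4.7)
The plan is to reduce the claim to the triangle inequality (and its reverse form) for the Euclidean norm. Fix the vectors $v_{2},\dots,v_{r}$ and let $\proj$ denote the orthogonal projection onto the orthogonal complement of $\Span(v_{2},\dots,v_{r})$. A standard fact about volumes of parallelepipeds is that, for any vector $x$,
\[
\vol(x,v_{2},\dots,v_{r})=\norm{\proj x}\cdot \vol(v_{2},\dots,v_{r}),
\]
since the only component of $x$ that contributes to the $r$-dimensional volume is the component orthogonal to the span of the remaining vectors.

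The main step is to apply this identity with $x = v_{1}^{a}+v_{1}^{b}$, $x=v_{1}^{a}$, and $x=v_{1}^{b}$ in turn. Using linearity of $\proj$ we obtain
\[
\vol(v_{1}^{a}+v_{1}^{b},v_{2},\dots,v_{r}) = \norm{\proj v_{1}^{a} + \proj v_{1}^{b}}\cdot \vol(v_{2},\dots,v_{r}).
\]
Applying the triangle inequality $\norm{\proj v_{1}^{a}+\proj v_{1}^{b}}\leq \norm{\proj v_{1}^{a}}+\norm{\proj v_{1}^{b}}$ and multiplying by $\vol(v_{2},\dots,v_{r})$ yields the first inequality, while the reverse triangle inequality $\norm{\proj v_{1}^{a}+\proj v_{1}^{b}}\geq \big|\,\norm{\proj v_{1}^{a}}-\norm{\proj v_{1}^{b}}\,\big|\geq \norm{\proj v_{1}^{a}}-\norm{\proj v_{1}^{b}}$ yields the second. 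There is no real obstacle here; the only thing to be a little careful about is the degenerate case where $v_{2},\dots,v_{r}$ are linearly dependent, but then both sides of each inequality vanish, so the bounds hold trivially.
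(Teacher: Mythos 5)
Your proof is correct and follows essentially the same route as the paper: project onto the orthogonal complement of $\Span(v_2,\dots,v_r)$, use $\vol(x,v_2,\dots,v_r)=\norm{\proj x}\cdot\vol(v_2,\dots,v_r)$, and apply the triangle and reverse triangle inequalities. The added remark about the degenerate case is fine but not needed.
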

	\begin{proof}
	Let $P^{\perp}$ be the projection matrix  orthogonal to $\Span(v_{2},\dots,v_{r})$. By triangle inequality, we have that
	\begin{align*}
		    \norm{P^{\perp}(v_{1}^{a}+v_{1}^{b})} &\leq \norm{P^{\perp}v_{1}^{a}}+\norm{P^{\perp}v_{1}^b}, \text{ and }\\
		    \norm{P^{\perp}(v_{1}^{a}+v_{1}^{b})} &\geq \norm{P^{\perp}v_{1}^{a}}-\norm{P^{\perp}v_{1}^b}.
	\end{align*}
Since $\vol(v_{1}^{a},v_{2},\dots,v_{r}) = \norm{P^{\perp}v_{1}^{a}} \cdot \vol(v_{2},\dots,v_{r})$, multiplying both sides by $\vol(v_{2},\dots,v_{r})$ gives us the required inequalities.
		\end{proof}
\begin{observation}\label{obs:log_vol_submodular}
	If $v_{i}^{\perp}$ is the orthogonal projection of $v_{i}$ onto $\Span(S-v_{i})$, then 
	\begin{align*}
    \vol(v_{1},\dots,v_{k}) \cdot \prod_{i=k+1}^{r}\norm{v_{i}^{\perp}} \leq \vol(v_{1},\dots,v_{r})
	\end{align*}
 for any $k\in [r]$.
\end{observation}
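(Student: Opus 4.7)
The plan is to use the standard Gram-Schmidt decomposition of the parallelepiped volume, and then compare the orthogonal heights against the spans with progressively more vectors.

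First, I would establish the identity
\[
\vol(v_{1},\dots,v_{r}) \;=\; \vol(v_{1},\dots,v_{k})\cdot\prod_{i=k+1}^{r} h_{i},
\]
where $h_{i}$ is the length of the component of $v_{i}$ orthogonal to $\Span(v_{1},\dots,v_{i-1})$. This is the iterated base-times-height formula: appending a vector $v_i$ to an independent set scales the volume by exactly the length of its orthogonal projection onto the current span's complement, and applying this from $i=k+1$ up to $i=r$ telescopes to the displayed identity. A one-line induction on $r-k$ suffices.

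Next, I would compare $h_i$ with $\|v_i^{\perp}\|$ (where $v_i^{\perp}$ is the component of $v_i$ orthogonal to $\Span(S-v_i)$, interpreting the statement in line with the paper's earlier usage). For every $i \in \{k+1,\dots,r\}$, the span against which $h_i$ is measured,
\[
\Span(v_{1},\dots,v_{i-1}) \;\subseteq\; \Span(S\setminus\{v_{i}\}) \;=\; \Span(v_{1},\dots,v_{i-1},v_{i+1},\dots,v_{r}),
\]
is contained in the span against which $\|v_i^{\perp}\|$ is measured. Since orthogonal projection onto a larger subspace can only shrink the orthogonal complement, we get $h_{i} \geq \|v_{i}^{\perp}\|$.

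Combining the identity with these inequalities yields
\[
\vol(v_{1},\dots,v_{r}) \;=\; \vol(v_{1},\dots,v_{k})\cdot\prod_{i=k+1}^{r} h_{i} \;\geq\; \vol(v_{1},\dots,v_{k})\cdot\prod_{i=k+1}^{r}\|v_{i}^{\perp}\|,
\]
as desired. There is no real technical obstacle here: the only point that requires minor care is correctly interpreting $v_i^{\perp}$ as the orthogonal component (not the projection itself), matching the notation used throughout the paper, and then the monotonicity of orthogonal heights with respect to increasing subspaces delivers the bound.
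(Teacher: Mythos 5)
Your proposal is correct and takes essentially the same route as the paper: both telescope the volume as a product of orthogonal heights and then bound each height $h_i$ from below by $\norm{v_i^{\perp}}$ using the inclusion $\Span(v_1,\dots,v_{i-1})\subseteq\Span(S\setminus\{v_i\})$. The only cosmetic difference is that the paper invokes this monotonicity as ``submodularity of $\log\vol(\cdot)$'' while you prove the same height comparison directly.
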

   \begin{proof}
   	 Let $S_{i}=\{v_{1},\dots,v_{i}\}$ for $i\in [r]$ and $S_{0}=\emptyset$. Since $\vol(v_{1},\dots,v_{r}) = \prod_{i=1}^{r}\frac{\vol(S_{i})}{\vol(S_{i-1})}$ where $vol(\emptyset)=1$, it suffices to prove that $\frac{\vol(S_{i})}{\vol(S_{i-1})}\geq \norm{v_{i}^{\perp}}=\frac{\vol(S)}{\vol(S\backslash \{v_{i}\})}$ which follows from the submodularity of $\log \vol(\cdot)$ as $S_{i-1}\subseteq S\backslash \{v_{i}\}$.
   	\end{proof}

The following lemma is an extension to \Cref{thm:existence} when the current solution has $r\leq d$ vectors.
\begin{lemma}\label{thm:existence_extended}
For any basis $S \in \mathcal{I}$ with $\vol(S) > 0$, if $\vol(OPT) > \vol(S) \cdot r^{2r} \cdot \Tilde{f}(r)$, then there exists an $\Tilde{f}$-violating cycle in $\widetilde{G}(S)$.
\end{lemma}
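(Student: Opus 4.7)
The plan is to adapt the proof of Lemma~\ref{thm:existence} to the augmented setting, accounting for the fact that elements of $OPT$ can have components both inside and orthogonal to $\Span(S)$.

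First, I would invoke the matroid exchange axiom (Corollary 39.12a of Schrijver) to obtain a perfect matching between $X := OPT \setminus S$ and $Y := S \setminus OPT$ via backward arcs; after relabeling, $X = \{u_1, \ldots, u_k\}$ and $Y = \{v_1, \ldots, v_k\}$ with $(v_i, u_i)$ matched. In $\widetilde{G}(S)$ this supplies backward arcs $(v_i, u_i^\parallel)$ and $(v_i, u_i^\perp)$. Decomposing each $u_i = u_i^\parallel + u_i^\perp$ and applying Observation~\ref{obs:volume_triangle} inductively yields
\[ \vol(OPT) \leq \sum_{T \subseteq [k]} V_T, \]
where $V_T$ is the volume of $\{u_i^\parallel : i \in T\} \cup \{u_i^\perp : i \in [k]\setminus T\} \cup (S \cap OPT)$. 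Pigeonholing gives some $T^*$ with $V_{T^*} \geq \vol(OPT)/2^k$, and since $\Span(S)$ is orthogonal to $\Span(S)^\perp$, the volume factors as $V_{T^*} = V_\parallel \cdot V_\perp$, where $V_\parallel$ is the $(t+r-k)$-dimensional volume of the parallel part ($t := |T^*|$) and $V_\perp$ is the $(k-t)$-dimensional volume of the perpendicular part.

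Next, I would relate each factor to quantities read off $\widetilde{G}(S)$. Expanding $V_\parallel^2$ via Cauchy--Binet in the basis $S$ reduces it to a weighted sum of squared $t\times t$ subdeterminants $|\det(A_{T^*,J'})|^2$ indexed by column subsets $J' \subseteq [k]$ of size $t$. Pigeonholing selects a dominant $J'^*$, and the Leibniz formula yields a bijection $\sigma_\parallel^* : T^* \to J'^*$ for which $\prod_{i \in T^*} |a_{i,\sigma_\parallel^*(i)}|$ is at least a $1/(t! \cdot r^{O(r)})$ fraction of $V_\parallel/\vol(S)$ (the $r^{O(r)}$ absorbs both the Cauchy--Binet pigeonhole loss and the Gram-matrix factors introduced by non-orthonormality of $S$). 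For the perpendicular part, Hadamard's inequality gives $V_\perp \leq \prod_{i \notin T^*} \|u_i^\perp\|$, so this product is at least $V_\perp$; a further rearrangement / pigeonhole over bijections $\pi_\perp : [k] \setminus T^* \to [k] \setminus J'^*$ yields one for which $\prod_{i \notin T^*} \|u_i^\perp\|/\|v_{\pi_\perp(i)}^\perp\|$ is suitably large.

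Combining $\sigma_\parallel^*$ and $\pi_\perp$ gives a single bijection on $[k]$ whose associated product of augmented-arc weights is large. Decomposing this into cycles yields candidate cycles in $\widetilde{G}(S)$; if none is $\tilde{f}$-violating, each cycle $C_j$ contributes at most $\tilde{f}(|C_j|/2) = ((|C_j|/2)!)^{11}$, so the total product is at most $\prod_j ((|C_j|/2)!)^{11} \leq (k!)^{11} \leq \tilde{f}(r)$. Combined with the accumulated combinatorial losses ($2^k$ from subadditivity, the $1/(t! \cdot r^{O(r)})$ from Cauchy--Binet and Leibniz, and analogous losses on the perpendicular side), this is strictly beaten by the assumed ratio $r^{2r} \cdot \tilde{f}(r)$, yielding a contradiction. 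The hardest step is the coupling: unlike Lemma~\ref{thm:existence}, where all coefficients sit in a single $k\times k$ matrix and Leibniz directly produces a permutation, here parallel exchanges contribute $|a_{ij}|$ while perpendicular exchanges contribute $\|u_i^\perp\|/\|v_j^\perp\|$, and these must be assembled into a single permutation respecting the cycle structure of $\widetilde{G}(S)$ while controlling the geometric losses (especially the $\|v_j^\perp\|$ factors, which depend on the choice of pairing) so they fit into the $r^{2r}$ slack.
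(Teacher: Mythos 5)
Your overall skeleton matches the paper's: matching $X$ to $Y$ via backward arcs (Corollary 39.12a), splitting each $u_i$ into parallel and perpendicular parts, expanding $\vol(OPT)$ by subadditivity of volume, pigeonholing, decomposing a resulting bijection into cycles, and deriving a contradiction from $\prod_j \tilde f(|C_j|/2)\le \tilde f(r)$. However, the middle step --- turning the parallel-part volume into arc weights --- has a genuine gap. You factor $V_{T^*}=V_\parallel\cdot V_\perp$ and propose to control $V_\parallel/\vol(S)$ by Cauchy--Binet over $t\times t$ subdeterminants of the coefficient matrix, asserting that ``the Gram-matrix factors introduced by non-orthonormality of $S$'' are absorbed into $r^{O(r)}$. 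That assertion is false in general: when $t<k$ the parallel collection $\{u_i^\parallel:i\in T^*\}\cup Z$ has fewer than $r$ vectors, so $V_\parallel/\vol(S)$ is not a function of coefficient subdeterminants alone; the correction is exactly a product of factors $1/\norm{v_j^\perp}$ over the columns of $S$ that are dropped, and these can be arbitrarily large (take nearly parallel $v_j$'s), so they cannot be hidden in an $r^{O(r)}$ term. The same omission shows up on the perpendicular side: Hadamard gives only $V_\perp\le\prod_{i\notin T^*}\norm{u_i^\perp}$, yet the arc weight you must certify to be large is $\norm{u_i^\perp}/\norm{v_{\pi(i)}^\perp}$, and nothing in your argument produces the $1/\norm{v_{\pi(i)}^\perp}$ denominators or ties them to a specific bijection $\pi_\perp$ compatible with the parallel bijection --- which is precisely the coupling you flag as ``the hardest step'' but do not resolve.

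The paper closes exactly this hole by decomposing each $u_i$ not into two pieces but into $r+1$ pieces, $u_i^\perp$ and the coordinate pieces $a_{i,j}v_j$, before applying Observation~\ref{obs:volume_triangle}. Then every surviving term in the expansion (terms with repeated or $Z$-indexed columns vanish) has a parallel part that is a \emph{scaled sub-basis} of $S$ together with $Z$, so its ratio to $\vol(S)$ equals the product of the scalars $|a_{i,\sigma(i)}|$ times $\vol(\text{sub-basis}\cup Z)/\vol(S)$, and the latter is bounded by $\prod 1/\norm{v_{\sigma(i)}^\perp}$ over the unused columns via log-submodularity (Observation~\ref{obs:log_vol_submodular}). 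This simultaneously supplies the missing denominators and fixes the pairing: the same permutation $\sigma$ that assigns parallel pieces to columns also charges each perpendicular component $u_i^\perp$ to the unused column $v_{\sigma(i)}$, so each term is exactly $\prod_i e^{-\tilde w_0(\cdot)}$ along a single bijection, the total number of terms is $2^\ell\ell!\le r^{2r}$, and the pigeonhole plus cycle decomposition go through. To repair your proof you would essentially have to perform this finer expansion of $u_i^\parallel$ (either up front, as the paper does, or after your $T^*$ pigeonhole), replacing the Cauchy--Binet/Gram-matrix step with the submodularity bound; as written, the claimed $r^{O(r)}$ absorption does not hold and the argument does not reach the stated conclusion.
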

\begin{proof}
Since $S$ and $OPT$ are independent and $|S| = |OPT|$, there exists a
perfect matching between $OPT\backslash S$ and $S \backslash OPT$ using the backward arcs in $G(S)$ (Chapter 39, Corollary 39.12a, ~\cite{schrijver2003combinatorial}). Let $X = OPT \backslash S$, $Y = S \backslash OPT$, and $|X| = |Y| = \ell$. Without loss of generality, let $Y = \{v_1, \ldots, v_\ell\}$ and $X =\{u_1, \ldots, u_\ell\} $ such that $(v_i \rightarrow u_i)$ is an arc in $G(S)$ for all $i \in [\ell]$. Let $Z = OPT \cap S = \{v_{\ell+1}, \ldots, v_r\}$ and let us use $T$ instead of $OPT$ for ease of notation. 
 
From the hypothesis of the lemma, we have
\begin{align}
	r^{2r} \cdot \tilde{f}(r) \leq \frac{\vol(T)}{\vol(S)}&= \frac{\vol(X \cup Z)}{\vol(Y \cup Z)}.
\end{align}
Since $u_i = \sum_{j=1}^r a_{i,j} v_j + u_i^\perp$, we can decompose each vector $u_i$ into a sum of $r+1$ vectors, i.e., $ u_i = \sum_{j=0}^r u_{i}^{(j)}$
where
\begin{itemize}
    \item $u_{i}^{(0)} := u_i^\perp$ for all $i$, and
    \item $u_{i}^{(j)} := a_{i, j} v_j\,$. 
\end{itemize}

Now using Observation \ref{obs:volume_triangle}, we can expand $\vol(T)$ as
\begin{align}
	\frac{\vol(T)}{\vol(S)}&\leq \sum_{i_1, i_2, \ldots, i_{\ell} \in \{0,\ldots, r\}} \frac{\vol(\{u_{1}^{(i_1)}, \, u_{2}^{(i_2)}, \ldots, \,u_{\ell}^{(i_\ell)}\} \cup Z)}{\vol(S)}. 
\end{align}
If $i_j > \ell$ for any $j$, then $u_j^{(i_j)}$ and the vectors in set $Z$ are linearly dependent and therefore the volume is $0$. So we can restrict ourselves to the case when $i_j \in \{0, \ldots, \ell\}$ for all $j \in [\ell]$.

For any permutation $\sigma \in \mathcal{S}_\ell$, define $\mathcal{F}(\sigma) := \{(i_1, i_2, \ldots, i_\ell): i_j \in \{\sigma(j) , 0\} \}$.
Since $u_{i_1}^{(j)}$ and $u_{i_2}^{(j)}$ are linearly dependent whenever $j > 0$, we can rewrite $\vol(T)$ as
\begin{align}
    \vol(T) &\leq \sum_{\tau \in \bigcup_{\sigma} \mathcal{F}(\sigma)} \vol(\{u_{1}^{(\tau_1)}, \, u_{2}^{(\tau_2)}, \ldots, \,u_{\ell}^{(\tau_{\ell})}\} \cup Z) \notag \\
    &\leq \sum_{\sigma \in \mathcal{S}_\ell}\sum_{\tau \in\mathcal{F}(\sigma) } \vol(\{u_{1}^{(\tau_1)}, \, u_{2}^{(\tau_2)}, \ldots, \,u_{\ell}^{(\tau_{\ell})}\} \cup Z) \,.
\end{align}

We will upper bound $\vol(\{u_{1}^{(\tau_1)}, \, u_{2}^{(\tau_2)}, \ldots, \,u_{\ell}^{(\tau_{\ell})}\} \cup Z)$ for each $\tau$ separately. As an illustration for a fixed $\sigma$, let us consider $\tau = (0,\ldots, 0, \sigma(k+1), \ldots, \sigma(\ell))$. The corresponding volume term is equal to
\begin{equation*}
    \frac{\vol(\{u_{1}^{(0)},\dots,u_{k}^{(0)},u_{k+1}^{(\tau_{k+1})},\dots,u_{\ell}^{(\tau_{\ell})} \} \cup Z)}{\vol(S)} = \frac{\vol(u_{1}^{(0)},\dots,u_{k}^{(0)})\cdot \vol(\{u_{k+1}^{(\tau_{k+1})},\dots,u_{\ell}^{(\tau_{\ell})}\} \cup Z)}{\vol(S)} \,,
\end{equation*}
as the sets of vectors $\{u_{1}^{(0)},\dots,u_{k}^{(0)}\}$ and $\{u_{k+1}^{(\tau_{k+1})},\dots,u_{\ell}^{(\tau_{\ell})}\} \cup Z$ are orthogonal to each other.  Upper bounding $\vol(u_{1}^{(0)},\dots,u_{k}^{(0)})$ by $\prod_{i=1}^{k}\norm{u_{i}^{\perp}}$, we get 
\begin{align*}
	 \frac{\vol(\{u_{1}^{(0)},\dots,u_{k}^{(0)},u_{k+1}^{(\tau_{k+1})},\dots,u_{\ell}^{(\tau_{\ell})} \} \cup Z)}{\vol(S)} \leq \left(\prod_{i=1}^{k}\norm{u_{i}^{\perp}}\right) \cdot\frac{\vol(\{u_{k+1}^{(\tau_{k+1})},\dots,u_{\ell}^{(\tau_{\ell})}\} \cup Z)}{\vol(S)}\, .
	\end{align*}  
To bound  $\vol(\{u_{k+1}^{(\tau_{k+1})},\dots,u_{\ell}^{(\tau_{\ell})}\} \cup Z)$, consider
\begin{align*}
	\frac{\vol(\{u_{k+1}^{(\tau_{k+1})},\dots,u_{\ell}^{(\tau_{\ell})}\} \cup Z)}{\vol(S)}&= 	\frac{\vol(\{a_{k+1,\tau_{k+1}} v_{\tau_{k+1}}, \ldots,a_{\ell,\tau_{\ell}} v_{\tau_{\ell}} \} \cup Z) }{\vol(S)} \\
	&= \left( \prod_{i=k+1}^{\ell}|a_{i\tau_i}| \right)\frac{\vol(\{v_{\sigma(k+1)},\dots, v_{\sigma(\ell)}\} \cup Z)}{\vol(S)} \,. \tag{since $\tau_j = \sigma(j)$ for $k <j \leq \ell$}
	\end{align*}
	Using~\ref{obs:log_vol_submodular}, 
	\begin{equation*}
	    \frac{\vol(\{v_{\sigma(k+1)},\dots, v_{\sigma(\ell)} \} \cup Z)}{\vol(S)} \leq \prod_{j\in [\ell]\backslash \{\sigma(k+1),\ldots, \sigma(\ell)\}} \frac{1}{\norm{v_{j}^{\perp}}} = \prod_{j\in  \{\sigma(1),\ldots, \sigma(k)\}} \frac{1}{\norm{v_{j}^{\perp}}}\,.
	\end{equation*}
Continuing the above chain of inequalities,
 \begin{align*}
 	\frac{\vol(\{u_{k+1}^{(\tau_{k+1})},\dots,u_{\ell}^{(\tau_{\ell})} \} \cup Z)}{\vol(S)} &\leq  \left( \prod_{i=k+1}^{\ell}|a_{i\tau_i}| \right) \cdot  \prod_{j\in \{\sigma(1),\ldots, \sigma(k)\}} \frac{1}{\norm{v_{j}^{\perp}}}  \cdot \prod_{i=1}^{k}\norm{u_{i}^{\perp}}\\
 	&=  \left( \prod_{i=k+1}^{\ell}|a_{i\tau_i}| \right) \cdot \left(\prod_{i=1}^{k} \frac{\norm{u_{i}^{\perp}}}{\norm{v_{\sigma(i)}^{\perp}}}\right) \;.
 \end{align*}
 
Now consider any $\tau \in \mathcal{F}(\sigma)$, and let $I_{\tau}^0 := \{j: \tau(j) = 0\}$ and $I_{\tau}^\sigma = [\ell]\backslash I_{\tau}^0$. Then following a similar chain of proof as above, we get
 \begin{equation*}
     \frac{\vol(\{u_{1}^{(\tau_1)},\dots,\dots,u_{\ell}^{(\tau_{\ell})}\} \cup Z)}{\vol(S)}  \leq \left(\prod_{i \in I_\tau^\sigma}|a_{i\sigma(i)}| \right) \cdot \left(\prod_{i \in I_\tau^0}\frac{\norm{u_{i}^{\perp}}}{\norm{v_{\sigma(i)}^{\perp}}} \right)\,.
 \end{equation*}
 Summing over all $\tau \in \mathcal{F}(\sigma)$, 
\begin{align*}
		\sum_{\tau \in \mathcal{F}(\sigma)}\frac{\vol(\{u_{1}^{(\tau_1)},\dots,\dots,u_{r}^{(\tau_{r})}\}\cup Z)}{\vol(S)}  &\leq \sum_{\tau \in \mathcal{F}(\sigma)} \left(\prod_{i \in I_\tau^\sigma}|a_{i\sigma(i)}| \right)\left(\prod_{i \in I_\tau^0}\frac{\norm{u_{i}^{\perp}}}{\norm{v_{\sigma(i)}^{\perp}}} \right) \\
		&= \sum_{\tau \in \mathcal{F}(\sigma)} \left(\prod_{i \in I_\tau^\sigma}e^{-\tilde{w}_0(u_i^\parallel, v_{\sigma(i)})} \right)\left(\prod_{i \in I_\tau^0}e^{-\tilde{w}_0(u_i^\perp, v_{\sigma(i)})} \right) \,.
	\end{align*}
Summing over all permutations, 
\begin{align*}
    r^{2r} \cdot \Tilde{f}(r) \leq \frac{\vol(T)}{\vol(S)}&\leq \sum_{\sigma \in \mathcal{S}_{\ell}}\sum_{\tau \in \mathcal{F}(\sigma)} \left(\prod_{i \in I_\tau^\sigma}e^{-\tilde{w}_0(u_i^\parallel, v_{\sigma(i)})} \right)\left(\prod_{i \in I_\tau^0}e^{-\tilde{w}_0(u_i^\perp, v_{\sigma(i)})} \right) .
\end{align*}
The RHS is sum of $2^\ell \ell!$ positive terms. So, there exists some permutation $\sigma \in \mathcal{S}_\ell$ and $\tau \in \mathcal{F}(\sigma)$ such that 
\begin{equation*}
    \left(\prod_{i \in I_\tau^\sigma}e^{-\tilde{w}_0(u_i^\parallel, v_{\sigma(i)})} \right)\left(\prod_{i \in I_\tau^0}e^{-\tilde{w}_0(u_i^\perp, v_{\sigma(i)})} \right) \geq \frac{r^{2r} \cdot \Tilde{f}(r)}{2^\ell \ell!} \geq \Tilde{f}(r)\,.
\end{equation*}
Let the cycle decomposition of $\sigma = \{\pi_1, \pi_2, \ldots, \pi_k\}$ with $\pi_i = \{j_1, j_2, \ldots, j_{x_i}\}$.
For each vector $u_i$, we define symbols $p_i$ indicating whether $u_i$ is present as a perpendicular vector in $\tau$, i.e., $p_i = \parallel$ if $i \in I_\tau^\sigma$ and $p_i = \perp$ otherwise.

Then each cyclic permutation $\pi_i$ corresponds to a cycle $C_i$ in $\widetilde{G}(S)$ given by
\begin{equation*}
    C_i = (u_{j_1}^{p_{j_1}} \rightarrow v_{j_2}  \rightarrow u_{j_2}^{p_{j_2}} \rightarrow v_{j_3} \ldots u_{j_{x_i}}^{p_{j_{x_{i}}}}\rightarrow v_{j_{1}} \rightarrow  u_{j_{1}}^{p_{j_1}}),
\end{equation*}
and for every $i \in [k]$, $ \left(\prod_{j \in I_\tau^\sigma \cap \pi_i}e^{-\tilde{w}_0(u_j^\parallel, v_{\sigma(j)})} \right) \cdot \left(\prod_{j \in I_\tau^0\cap \pi_i}e^{-\tilde{w}_0(u_j^\perp, v_{\sigma(j)})}\right) = e^{-\tilde{w}_0(C_i)}$ and therefore,
\begin{equation}
    \left(\prod_{i \in I_\tau^\sigma}e^{-\tilde{w}_0(u_i^\parallel, v_{\sigma(i)})} \right)\cdot \left(\prod_{i \in I_\tau^0}e^{-\tilde{w}_0(u_i^\perp, v_{\sigma(i)})} \right) = \prod_{i=1}^k e^{-\tilde{w}_0(C_i)} \geq \Tilde{f}(r). \label{eq:t1}
\end{equation}
At least one of the $C_i$'s must be an $\tilde{f}$-violating cycle. If not, then $e^{-\tilde{w}_0(C_i)} \leq \Tilde{f}(|C_i|/2)$ and \begin{equation}
     \left(\prod_{i \in I_\tau^\sigma}e^{-\tilde{w}_0(u_i^\parallel, v_{\sigma(i)})} \right)\left(\prod_{i \in I_\tau^0}e^{-\tilde{w}_0(u_i^\perp, v_{\sigma(i)})} \right) = \prod_{i=1}^k e^{-\tilde{w}_0(C_i)} \leq  \prod_{i=1}^k \Tilde{f}(|C_i|/2) \leq \Tilde{f}(r), \label{eq:t2}
\end{equation} 
where the last inequality follows from $\sum_{i=1}^k |C_i| = 2\ell \leq 2r$. Equation~\eqref{eq:t2} contradicts~\eqref{eq:t1}, so there exist $i$ such that $C_i$ is an $\tilde{f}$-violating cycle in $\widetilde{G}(S)$.
\end{proof}

\subsection{Analysis of Stage 2}

In this section, we prove that if the algorithm finds an $\tilde{f}$-violating cycle in Stage 2, then exchanging on this cycle increases the volume by a constant factor. However, since the algorithm fails to find a cycle in Stage 1, a cycle in Stage 2 must contain a perpendicular vertex. We first bound the number of vertices a minimal $\tilde{f}$-violating cycle can contain, and use this fact to prove that exchanging on such a cycle increases the volume.

\begin{lemma}
	If $C$ is a minimal $\Tilde{f}$-violating cycle in $\widetilde{G}(S)$ and $\widetilde{C} = \{ u \in [n]\del S: u^\perp \text{ or } u^\parallel \in C\} \cup \{ v \in S: v \in C\}$, then $S\Delta \widetilde{C}$ is independent in $\mathcal{M}$.
\end{lemma}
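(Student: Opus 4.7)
The plan is to adapt the proof of Lemma~\ref{lem:matroid-indep} to the augmented setting, using Observation~\ref{obs:parallel-matroid} to lift the question from the constraint matroid $\mathcal{M}$ to the parallel-extension matroid $\widetilde{\mathcal{M}}(S)$. Let $\widetilde{S}' := S \Delta V(C)$ denote the natural interpretation of the proposed update inside the ground set of $\widetilde{\mathcal{M}}(S)$, keeping each $u^\perp$ or $u^\parallel$ exactly as it appears in $V(C)$. By Observation~\ref{obs:parallel-matroid}, if $\widetilde{S}'$ is independent in $\widetilde{\mathcal{M}}(S)$, then replacing each augmented vertex with the corresponding original element produces $S \Delta \widetilde{C}$ as an independent set in $\mathcal{M}$ (this also precludes the degenerate scenario of $V(C)$ containing both copies of some $u$). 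Thus it suffices to prove $\widetilde{S}' \in \widetilde{\mathcal{I}}$.

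Assume for contradiction that $\widetilde{S}' \notin \widetilde{\mathcal{I}}$. Set $\ell = |C|/2$ and let $N_1, N_2$ be the backward and forward arcs of $C$ respectively. Observation~\ref{obs:parallel-matroid} identifies the backward arcs of $\widetilde{G}(S)$ with the exchange arcs of $\widetilde{\mathcal{M}}(S)$, so $N_1$ is a perfect matching on $V(C)$ witnessing a valid exchange in $\widetilde{\mathcal{M}}(S)$. Applying Theorem 39.13 of~\cite{schrijver2003combinatorial} to $\widetilde{\mathcal{M}}(S)$, non-independence of $\widetilde{S}'$ yields a second perfect matching $N_1' \neq N_1$ of backward arcs on $V(C)$. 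Following Lemma~\ref{lem:matroid-indep}, form the Eulerian multidigraph on $V(C)$ with arc multiset $2N_2 \cup N_1 \cup N_1'$, decompose it into directed circuits, and extract a circuit $C'$ with $V(C') \subsetneq V(C)$ and $\widetilde{w}_\ell(C') < 0$; this uses that $\widetilde{w}_\ell(C) < 0$ because $C$ is $\tilde{f}$-violating, and that only forward arcs carry weight.

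Finally, with $y := |C'|/2 < \ell$, we have $\widetilde{w}_\ell(C') = y \log(\tilde{f}(\ell))/\ell + \widetilde{w}_0(C') < 0$. The choice $\tilde{f}(i) = (i!)^{11}$ for $i \geq 2$ (and $\tilde{f}(1) = 2$) ensures that $\log(\tilde{f}(i))/i$ is nondecreasing in $i$, since for $i \geq 2$ this expression is $11 \log(i!)/i$, the running average of $\{11\log k\}_{k=1}^{i}$, and the step from $i=1$ to $i=2$ is checked directly. Hence $\log(\tilde{f}(y))/y \leq \log(\tilde{f}(\ell))/\ell$, which gives $\widetilde{w}_0(C') < -\log(\tilde{f}(y))$, so $C'$ is itself an $\tilde{f}$-violating cycle in $\widetilde{G}(S)$ with strictly smaller vertex set, contradicting the minimality of $C$.

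The main obstacle is pinning down the right monotonicity condition on $\tilde{f}$ and checking that the length-versus-shortcut balancing still goes through when perpendicular vertices are in play; the matroid-theoretic side of the argument is essentially a direct port of Lemma~\ref{lem:matroid-indep} once Observation~\ref{obs:parallel-matroid} is in hand, so the core novelty is purely quantitative. One subtlety worth double-checking is that the Eulerian-decomposition argument outputs a circuit $C'$ whose arcs all lie in $\widetilde{G}(S)$, which is immediate since every arc in $2N_2 \cup N_1 \cup N_1'$ is already an arc of $\widetilde{G}(S)$.
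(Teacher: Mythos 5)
Your proposal is correct and follows essentially the same route as the paper: the paper's proof simply invokes Observation~\ref{obs:parallel-matroid} together with Lemma~\ref{lem:matroid-indep} applied to the parallel-extension matroid $\widetilde{\mathcal{M}}(S)$, and then uses the second part of Observation~\ref{obs:parallel-matroid} to pass back to $\mathcal{M}$, which is exactly what you do, only with the proof of Lemma~\ref{lem:matroid-indep} unfolded in the augmented setting. Your explicit verification that $\log(\tilde f(i))/i$ is nondecreasing (the one quantitative hypothesis needed for the port, with $\tilde f(i)=(i!)^{11}$, $\tilde f(1)=2$) is a useful detail the paper leaves implicit, but it does not constitute a different argument.
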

\begin{proof}
	Observation~\ref{obs:parallel-matroid} and Lemma~\ref{lem:matroid-indep} imply that $S \Delta \widetilde{C}$ is independent in $\widetilde{\mathcal{M}}(S)$, and the second part of Observatio ~\ref{obs:parallel-matroid} then implies that $S\Delta \widetilde{C}$ is independent in $\mathcal{M}$.
\end{proof}

To make our later calculations possible, we first prove that any minimal $\Tilde{f}$-violating cycle in $\widetilde{G}(S)$ contains exactly one perpendicular vector. The following lemma is slightly more general than we need now, but will be useful later.

\begin{lemma} \label{lem:parts}
Let $C$, $|C| = 2y$ be a cycle in $\widetilde{G}(S)$ such that there are no $\Tilde{f}$-violating cycles with less than $2y$ hops in $\widetilde{G}(S)$. Let $C$ contain $k \geq 2$ perpendicular vertices $u_1^\perp, u_2^\perp, \ldots, u_k^\perp$ such that the section of $C$ from $u_i^\perp$ to  $u_{(i \; \mathrm{mod} \;  k)+1}^\perp$ has $2x_i$ hops for $i \in [k]$. Then \begin{equation*}
    e^{-\Tilde{w}_0(C)} \leq \prod_{i=1}^k \Tilde{f}(x_i).
\end{equation*}
\end{lemma}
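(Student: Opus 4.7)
The plan is to construct, for each $i\in[k]$, a sub-cycle $\hat C_i$ in $\widetilde G(S)$ of length $2x_i$ whose forward arcs are exactly those of section $i$ of $C$, and then to apply the minimality hypothesis to each $\hat C_i$. Once this is in place the lemma follows quickly: because backward arcs carry weight $0$ and the forward arcs of $C$ are partitioned across the $k$ sections, one has $\tilde w_0(C)=\sum_{i=1}^k \tilde w_0(\hat C_i)$, so
$$e^{-\tilde w_0(C)}=\prod_{i=1}^k e^{-\tilde w_0(\hat C_i)} \leq \prod_{i=1}^k \tilde f(x_i).$$
The assumption $k\geq 2$ is used precisely to force $x_i<y$ for every $i$, so that $|\hat C_i|=2x_i<2y$ and the minimality hypothesis gives $e^{-\tilde w_0(\hat C_i)}\leq \tilde f(x_i)$.

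To construct $\hat C_i$, write section $i$ of $C$ as
$$u_i^\perp \to r_{i,0}\to \ell_{i,1}^\parallel\to r_{i,1}\to \cdots \to \ell_{i,x_i-1}^\parallel\to r_{i,x_i-1}\to u_{(i\bmod k)+1}^\perp,$$
where $r_{i,0},\ldots,r_{i,x_i-1}$ are the $x_i$ right vertices of the section and $\ell_{i,1}^\parallel,\ldots,\ell_{i,x_i-1}^\parallel$ are its internal parallel left vertices (there are $x_i-1$ of these because the two endpoint left vertices of the section are the perpendicular vertices $u_i^\perp$ and $u_{(i\bmod k)+1}^\perp$). I take $\hat C_i$ to consist of the $x_i$ forward arcs of this section, the $x_i-1$ internal backward arcs $r_{i,j}\to \ell_{i,j+1}^\parallel$ for $j=0,\ldots,x_i-2$, together with a single closing backward arc $r_{i,x_i-1}\to u_i^\perp$. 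This is a cycle of $2x_i$ hops containing $u_i^\perp$ as its only perpendicular vertex, and its forward-arc weight equals exactly the contribution of section $i$'s forward arcs to $\tilde w_0(C)$. When $x_i=1$ the construction degenerates into the $2$-hop cycle $u_i^\perp\to r_{i,0}\to u_i^\perp$, whose bound matches $\tilde f(1)=2$.

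The main obstacle is justifying that the closing arc $r_{i,x_i-1}\to u_i^\perp$ actually exists in $\widetilde G(S)$, i.e., that $S-r_{i,x_i-1}+u_i\in\mathcal I$. This is not immediate from $C$, which only tells us $S-r_{i,x_i-1}+u_{(i\bmod k)+1}\in\mathcal I$. To handle this I would appeal to the matroid-exchange framework of Observation~\ref{obs:parallel-matroid} together with a unique-perfect-matching argument (Chapter 39, Theorem 39.13 of~\cite{schrijver2003combinatorial}) in exactly the same spirit as the proof of Lemma~\ref{lem:matroid-indep}: if the straightforward closing arc is unavailable, then one can find an alternative backward matching on the vertices of section $i$ (possibly substituting $u_i^\parallel$ for $u_i^\perp$ at a position where the incidence condition forces it) whose union with the forward arcs of the section is still a single cycle on the same vertex set; since backward arcs all have weight $0$, any such choice gives the same $\tilde w_0(\hat C_i)$. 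Failure of every such closing would produce, via the matroid-exchange argument, a strictly shorter cycle of negative $\tilde w_{x_i}$-weight in $\widetilde G(S)$, contradicting the minimality hypothesis.
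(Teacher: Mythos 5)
Your reduction step is fine as accounting: if each section could be closed into a cycle $\hat C_i$ of $2x_i<2y$ hops whose forward arcs are exactly the section's forward arcs, then the hypothesis gives $e^{-\tilde w_0(\hat C_i)}\leq \tilde f(x_i)$ and the product bound follows since backward arcs have weight $0$. The genuine gap is the closing arc. In $\widetilde G(S)$ the backward arc $r_{i,x_i-1}\to u_i^\perp$ exists only if $S-r_{i,x_i-1}+u_i\in\mathcal I$, and nothing in the hypotheses gives you this: $C$ only certifies $S-r_{i,x_i-1}+u_{(i\bmod k)+1}\in\mathcal I$. Your fallback does not repair this. Observation~\ref{obs:parallel-matroid} and Theorem 39.13 run in the wrong direction: the unique-matching criterion lets you conclude independence from a matching, or (as in Lemma~\ref{lem:matroid-indep}) extract sub-circuits from \emph{two} matchings on the same vertex set, but it does not guarantee that \emph{any} perfect matching of backward arcs exists inside the section's vertex set when the corresponding exchange is not independent. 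The hypothesis of the lemma constrains only cycle \emph{weights}; the failure of all closings is a statement about matroid independence and produces no shorter negative-weight cycle, so there is no contradiction to appeal to. In short, the dichotomy ``either a closing backward matching exists or minimality is violated'' is unsupported, and with it the whole construction of $\hat C_i$.

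The paper's proof sidesteps this issue entirely, and the contrast shows why your route is hard to salvage as stated: instead of adding backward arcs, it modifies only \emph{forward} arcs, which by construction of $\widetilde G(S)$ exist from every left vertex to every right vertex. Taking two consecutive perpendicular vertices $u_1^\perp,u_2^\perp$ with forward arcs to $v_1,v_2$ in $C$, it swaps these to $u_1^\perp\to v_2$ and $u_2^\perp\to v_1$, splitting $C$ into two shorter cycles $C',C''$ that reuse all of $C$'s backward arcs; the key point is that perpendicular-arc weights $-\log\bigl(\norm{u^\perp}/\norm{v^\perp}\bigr)$ separate into a $u$-term and a $v$-term, so the swap preserves total weight, giving $e^{-\tilde w_0(C)}=e^{-\tilde w_0(C')-\tilde w_0(C'')}$, and then induction on $k$ finishes. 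If you want to keep your section-by-section picture, the viable fix is in this spirit: close section $i$ by replacing the forward arc $u_i^\perp\to r_{i,0}$ with the (always existing) forward arc $u_{(i\bmod k)+1}^\perp\to r_{i,0}$; the per-section weights then differ from the true section weights, but the discrepancies $\log\bigl(\norm{u_{(i\bmod k)+1}^\perp}/\norm{u_i^\perp}\bigr)$ telescope to zero around $C$, which is exactly the separability the paper exploits. As written, however, your proof has a missing step that the cited matroid tools cannot fill.
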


\begin{proof}
	The proof is by induction on $k>1$. Let the perpendicular vertices $u_1^\perp, u_2^\perp, \ldots, u_k^\perp$ appear in order along around $C$. Let $C'$ and $C''$ be the two cycles created by replacing the edges $u_1^\perp\rightarrow v_1$ and $u_2^\perp \rightarrow v_2$ with the new edges $u_1^\perp \rightarrow v_2$ and $u_2^\perp \rightarrow v_1$, so that $C'$ is the cycle containing $u_1^\perp$. Note that $C'$ has $2y - 2x_1 < 2y$ hops and $C''$ has $2x_1 < 2y$ hops. So by the hypothesis of the Lemma, both $C'$ and $C''$ are not $\tilde{f}$-violating. 
	
    Additionally,
	\begin{equation*}
	    \Tilde{w}_0(u_1^\perp \rightarrow v_2) + \Tilde{w}_0(u_2^\perp \rightarrow v_1) = -\log\left( \frac{\|u_1^\perp\| \|u_2^\perp\|}{\|v_1^\perp\|\|v_2^\perp\|}\right) = \Tilde{w}_0(u_1^\perp\rightarrow v_1) + \Tilde{w}_0(u_2^\perp \rightarrow v_2),
	\end{equation*}
	and therefore
	\begin{equation}
	    e^{-\Tilde{w}_0(C)} = e^{-\Tilde{w}_0(C') -\Tilde{w}_0(C'')}. \label{eq:e7}
	\end{equation}

	 When $k=2$, both cycles $C'$ and $C''$ contain exactly one perpendicular vertex, and since they are not $\tilde{f}$-violating, $e^{-\tilde{w}_0(C')} \leq \tilde{f}(2y-2x_1)$ and $e^{-\tilde{w}_0(C'')} \leq \tilde{f}(2x_1)$. 
	 
	Using equation~\eqref{eq:e7}, we conclude that $e^{-\Tilde{w}_0(C)} = e^{-\Tilde{w}_0(C') -\Tilde{w}_0(C'')} \leq \tilde{f}(2y-2x_1) \cdot \Tilde{f}(x_1)$, as desired.
	
	When $k > 2$ the cycle $C'$ has $k-1 \geq 2$ perpendicular vertices since it no longer contains $u_2^\perp$. Thus, the induction hypothesis implies that $e^{-\Tilde{w}_0(C')} \leq \prod_{i=2}^k \Tilde{f}(x_i)$. Since $C''$ is not $\tilde{f}$-violating, it satisfies $e^{-\tilde{w}_0(C'')} \leq \tilde{f}(2x_1)$. 
	
	Again using equation~\eqref{eq:e7}, we conclude that $e^{-\Tilde{w}_0(C)} = e^{-\Tilde{w}_0(C') -\Tilde{w}_0(C'')} \leq  \prod_{i=1}^k \Tilde{f}(x_i)$, as desired.
\end{proof}

When $y = \sum_{i=1}^k x_i$, we know that $\Tilde{f}(y) \geq \prod_{i=1}^k \Tilde{f}(x_i)$. Thus, we obtain the following corollary:

\begin{corollary} \label{cor:one}
	If $C$ is a minimal $\Tilde{f}$-violating cycle in $\widetilde{G}(S)$, then $C$ contains exactly one perpendicular vector.
\end{corollary}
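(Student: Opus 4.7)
The plan is to handle the two directions of ``exactly one'' separately: first that $C$ contains at least one perpendicular vertex (using the algorithmic context of Stage 2), and then that it contains at most one (using Lemma \ref{lem:parts} combined with a short superadditivity inequality on $\tilde{f}$).

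For the at-least-one direction, I would invoke the fact that Algorithm \ref{alg:exchr} enters Stage 2 only after Stage 1 fails to find an $f$-violating cycle in $G(S)$. If $C$ contained only parallel vertices, then its $\widetilde{w}_0$ weight would equal the $w_0$ weight of the corresponding cycle in $G(S)$, via $\widetilde{w}_0(u^\parallel, v) = -\log|a_{uv}| = w_0(u, v)$. Since $\tilde{f}(1) = f(1) = 2$ and $\tilde{f}(i) = (i!)^{11} \geq 2(i!)^3 = f(i)$ for $i \geq 2$, such a $\tilde{f}$-violating cycle would also be $f$-violating in $G(S)$, contradicting the failure of Stage 1.

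For the at-most-one direction, I would argue by contradiction. Suppose $C$ contains $k \geq 2$ perpendicular vertices $u_1^\perp,\ldots,u_k^\perp$ appearing cyclically, with $2x_i$ hops between $u_i^\perp$ and $u_{(i \bmod k)+1}^\perp$, so $|C| = 2y$ with $y = \sum_i x_i$. Minimality of $C$ rules out shorter $\tilde{f}$-violating cycles in $\widetilde{G}(S)$, so Lemma \ref{lem:parts} yields $e^{-\widetilde{w}_0(C)} \leq \prod_{i=1}^k \tilde{f}(x_i)$, while $C$ being $\tilde{f}$-violating gives $e^{-\widetilde{w}_0(C)} > \tilde{f}(y)$. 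Hence $\tilde{f}(y) < \prod_i \tilde{f}(x_i)$. To reach a contradiction, it suffices to verify $\tilde{f}(y) \geq \prod_i \tilde{f}(x_i)$ whenever $y = \sum x_i$, $x_i \geq 1$, and $k \geq 2$. If every $x_i = 1$ the claim reduces to $(k!)^{11} \geq 2^k$, which is clear for $k \geq 2$. Otherwise, let $m = |\{i : x_i = 1\}|$ and $s = y - m \geq 2$; then $y!/s! = \prod_{j=1}^m (s+j) \geq 2^m$, and the multinomial bound $s! \geq \prod_{x_i \geq 2} x_i!$ combine to give $(y!)^{11} \geq 2^{11m}\prod_{x_i \geq 2}(x_i!)^{11} \geq 2^m \prod_{x_i \geq 2}(x_i!)^{11} = \prod_i \tilde{f}(x_i)$.

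The main obstacle is really only the bookkeeping in the boundary case $x_i = 1$, where the convention $\tilde{f}(1) = 2$ (rather than $(1!)^{11} = 1$) forces an extra factor of $2^m$ that must be absorbed; the large exponent $11$ in the definition of $\tilde{f}$ is precisely calibrated to handle this. Otherwise the corollary is immediate from Lemma \ref{lem:parts} together with the observation that failure of Stage 1 rules out cycles with only parallel vertices.
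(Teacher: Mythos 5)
Your proposal is correct and takes essentially the same route as the paper: the corollary there is likewise obtained by combining Lemma \ref{lem:parts} with the superadditivity $\tilde{f}(y) \geq \prod_{i=1}^k \tilde{f}(x_i)$ (which the paper merely asserts in one line), while the ``at least one perpendicular vertex'' part is attributed, exactly as you do, to the failure of Stage 1 together with $\tilde{f}(i)\geq f(i)$. Your only additions are the explicit verification of that superadditivity (including the $\tilde{f}(1)=2$ boundary case) and of the comparison $\tilde{f}\geq f$, and both check out.
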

Now we will prove that for a minimal $\tilde{f}$-violating cycle $C$, the volume of $S\Delta C$ is strictly larger than $\vol(S)$.

\begin{lemma}
If $C$ is a minimal $\Tilde{f}$-violating cycle in $\widetilde{G}(S)$, then $\vol(S \Delta C) \geq 2 \cdot \vol(S)$.
\end{lemma}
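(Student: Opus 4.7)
The plan is to identify the perpendicular contribution of $u_1^\perp$ with an augmented diagonal entry in a coefficient matrix, then mimic the proof of Lemma~\ref{thm:exch}. By Corollary~\ref{cor:one}, $C$ contains exactly one perpendicular vertex, which I relabel as $u_1^\perp$, so that $C = (v_0 \to u_1^\perp \to v_1 \to u_2^\parallel \to v_2 \to \cdots \to u_\ell^\parallel \to v_0)$ with $\ell = |C|/2$. The base case $\ell = 1$ follows directly from Lemma~\ref{lem:weight_wa0}: $C$ being $\tilde{f}$-violating forces $\norm{u_1^\perp}/\norm{v_0^\perp} > \tilde{f}(1) = 2$, so $\vol(S \Delta \widetilde{C})/\vol(S) \geq \norm{u_1^\perp}/\norm{v_0^\perp} > 2$.

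For $\ell \geq 2$, I first extend Lemma~\ref{lem:vol-determinant} via a block Schur-complement computation. Writing $Y = \{v_0, \ldots, v_{\ell-1}\}$, $Z = S \del Y$, with $A_C$ the $\ell \times \ell$ coefficient matrix of $u_i^\parallel$ in the basis $Y$, $M$ the Gram matrix of $P_Z Y$, and $K = (U^\perp)^\top U^\perp$ the Gram matrix of the perpendicular parts, this yields
\[ \vol(T)^2/\vol(S)^2 \;=\; \det(A_C M A_C^\top + K)/\det(M). \]
Orthonormalising $P_Z Y$ lets me assume $M = I$ and $\norm{v_j^\perp} = 1$, so the ratio simplifies to $\det(A_C A_C^\top + K)$.

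I then form the auxiliary $\ell \times \ell$ matrix $\hat{A}$ obtained from $A_C$ by replacing its $(1,1)$-entry with $\beta_1 := \norm{u_1^\perp}$. Its diagonal product equals $e^{-\widetilde{w}_0(C)} > \tilde{f}(\ell) = (\ell!)^{11}$ since $C$ is $\tilde{f}$-violating. The bound $\det(A_C A_C^\top + K) \geq \det(\hat{A})^2$ should follow by Cauchy-Binet on $A_C A_C^\top + K = [A_C \mid (U^\perp)^\top][A_C \mid (U^\perp)^\top]^\top$: selecting the column subset that swaps the first column of $A_C$ for the column of $(U^\perp)^\top$ aligned with $u_1^\perp/\norm{u_1^\perp}$ inside $\Span(S)^\perp$ extracts a minor whose leading contribution is $\norm{u_1^\perp}$ times the $(\ell-1) \times (\ell-1)$ minor of $A_C$ omitting row/column $1$. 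The error terms from the cross-coefficients $\alpha_j := \langle u_j^\perp, u_1^\perp\rangle/\norm{u_1^\perp}$ for $j \geq 2$ are controlled by $|\alpha_j| \leq \norm{u_j^\perp} \leq \tilde{f}(1) = 2$, which follows because Stage~2 explores cycles in order of increasing hop count, making every valid $2$-cycle in $\widetilde{G}(S)$ non-$\tilde{f}$-violating.

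Finally, I bound $|\det(\hat{A})| \geq 2$ by a Gaussian-elimination argument along the lines of Claim~\ref{claim:mat_det} and Lemma~\ref{lem:gaussian}, with $\tilde{f}$ replacing $f$ wherever a sub-cycle contains $u_1^\perp$. The off-diagonal entries of $\hat{A}$ fall into two classes: sub-cycles of $C$ avoiding $u_1^\perp$ give the usual $f$-bounds from Stage~1 failure, while sub-cycles traversing $u_1^\perp$ give $\tilde{f}$-bounds from hop-by-hop minimality of $C$. Because $\tilde{f}(\ell) = (\ell!)^{11}$ grows much faster than either of these off-diagonal bounds, the diagonal product dominates after triangularisation and $|\det(\hat{A})| \geq 2$ follows. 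The principal obstacle is the Cauchy-Binet step: the auxiliary perpendicular directions $u_j^\perp$ for $j \geq 2$ could in principle cancel the dominant $\norm{u_1^\perp}$ contribution, and handling this cleanly hinges on the $2$-cycle bounds $\norm{u_j^\perp} \leq 2$ afforded by hop-by-hop minimality.
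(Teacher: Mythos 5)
Your proposal diverges from the paper's approach in a fundamental way, and the divergence introduces a genuine gap at exactly the place you flag as the "principal obstacle."

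The paper never passes through Cauchy--Binet. It works with the ratio $\vol(\proj X)/\vol(\proj Y)$ directly (where $\proj$ projects orthogonally to $\Span(S\setminus Y)$), expands each $\proj u_i$ as a sum of rank-one pieces $u_i^{(0)} + \sum_j u_i^{(j)}$, and invokes the \emph{subadditivity of volume} (Observation~\ref{obs:volume_triangle}) to obtain a lower bound of the form ``$2\times(\text{main term}) - \perm(W_C)$'', where $W_C$ is the $\ell\times\ell$ matrix with $[W_C]_{i,j} = |a_{i,j}| + \|u_i^\perp\|/\|v_j^\perp\|$. The technical core is then an \emph{upper} bound on $\perm(W_C)$ by $1.56$ times the diagonal product (Lemmas~\ref{lem:permw}, \ref{lem:diag}, \ref{lem:perm}), so the main term is not cancelled. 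Crucially, this argument never changes coordinates; the entries $a_{ij}$ and $\|u_i^\perp\|/\|v_j^\perp\|$ that appear are exactly the ones appearing in the cycle weights, so the sub-cycle bounds apply directly.

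Your Cauchy--Binet route breaks on the coordinate issue. To invoke Cauchy--Binet on $\det([A_C\mid Z][A_C\mid Z]^\top)$ you must first replace $A_C M A_C^\top$ by $\tilde A_C \tilde A_C^\top$, where $\tilde A_C = A_C L$ with $M = LL^\top$. The diagonal entries of $\tilde A_C$ are \emph{not} $a_{ii}$: they pick up the geometry of $\proj Y$ (the Gram matrix $M$ is not the identity and $\|v_j^\perp\|$ is not $1$ after a mere rescaling, since the $\proj v_j$ need not be orthogonal). Consequently the product of diagonal entries of your $\hat A$ is not $e^{-\tilde w_0(C)}$, and the sub-cycle bounds on the original $a_{ij}$ say nothing directly about the off-diagonal entries of the minor you extract. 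So the Gaussian-elimination argument modelled on Lemma~\ref{lem:gaussian} does not go through as written; the quantities it would manipulate are simply different quantities than the ones the minimality of $C$ constrains. Even if the coordinates matched, you would still need a version of Lemma~\ref{lem:diag} to rule out $\|u_j^\perp\|$ cancellation, which the bare bound $|\alpha_j| \le 2$ does not deliver; you identify this yourself in your last sentence, but it is not a minor loose end --- it is where the argument needs an entirely different lemma. If you want to stay with a single distinguished term rather than a permanent bound, the natural alternative is the paper's own Lemma~\ref{vol-determinant2}, which discards the perpendicular contributions of $u_2^\perp,\dots,u_\ell^\perp$; but that lemma throws away $\|u_1^\perp\|$ too, so it gives a bound in terms of $\prod_i |a_{ii}|$, not the quantity $\frac{\|u_1^\perp\|}{\|v_1^\perp\|}\prod_{i\ge 2}|a_{ii}|$ that the $\tilde f$-violating hypothesis actually controls. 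Reconciling these two is precisely what the subadditivity-plus-permanent argument accomplishes.
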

\begin{proof}
If $|C| = 2$, then let $C = (u^\perp \rightarrow v)$. Since $C$ is an $\Tilde{f}$-violating cycle, $e^{-\Tilde{w}_0(C)} = \frac{\norm{u^\perp}}{\norm{v^\perp}} > \Tilde{f}(1) \geq 2$. By Lemma \ref{lem:weight_wa0}, \begin{equation*}
\frac{\vol(S-v+u)}{\vol(S)} =  \sqrt{a_{uv}^2 + \frac{\norm{u^\perp}^2}{\norm{v^\perp}^2} }\geq e^{-\Tilde{w}_0(C)} \geq 2.
\end{equation*}
This concludes the proof when $C$ has $2$ arcs.

Now consider the case when $C$ has at least $4$ arcs. By Corollary~\ref{cor:one}, we can assume that $C$ contains exactly one perpendicular vertex.
Let $C = ( u_1^\perp \rightarrow v_1 \rightarrow u_2^\parallel \rightarrow v_2\rightarrow \ldots  u_{\ell}^\parallel \rightarrow v_{\ell} \rightarrow u_1^\perp)$, where $\ell \geq 2$, $v_i \in S$, and $v_i \rightarrow u_{i+1}$ is an arc in $G(S)$.

Define $T := S \Delta C$, $X := C \backslash S = \{u_1, u_2, \ldots, u_\ell\}$, and $Y := S \cap C = \{v_1, v_2, \ldots, v_\ell \}$, and let $S\backslash Y = \{v_{\ell+1}, \ldots, v_{r}\}$. Let $\proj$ denote the projection matrix orthogonal to the span of $S\backslash Y$.

Then \begin{align*}
    \frac{\vol(T)}{\vol(S)} =  \frac{\vol(\{S \backslash Y\} \cup X)}{\vol(\{S\backslash Y\} \cup Y)} = \frac{\vol(u_1, u_2, \ldots,  u_\ell, v_{\ell+1}, \ldots, v_{r})}{\vol( v_1,  v_2, \ldots, v_\ell, v_{\ell+1}, \ldots,  v_r)} \,.
\end{align*}
Note that for any set of vectors $\{x_1, \ldots, x_\ell\}$, 
\[\vol(x_1, x_2, \ldots, x_\ell, v_{\ell+1}, \ldots, v_{r}) = \vol(\proj x_1, \proj x_2, \ldots, \proj x_\ell) \cdot \vol(v_{\ell+1}, \ldots, v_{r}).\]
Applying this to the sets $X$ and $Y$, we get
\begin{align*}
    \frac{\vol(T)}{\vol(S)} =  \frac{\vol(\proj u_1, \proj u_2, \ldots, \proj u_\ell)}{\vol(\proj v_1, \proj v_2, \ldots, \proj v_\ell)} =  \frac{\vol(\proj X)}{\vol(\proj Y)}\;.
\end{align*}

By definition, $u_i = \sum_{j=1}^r a_{i,j} v_j + u_i^\perp$. Taking the projection of $u_i$ orthogonal to $\Span(S\backslash Y)$, we get $\proj u_i = \sum_{j = 1}^{\ell} a_{i,j} \proj v_j + u_i^\perp$, since $Pv_{j} = 0$ for all $j > \ell$.
So we can decompose each vector $\proj u_i$ into a sum of $\ell+1$ vectors, i.e., $\proj u_i = \sum_{j=0}^\ell u_{i}^{(j)}$
where
\begin{itemize}
    \item $u_{i}^{(0)} := u_i^\perp$ for all $i$, and
    \item $u_{i}^{(j)} := a_{i, j} \proj v_j$.
\end{itemize}
Note that for any $i_1, i_2 \in [\ell]$, the vectors $u_{i_1}^{(j)}$ and $u_{i_2}^{(j)}$ are linearly dependent for any $j > 0$.

Let $I_C$ denote the tuple $(0,2,3, \ldots, \ell)$. Using Observation \ref{obs:volume_triangle}, we can lower bound $\vol(\proj X)$ as
\begin{equation}
    \vol(\proj X) \geq \vol(\,u_{1}^{(0)}, \, u_{2}^{(2)}, \ldots, \,u_{\ell}^{(\ell)}) \; - \sum_{\substack{i_1, i_2, \ldots, i_\ell \in \{0,1,\dots,\ell\} \\ (i_1, i_2, \ldots, i_\ell) \neq I_C}} \vol(\,u_{1}^{(i_1)}, \, u_{2}^{(i_2)}, \ldots, \,u_{\ell}^{(i_\ell)})\,. \label{eq:vl}
\end{equation}
For any permutation $\sigma \in \mathcal{S}_\ell$, define $\mathcal{F}(\sigma) := \{(i_1, i_2, \ldots, i_\ell): i_j \in \{\sigma(j), 0\} \}$.
Since $u_{i_1}^{(j)}$ and $u_{i_2}^{(j)}$ are linearly dependent, we can rewrite~\eqref{eq:vl} as
\begin{align}
    \vol(\proj X) &\geq \vol(\,u_{1}^{(0)}, \, u_{2}^{(2)}, \ldots, \,u_{\ell}^{(\ell)}) \; - \sum_{\tau \in \bigcup_{\sigma} \mathcal{F}(\sigma) \backslash I_C} \vol(\,u_{1}^{(\tau_1)}, \, u_{2}^{(\tau_2)}, \ldots, \,u_{\ell}^{(\tau_\ell)}) \notag \\
    &\geq \vol(\,u_{1}^{(0)}, \, u_{2}^{(2)}, \ldots, \,u_{\ell}^{(\ell)}) \; - \sum_{\sigma \in S_\ell}\sum_{\tau \in\mathcal{F}(\sigma) \backslash I_C} \vol(\,u_{1}^{(\tau_1)}, \, u_{2}^{(\tau_2)}, \ldots, \,u_{\ell}^{(\tau_\ell)})\,. \label{eq:a7}
\end{align}
For a fixed $\sigma \neq id_\ell$ and some $\tau \in \mathcal{F}(\sigma)$, let $I_{\tau}^0 := \{j \in [\ell]: \tau(j) = 0\}$ and $I_{\tau}^\sigma := [\ell]\backslash I_{\tau}^0$. Also, let $X_\tau = \{v_{\sigma(i)}: i \in I_\tau^\sigma\}$. Then
\begin{align*}
   \frac{\vol(\,u_{1}^{(\tau_1)}, \, u_{2}^{(\tau_2)}, \ldots, \,u_{\ell}^{(\tau_\ell)})}{\vol(\proj Y)} &= \frac{\vol\left( \bigcup_{i \in I_\tau^0} \, \{ u_{i}^{\perp} \}\right) \cdot \vol\left( \bigcup_{i \in I_\tau^\sigma} \, \{ a_{i, \sigma(i)} \proj v_{\sigma(i)})\}\right)}{\vol(\proj Y)}\\
   &\leq \prod_{i \in I_\tau^0} \norm{u_i^\perp} \cdot \prod_{i \in I_\tau^\sigma} |a_{i,\sigma(i)}| \cdot \frac{\vol\left( \bigcup_{i \in I_\tau^\sigma} \, \{  \proj v_{\sigma(i)})\}\right)}{\vol(\proj Y)} \\
   &\leq \prod_{i \in I_\tau^0} \norm{u_i^\perp} \cdot \prod_{i \in I_\tau^\sigma} |a_{i,\sigma(i)}| \cdot \frac{1}{\prod_{i \in I_\tau^0} \norm{v_{\sigma(i)}^\perp}} \tag{from Observation \ref{obs:log_vol_submodular}}\\
   &=  \prod_{i \in I_\tau^0} \frac{\norm{u_i}}{\norm{v^\perp_{\sigma(i)}}} \cdot \prod_{i \in I_\tau^\sigma} |a_{i,\sigma(i)}| = \prod_{i \in I_\tau^0} e^{-\Tilde{w}_0(u_i^\perp, v_{\sigma(i)})} \cdot \prod_{i \in I_\tau^\sigma} e^{-\Tilde{w}_0(u_i^\parallel, v_{\sigma(i)})}\,.
\end{align*}
Summing over all tuples $\tau \in \mathcal{F}(\sigma)$, we get
\begin{align}
    \sum_{\tau \in \mathcal{F}(\sigma)} \frac{\vol(\,u_{1}^{(\tau_1)}, \, u_{2}^{(\tau_2)}, \ldots, \,u_{\ell}^{(\tau_\ell)}) }{\vol(\proj Y)}  &\leq \prod_{i \in I_\tau^0} e^{-\Tilde{w}_0(u_i^\perp, v_{\sigma(i)})} \cdot \prod_{i \in I_\tau^\sigma} e^{-\Tilde{w}_0(u_i^\parallel, v_{\sigma(i)})} \notag \\
    &= \prod_{i = 1}^\ell \left(e^{-\Tilde{w}_0(u_i^\perp, v_{\sigma(i)}) }+ e^{-\Tilde{w}_0(u_i^\parallel, v_{\sigma(i)})}\right)\,. \label{eq:a8}
\end{align}
Let $W_C \in \mathbb{R}^{\ell \times \ell}$ be a matrix with $[W_C]_{i,j} = e^{-\Tilde{w}_0(u_i^\perp, v_{\sigma(i)}) }+ e^{-\Tilde{w}_0(u_i^\parallel, v_{\sigma(i)})}$. Then we can rewrite~\eqref{eq:a8} as
\begin{align}
    \sum_{\tau \in \mathcal{F}(\sigma)} \frac{\vol(\,u_{1}^{(\tau_1)}, \, u_{2}^{(\tau_2)}, \ldots, \,u_{\ell}^{(\tau_\ell)}) }{\vol(\proj Y)}  &\leq \prod_{i = 1}^\ell [W_C]_{i, \sigma(i)}\,. \label{eq:a13}
\end{align}
Similarly, the sum of tuples for the identity permutation $id_\ell$ gives
\begin{align}
   \frac{\vol(\,u_{1}^{(0)}, \, u_{2}^{(2)}, \ldots, \,u_{\ell}^{(\ell)})}{\vol(\proj Y)} &+ \sum_{\tau \in \mathcal{F}(id_\ell) \backslash I_C} \frac{\vol(\,u_{1}^{(\tau_1)}, \, u_{2}^{(\tau_2)}, \ldots, \,u_{\ell}^{(\tau_\ell)}) }{\vol(\proj X)}  \leq \prod_{i = 1}^\ell [W_C]_{i, i}\,. \label{eq:a9}
\end{align}
Plugging equation~\eqref{eq:a13} and equation~\eqref{eq:a9} in~\eqref{eq:a7}, we get
\begin{align}
    \frac{\vol(\proj X)}{\vol(\proj Y)} &\geq \frac{\vol(\,u_{1}^{(0)}, \, u_{2}^{(2)}, \ldots, \,u_{\ell}^{(\ell)})}{\vol(\proj Y)} -  \sum_{\sigma \neq {id}}\prod_{i = 1}^\ell [W_C]_{i, \sigma(i)} - \left(\prod_{i = 1}^\ell [W_C]_{i, i} - \frac{\vol(\,u_{1}^{(0)}, \, u_{2}^{(2)}, \ldots, \,u_{\ell}^{(\ell)})}{\vol(\proj Y)} \right) \notag \\
    &= \frac{2\vol(\,u_{1}^{(0)}, \, u_{2}^{(2)}, \ldots, \,u_{\ell}^{(\ell)})}{\vol(\proj Y)} -  \sum_{\sigma \neq {id}}\prod_{i = 1}^\ell [W_C]_{i, \sigma(i)} - \prod_{i = 1}^\ell [W_C]_{i, i} \notag  \\
    &= \frac{2\vol(\,u_{1}^{(0)}, \, u_{2}^{(2)}, \ldots, \,u_{\ell}^{(\ell)})}{\vol(\proj Y)} -  \perm(W_C) \,. \label{eq:a10}
\end{align}
From Lemma \ref{lem:permw}, we have
\begin{equation}
    \perm(W_C) \leq 1.56 \cdot \frac{\norm{u_1^\perp}}{\norm{v_1^\perp}} \cdot \prod_{i=2}^\ell |a_{i, i}|\,. \label{eq:a11}
\end{equation}
Note that \begin{equation}
     \frac{\vol(\,u_{1}^{(0)}, \, u_{2}^{(2)}, \ldots, \,u_{\ell}^{(\ell)})}{\vol(\proj Y)} = \frac{\norm{u_1^\perp}}{\norm{v_1^\perp}} \cdot \prod_{i=2}^\ell |a_{i, i}|\,. \label{eq:a12}
\end{equation}
Inserting the bounds from~\eqref{eq:a11} and~\eqref{eq:a12} in~\eqref{eq:a10} gives
\begin{align*}
    \frac{\vol(\proj X)}{\vol(\proj Y)}
    &\geq \frac{\norm{u_1^\perp}}{\norm{v_1^\perp}} \cdot \prod_{i=2}^\ell |a_{i, i}|\cdot (2 - 1.56) = 0.44 \cdot \frac{\norm{u_1^\perp}}{\norm{v_1^\perp}} \cdot \prod_{i=2}^\ell  |a_{i, i}|\,.
\end{align*}
Since $C$ is an $\Tilde{f}$-Violating cycle, $e^{-\Tilde{w}_0(C)} = \frac{\norm{u_1^\perp}}{\norm{v_1^\perp}} \cdot \prod_{i=2}^\ell |a_{i, i}| \geq \Tilde{f}(\ell)$. Therefore, \begin{align*}
   \frac{\vol(S\Delta C)}{ \vol(S)} = \frac{\vol(\proj X)}{\vol(\proj Y)}
    &\geq 0.44 \cdot \Tilde{f}(\ell) \geq 2 \,.
\end{align*}
\end{proof}

\subsection{Miscellaneous Lemmas}
\begin{lemma} \label{lem:permw}
Let $C =  (v_0 \rightarrow u^\perp_1 \rightarrow v_1 \rightarrow u^\parallel_2 \rightarrow v_2\rightarrow \ldots  u^\parallel_{\ell} \rightarrow v_{0})$ be a minimal $\Tilde{f}$-violating cycle in $\widetilde{G}(S)$ with $\ell \geq 2$ and let $W_C$ be a matrix with $[W_C]_{i,j} = |a_{i,j}| + \frac{\norm{u_i^\perp}}{\norm{v_j^\perp}}$, then \begin{equation*}
   \perm(W_C) \leq 1.56 \cdot \frac{\norm{u_1^\perp}}{\norm{v_1^\perp}} \cdot \prod_{i=2}^\ell |a_{i, i}| \,.
\end{equation*}
\end{lemma}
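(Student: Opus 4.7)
The plan is to distribute each factor $[W_C]_{i,\sigma(i)} = |a_{i,\sigma(i)}| + \frac{\norm{u_i^\perp}}{\norm{v_{\sigma(i)}^\perp}}$ into its two summands, rewriting the permanent as
\[ \perm(W_C) = \sum_{\sigma \in S_\ell} \sum_{T \subseteq [\ell]} \Big( \prod_{i \in T} \frac{\norm{u_i^\perp}}{\norm{v_{\sigma(i)}^\perp}} \Big) \Big( \prod_{i \notin T} |a_{i,\sigma(i)}| \Big). \]
Each $(\sigma,T)$ summand admits a combinatorial interpretation: reading off the cycle decomposition of $\sigma$ together with the backward-arc matching supplied by $C$, the summand equals $\prod_{D} e^{-\Tilde{w}_0(D)}$, where $\{D\}$ is a disjoint union of directed cycles in $\widetilde{G}(S)$ in which each $u_i$ appears as $u_i^\perp$ if $i \in T$ and as $u_i^\parallel$ otherwise. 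The pair $(\sigma,T)=(\mathrm{id},\{1\})$ reproduces exactly the target term $\frac{\norm{u_1^\perp}}{\norm{v_1^\perp}} \prod_{i=2}^\ell |a_{i,i}| = e^{-\Tilde{w}_0(C)}$; the task is to show that the remaining summands combine to at most $0.56$ times the target.

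To bound a single sub-cycle $D$ in such a decomposition, I distinguish three principal cases. (i) If $D$ has no perpendicular vertex, then $D$ is a cycle in $G(S)$, and since Stage~1 of the algorithm found no $f$-violating cycle we have $e^{-w_0(D)} \leq f(|D|/2)$. (ii) If $D$ has at least two perpendicular vertices and strictly fewer than $2\ell$ hops, Lemma~\ref{lem:parts} gives $e^{-\Tilde{w}_0(D)} \leq \prod_s \Tilde{f}(x_s) \leq \Tilde{f}(|D|/2)$. (iii) If $D$ has exactly one perpendicular vertex and $V(D)\subsetneq V(C)$, the minimality of $C$ yields $e^{-\Tilde{w}_0(D)} \leq \Tilde{f}(|D|/2)$. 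After bounding each sub-cycle, I normalize by the target using the $\Tilde f$-violating property $\frac{\norm{u_1^\perp}}{\norm{v_1^\perp}}\prod_{i=2}^\ell |a_{i,i}| \geq \Tilde{f}(\ell) = (\ell!)^{11}$. Since for any partition $\ell = \sum x_s$ one has $\prod_s \Tilde{f}(x_s) \ll \Tilde{f}(\ell)$, each such $(\sigma,T)$ contributes only a small fraction of the target.

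Summing these bounds over all $(\sigma,T)$ pairs, the crude count $\ell! \cdot 2^\ell$ is dwarfed by $\Tilde{f}(\ell) = (\ell!)^{11}$, so the total non-target contribution is a rapidly decaying geometric-type series; this is precisely the reason the large exponent $11$ was built into $\Tilde{f}$. The main obstacle is the one case still unaddressed: a sub-cycle $D$ with exactly one perpendicular vertex whose vertex set equals $V(C)$ but whose cyclic ordering differs from $C$. Strict-subset minimality does not apply directly. To handle it, I would view such a $D$ as obtained from $C$ by iteratively replacing arcs of $C$ with chord arcs; each such swap cuts $C$ into a proper sub-cycle (to which minimality applies, yielding a bound on the chord weight via (i)--(iii) above) and a smaller residual cycle. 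Multiplying the resulting chord bounds shows $e^{-\Tilde{w}_0(D)}$ is still dominated by the target, completing the estimate and yielding the final constant $1.56 = 1 + (\text{geometric tail})$.
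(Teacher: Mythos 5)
Your expansion of $\perm(W_C)$ over pairs $(\sigma,T)$ and the interpretation of each summand as a product of cycle weights in $\widetilde{G}(S)$ is sound, and it is the same underlying combinatorics the paper uses (the subset expansion over $T$ appears, for the identity permutation only, inside Lemma~\ref{lem:diag}; the sum over permutations is done, after a rescaling, in Lemma~\ref{lem:perm}). But the case you flag as unaddressed is precisely the crux, and your proposed fix does not reach it. The hardest terms are already the ones with $\sigma=\mathrm{id}$ and $T=\{x\}$, $x\neq 1$: the resulting cycle $C_x$ traverses exactly the same arcs as $C$ and uses no chords at all; only the perpendicular label moves from $u_1$ to $u_x$. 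Since $u_x^\perp\notin V(C)$, subset-minimality of $C$ says nothing about $C_x$, and $C_x$ has $2\ell$ hops, so the no-shorter-violating-cycle property says nothing either; your chord-replacement surgery has nothing to operate on because there are no chords. The paper's Lemma~\ref{lem:diag} handles these terms by a pairing argument with $C$ itself: swapping the labels splits $\tilde{w}_0(C)+\tilde{w}_0(C_x)$ into a two-perpendicular cycle (bounded via Lemma~\ref{lem:parts}) plus an all-parallel cycle (bounded because Stage~1 found no $f$-violating cycle), and then divides through by $e^{-\tilde{w}_0(C)}\geq \tilde{f}(\ell)$. That "compare against $C$ and divide by the target" step is the missing idea, and it is also what your case~(iii) quietly needs in general, since a sub-cycle containing $u_x^\perp$ with $x\neq 1$ is not vertex-contained in $C$ (one must use the hop-based form of minimality instead).

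The second gap is the summability claim. With per-term bounds of the form $\prod_s\tilde{f}(x_s)$, $\sum_s x_s=\ell$, normalized once by $\tilde{f}(\ell)$, the comparison "$2^\ell\,\ell!\ll(\ell!)^{11}$" does not close the argument: a full-length decomposition cycle with two perpendicular vertices at gaps $(1,\ell-1)$ is only bounded by $\tilde{f}(1)\tilde{f}(\ell-1)=2((\ell-1)!)^{11}$, i.e.\ it gains merely a factor of order $\ell^{-11}$ against the target, while there are on the order of $(\ell-1)!\,2^{\ell}$ such $(\sigma,T)$ pairs; the resulting union bound is roughly $(\ell-1)!/\ell^{9}$ and diverges already for moderate $\ell$. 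The decay the paper exploits is per chord, not per term: each wrap-around entry is bounded by $\tilde{f}(\ell-j+i)/\tilde{f}(\ell)$ (chord bound combined with the violating property of $C$), so after the rescaling to $B_\ell$ every exceedance of a permutation carries its own factor $\tilde{f}(\ell)^{-1}$, and the remaining enumeration is controlled through the fixed-point and exceedance statistics of derangements ($T(n,k)$, Lemma~\ref{lem:perm}). Without this per-chord normalization and the finer count, your "geometric tail" does not exist, and the constant $1.56$ (obtained in the paper as $1.3\times 1.2$ from Lemmas~\ref{lem:perm} and~\ref{lem:diag}) cannot be recovered.
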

\begin{proof}
Define $z_1 := \frac{\norm{u_1^\perp}}{\norm{v_1^\perp}}$, $p_1 = \perp$, and $z_i := |a_{i,i}|$, $p_i = \parallel$ for all $i \geq 2$.

We show upper bounds on the absolute value of each entry of $W_C$ as a function of $z_i$'s. Consider the $i,j$-th entry of $W_C$. For $i > j$, let $C_{i,j}^\perp := (u_i^\perp \rightarrow v_j \rightarrow u_{j+1}^{p_{j+1}} \rightarrow v_{j+1}\rightarrow \ldots  v_{i-1} \rightarrow u_i^{\perp})$. $C_{i,j}^{\perp}$ is a cycle of with $2(i-j)$ hops and $V(C_{i,j}^\perp) \subset V(C)$. $C$ being a minimal $\Tilde{f}$-violating cycle implies that $C_{i,j}^\perp$ is not an $\Tilde{f}$-violating cycle. Therefore, $e^{-\tilde{w}_0(C_{i,j}^\perp)} = \frac{\norm{u_i^\perp}}{\norm{v_j^\perp}} \cdot \prod_{s = j+1}^{i-1} z_s < \Tilde{f}(i-j)$.
This implies
\begin{equation}
    \frac{\norm{u_i^\perp}}{\norm{v_j^\perp}} < \frac{\Tilde{f}(i-j)}{\prod_{s = j+1}^{i-1} z_s}\,.
\end{equation}
Similarly for $j = \ell$, we have $ \frac{\norm{u_i^\perp}}{\norm{v_0^\perp}} < \frac{\Tilde{f}(i)}{\prod_{s = 1}^{i-1} z_s}\,.$

Similarly, let $C_{i,j}^\parallel := (u_i^\parallel \rightarrow v_j \rightarrow u_{j+1}^{p_{j+1}} \rightarrow v_{j+1}\rightarrow \ldots  v_{i-1} \rightarrow u_i^{\parallel})$. Again, using the fact that $C_{i,j}^\parallel$ is not $\Tilde{f}$-Violating we get \begin{equation}
    |a_{i, j}| < \frac{\Tilde{f}(i-j)}{\prod_{s = j+1}^{i-1} z_s} \,,
\end{equation}
for any $i > j$ and for $j = \ell$, we have $ a_{i, 0} < \frac{\Tilde{f}(i)}{\prod_{s = 1}^{i-1} z_s}\,.$

For $i < j < \ell$, let $C^\perp_{i,j} := (v_0 \rightarrow u_1^{p_i} \rightarrow v_1\rightarrow \ldots u_i^{\perp} \rightarrow v_j\rightarrow \ldots  u_{\ell}^{p_{\ell}} \rightarrow v_{0})$. Again, $C^\perp_{i,j}$ is a cycle with $2(\ell - j+i)$ hops and which is not $\Tilde{f}$-violating. Therefore, \begin{equation}
   \frac{\norm{u_i^{\perp}}}{\norm{v_j^{\perp}}} \cdot \prod_{s = 1}^{i-1} z_s \cdot \prod_{s = j+1}^{\ell} z_s < \Tilde{f}(\ell - j + i). \label{eq:2}
\end{equation}
Since $C$ is an $\Tilde{f}$-violating cycle, we also have \begin{equation}
    e^{-\tilde{w}_0(C)} = \prod_{s = 1}^\ell z_s > \Tilde{f}(\ell)\,. \label{eq:4}
\end{equation}
Combining~\eqref{eq:2} and~\eqref{eq:4} gives
 \begin{align*}
     \frac{\norm{u_i^{\perp}}}{\norm{v_j^{\perp}}} &< \frac{\Tilde{f}(\ell-j+i)}{\Tilde{f}(\ell)} \cdot \prod_{s=i}^j z_s \,.
\end{align*}
Similarly, for $i < j < \ell$,
 \begin{align*}
     |a_{i, j}| &< \frac{\Tilde{f}(\ell-j+i)}{\Tilde{f}(\ell)} \cdot \prod_{s=i}^j z_s \,.
\end{align*}\
Define $x_1 := (|a_{1,1}| + \frac{\norm{u_1^\perp}}{\norm{v_1^\perp}})/ \frac{\norm{u_1^\perp}}{\norm{v_1^\perp}} $ and $x_i :=(|a_{i,i}| + \frac{\norm{u_i^\perp}}{\norm{v_i^\perp}})/ |a_{i,i}|$ for $i \geq 2$. Then $x_i \geq 1$ and the $i$-th diagonal entry of $W_C$ is $x_i \cdot z_i$.

Let $B_{\ell}$ be the matrix obtained by applying the following operations to $W_C$ 
\begin{itemize}
    \item Multiply the last column by $z_1$ and for $j > 1$, divide the $j$-th column by $\prod_{s=2}^{j} z_s$
    \item Divide the first row by $z_1$ and for $i > 1$, multiply the $i$-th row by $\prod_{s = 2}^{i-1} z_s$
    \item Divide the last column by $\prod_{i=1}^\ell z_i$.
    \item Divide the $j$-th column by $x_j$.
\end{itemize}
Then $|\perm(W_C)| = (\prod_{i=1}^\ell x_i z_i) \cdot |\perm(B_{\ell})|$, and $B_\ell$ satisfies the following properties:
\begin{itemize}
    \item $b_{i,i}$ = 1 for all $i \in [\ell]$,
    \item $|b_{i,j}| \leq 2 \cdot \Tilde{f}(i-j)/x_j \leq 2\cdot  \Tilde{f}(i-j)$ for all $j < i \leq \ell$, and
    \item $|b_{i,j}| \leq 2\cdot \Tilde{f}(\ell - j + i)/(\Tilde{f}(\ell)x_j) \leq 2\cdot \Tilde{f}(\ell - j + i)/\Tilde{f}(\ell)$ for all $i < j \leq \ell$.
\end{itemize}
If $\ell = 1$, $\perm(B_\ell) = 1$ and for $\ell \geq 2$, Lemma \ref{lem:perm} gives $\perm(B_\ell) \leq 1.3$. Therefore, we have \begin{equation*}
    \perm(W_C) \leq 1.3 \cdot \prod_{i=1}^\ell x_i z_i = 1.3\cdot  \prod_{i=1}^\ell \left(|a_{i,i}| + \frac{\norm{u_i^\perp}}{\norm{v_i^\perp}}\right) \leq 1.3 \cdot 1.2 \cdot \frac{\norm{u_1^\perp}}{\norm{v_1^\perp}} \cdot \prod_{i=2}^\ell |a_{i, i}| = 1.56 \cdot e^{-\tilde{w}_0(C)}\;,
\end{equation*}
where the last inequality follows from Lemma \ref{lem:diag}.
\end{proof}

\begin{lemma}\label{lem:diag}
If $C =  (v_0 \rightarrow u^\perp_1 \rightarrow v_1 \rightarrow u^\parallel_2 \rightarrow v_2\rightarrow \ldots  u^\parallel_{\ell} \rightarrow v_{0})$ is a minimal $\Tilde{f}$-violating cycle with $\ell \geq 2$, then \begin{equation}
    \prod_{i=1}^\ell \left(|a_{i,i}| + \frac{\norm{u_i^\perp}}{\norm{v_i^\perp}}\right) \leq 1.2 \cdot e^{-\tilde{w}_0(C)}. \label{eq:r1}
\end{equation}
\end{lemma}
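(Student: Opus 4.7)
The approach is to rewrite the desired inequality as a product of small perturbations from~$1$ and bound each perturbation using the minimality of $C$ together with the fact that Stage~1 of Algorithm~\ref{alg:exchr} failed (so no $f$-violating cycle exists in $G(S)$). Relabel $v_0$ as $v_\ell$ and write $a_i := |a_{i,i}|$ and $r_i := \norm{u_i^\perp}/\norm{v_i^\perp}$. Unpacking the forward-arc weights of $C$ gives $e^{-\tilde{w}_0(C)} = r_1 \prod_{i=2}^\ell a_i$, so dividing both sides of the claim by this quantity reduces it to
$$\left(1 + \frac{a_1}{r_1}\right)\prod_{i=2}^\ell \left(1 + \frac{r_i}{a_i}\right) \le 1.2.$$

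To bound the first factor, I would form the cycle $C^\parallel$ in the (unaugmented) exchange graph $G(S)$ obtained from $C$ by replacing $u_1^\perp$ with $u_1 \in U\setminus S$. All backward arcs of $C^\parallel$ are inherited from those of $C$, so $C^\parallel$ is a valid cycle in $G(S)$ with $2\ell$ hops and weight $e^{-w_0(C^\parallel)} = \prod_{i=1}^\ell a_i$. Since the algorithm reached Stage~2, no $f$-violating cycle exists in $G(S)$, hence $\prod_{i=1}^\ell a_i \le f(\ell) = 2(\ell!)^3$. Combining this with the fact that $C$ is $\tilde{f}$-violating in $\tilde{G}(S)$, i.e.\ $r_1\prod_{i=2}^\ell a_i > \tilde{f}(\ell) = (\ell!)^{11}$, and dividing yields $a_1/r_1 < 2/(\ell!)^8$.

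To bound each remaining factor, for $i \ge 2$ I would form the cycle $C^{(i)}$ by replacing the parallel vertex $u_i^\parallel$ of $C$ with its perpendicular twin $u_i^\perp$; the backward arc $v_{i-1} \to u_i^\perp$ exists because $S - v_{i-1} + u_i \in \mathcal{I}$, which is already witnessed by $C$. Then $C^{(i)}$ has $2\ell$ hops and contains exactly two perpendicular vertices $u_1^\perp$ and $u_i^\perp$, which split the cycle into sections of $2(i-1)$ and $2(\ell-i+1)$ hops. Because $C$ is a minimum-hop $\tilde{f}$-violating cycle, no $\tilde{f}$-violating cycle in $\tilde{G}(S)$ has fewer than $2\ell$ hops, so Lemma~\ref{lem:parts} applies to $C^{(i)}$ and gives $e^{-\tilde{w}_0(C^{(i)})} \le \tilde{f}(i-1)\,\tilde{f}(\ell-i+1)$. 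On the other hand, a direct expansion of forward-arc weights yields $e^{-\tilde{w}_0(C^{(i)})} = e^{-\tilde{w}_0(C)} \cdot r_i/a_i$, and combining this with $e^{-\tilde{w}_0(C)} > \tilde{f}(\ell)$ gives $r_i/a_i < \tilde{f}(i-1)\tilde{f}(\ell-i+1)/\tilde{f}(\ell)$.

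Finally, one verifies the numerical inequality by elementary estimates. Using $1 + x \le e^x$, it suffices to check $a_1/r_1 + \sum_{i=2}^\ell r_i/a_i \le \log(1.2)$. For $\ell = 2$ the left-hand side is at most $1/128 + 1/512 < 0.01$; for $\ell \ge 3$ every summand is at most a small constant divided by $(\ell!)^8$ or $(\ell!)^{11}/((i-1)!(\ell-i+1)!)^{11}$, both of which decay rapidly in $\ell$, so the total is well below $\log 1.2$. The main obstacle is not the final numerics but correctly identifying the ``sibling'' cycles $C^\parallel$ and $C^{(i)}$ whose weights are constrained, respectively, by the failure of Stage~1 and by Lemma~\ref{lem:parts}; once these are in place, the proof reduces to the clean bookkeeping above.
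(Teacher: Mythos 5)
Your proposal is correct, and while it rests on the same two external ingredients as the paper's proof --- the failure of Stage~1 of Algorithm~\ref{alg:exchr}, which bounds the all-parallel copy of $C$ in $G(S)$ by $f(\ell)$, and Lemma~\ref{lem:parts}, which bounds sibling cycles of $C$ carrying a second perpendicular vertex --- the bookkeeping is genuinely different and in fact cleaner. The paper expands $\prod_{i=1}^\ell\bigl(|a_{i,i}|+\norm{u_i^\perp}/\norm{v_i^\perp}\bigr)$ additively into $2^\ell$ terms $w(Z)$ indexed by the set $Z$ of indices taken perpendicular, and must then treat each cardinality separately: $Z=\emptyset$ via Stage~1, $|Z|=1$ via a splicing argument combining $C_x$ with $C$ into two cycles, and $|Z|\geq 2$ via Lemma~\ref{lem:parts} with $|Z|$ perpendicular vertices together with a counting argument over the possible gap patterns. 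You instead factor out the main term and use the exact identity $\prod_{i=1}^\ell(a_i+r_i)/e^{-\tilde{w}_0(C)}=\bigl(1+a_1/r_1\bigr)\prod_{i=2}^\ell\bigl(1+r_i/a_i\bigr)$ (in your notation $a_i=|a_{i,i}|$, $r_i=\norm{u_i^\perp}/\norm{v_i^\perp}$), so that only two per-index bounds are needed: $a_1/r_1\leq f(\ell)/\tilde{f}(\ell)=2/(\ell!)^{8}$ from Stage~1 plus the violation $e^{-\tilde{w}_0(C)}\geq\tilde{f}(\ell)$, and $r_i/a_i\leq \tilde{f}(i-1)\tilde{f}(\ell-i+1)/\tilde{f}(\ell)$ from the two-perpendicular case of Lemma~\ref{lem:parts} applied to your $C^{(i)}$, which is exactly the paper's cycle $C_x^{(1)}$. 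All higher-order interaction terms (the paper's $|Z|\geq 2$ contributions) are then absorbed automatically by multiplicativity via $\prod(1+\epsilon_i)\leq e^{\sum\epsilon_i}$, and your numerics close with ample room. The one point worth making explicit is a point the paper also leaves implicit: applying Lemma~\ref{lem:parts} to the $2\ell$-hop cycles $C^{(i)}$ requires that no $\tilde{f}$-violating cycle with fewer than $2\ell$ hops exists in $\widetilde{G}(S)$, i.e.\ that the ``minimal'' cycle returned by the algorithm is minimum-hop; this is how the algorithm finds it (as in Algorithm~\ref{alg:minf}), and it is the same convention the paper relies on when invoking Lemma~\ref{lem:parts} for its cycles $C_Z$. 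With that understood, your argument is complete, and it buys a shorter proof that needs neither the subset-counting estimate nor the splicing step for the single-perpendicular terms.
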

\begin{proof}
Since $C$ is an $\Tilde{f}$-violating cycle, \begin{equation}
    e^{-\Tilde{w}_0(C)} = \frac{\norm{u_1^\perp}}{\norm{v_1^\perp}} \cdot \prod_{i=2}^\ell |a_{i, i}| \geq \Tilde{f}(\ell). \label{eq:a4}
\end{equation}

Define $x_{i, 0} := \frac{\norm{u_i^\perp}}{\norm{v_i^\perp}}$ and $x_{i, 1} := |a_{i,i}|$ for all $i \in [\ell]$. Then we can decompose the LHS of~\eqref{eq:r1} as follows:
\begin{align*}
      \prod_{i=1}^\ell \left(|a_{i,i}| + \frac{\norm{u_i^\perp}}{\norm{v_i^\perp}}\right) = \sum_{Z \subseteq [\ell]} \prod_{i \in Z}x_{i, 0} \cdot \prod_{i \notin Z} x_{i, 1} =: \sum_{Z \subseteq [\ell]}  w(Z) ,
\end{align*}
We now upper bound $w(Z)$ for all $Z \neq \{1\}$ based on the cardinality of $Z$.  
For the empty set, i.e., $|Z| = 0$, since Algorithm \ref{alg:exchr} did not return a cycle in Stage 1, $w(Z) =  \prod_{i=1}^\ell |a_{i,i}| = e^{-w_0(C)} \leq f(\ell)$.

Consider a subset $Z$ with $|Z| = k$ where $k \geq 2$. We define a cycle $C_Z$ such that $C_Z$ contains the same arcs as $C$ but a vertex $u_i \in C\backslash S$ is present as a perpendicular vertex in $C_Z$ if and only if $u \in Z$. Then $w(Z) = e^{-\Tilde{w}_0(C_Z)}$. Let the distance between successive perpendicular vertices in $C_Z$ be $b_1, b_2, \ldots, b_k$. Then $\sum_{i=1}^k b_i = \ell$ and by Lemma \ref{lem:parts},  $w(Z) =  e^{-\Tilde{w}_0(C_Z)} \leq \prod_{i=1}^k \Tilde{f}(b_i)$.

Since $C_Z$ has the same structure as $C$, we can completely characterize $C_Z$ by the first vertex among $u_1, u_2, \ldots, u_\ell$ which is perpendicular, and the distances between successive perpendicular vertices in $C_Z$, i.e., $b_1, b_2, \ldots, b_k$. There are only $\ell$ options for the first perpendicular vertex. Therefore,
\begin{align}
   \sum_{Z:|Z| = k} w(Z) \leq \ell \cdot \sum_{\substack{\sum_{j=1}^k b_j = \ell \\
   b_1, b_2, \ldots, b_k \geq 1}} \prod_{i=1}^k \Tilde{f}(b_i) \leq \ell \cdot \binom{\ell-1}{k-1}\cdot \Tilde{f}(\ell-k+1) \leq \frac{\Tilde{f}(\ell)}{\ell^8}\,. \label{eq:a5}
\end{align}

For a subset with one element, i.e., $Z = \{x\}$, $x \neq 1$, $-\log w(Z)$ is the same as the weight of the cycle
\begin{equation*}
    C_x := (v_0 \rightarrow u_1^\parallel \rightarrow v_1 \rightarrow \ldots u_x^\perp \rightarrow v_2 \rightarrow \ldots \rightarrow u_\ell^\parallel \rightarrow v_0) \,.
\end{equation*}
Combining $C_x$ with $C$, we get two cycles
\begin{align*}
    C_x^{(1)} &= (v_0 \rightarrow u_1^\perp \rightarrow v_1 \rightarrow \ldots u_x^\perp \rightarrow v_2 \rightarrow \rightarrow \ldots u_\ell^\parallel \rightarrow v_0) \\
     C_x^{(2)} &= (v_0 \rightarrow u_1^\parallel \rightarrow v_1 \rightarrow \ldots u_x^\parallel \rightarrow v_2 \rightarrow \ldots \rightarrow u_\ell^\parallel \rightarrow v_0)\,, 
\end{align*}
such that $\Tilde{w}_0(C) \cdot \Tilde{w}_0(C_x) = \Tilde{w}_0(C_x^{(1)}) \cdot \Tilde{w}_0(C_x^{(2)}))$.
By Lemma \ref{lem:parts}, we know that $e^{-\Tilde{w}_0(C_x^{(1)})} \leq \Tilde{f}(x-1)\Tilde{f}(\ell-x+1)$. Also, since $C_x^{(2)}$ only contains parallel vectors, $\Tilde{w}_0(C_x^{(2)}) =w_0(C_x^{(2)})$; and since $C_x^{(2)}$ is not $f$-violating, $e^{-\Tilde{w}_0(C_x^{(2)})} \leq f(\ell)$.
Therefore, $e^{-\Tilde{w}_0(C) - \Tilde{w}_0(C_x)} \leq \Tilde{f}(x-1)\Tilde{f}(\ell-x+1) \cdot f(\ell)$. $C$ being $\tilde{f}$-violating implies $e^{-\Tilde{w}_0(C)} \geq \Tilde{f}(\ell)$, and therefore $e^{-\Tilde{w}_0(C)} \leq  \Tilde{f}(x-1)\Tilde{f}(\ell-x+1) f(\ell)/\Tilde{f}(\ell)$.

Summing over all choices of $x$, we get
\begin{equation}
   \sum_{Z, |Z| = 1} w(Z) =  \sum_{x=2}^\ell e^{-\Tilde{w}_0(C_x)} \leq \sum_{x = 2}^\ell \frac{\Tilde{f}(x-1)\Tilde{f}(\ell-x+1) f(\ell)}{\Tilde{f}(\ell)} \leq \frac{4f(\ell)}{\ell} \,. \label{eq:a6}
\end{equation}
Combining~\eqref{eq:a4}, ~\eqref{eq:a5}, and~\eqref{eq:a6}, we get
\begin{align*}
    \prod_{i=1}^\ell \left(|a_{i,i}| + \frac{\norm{u_i^\perp}}{\norm{v_i^\perp}}\right) &= \sum_{Z \subseteq [\ell]}  w(Z) \leq w(\emptyset) + \sum_{Z: |Z| = 1} w(Z) + \sum_{Z: |Z| \geq 2} w(Z) \\
    &\leq f(\ell) + e^{-\Tilde{w}_0(C)} + \frac{4 f(\ell)}{\ell}  + \frac{\Tilde{f}(\ell)}{\ell^8}\,.
\end{align*}
Since $e^{-\Tilde{w}_0(C)} \geq \Tilde{f}(\ell) \geq (\ell !)^4 \cdot f(\ell)$,
\begin{align*}
    \prod_{i=1}^\ell \left(|a_{i,i}| + \frac{\norm{u_i^\perp}}{\norm{v_i^\perp}}\right) &\leq e^{-\Tilde{w}_0(C)} \left(1+ \frac{3}{(\ell !)^4} + \frac{1}{\ell^8}\right) \leq 1.2 \cdot e^{-\Tilde{w}_0(C)}.
\end{align*}
\end{proof}

\begin{lemma} \label{lem:perm}
Let $B_\ell \in \mathbb{R}^{\ell \times \ell}$ satisfy the following properties:
\begin{itemize}
    \item $b_{i,i}$ = 1 for all $i \in [\ell]$,
    \item $0 \leq b_{i,j} \leq  2 \cdot \Tilde{f}(i-j)$ for all $j < i \leq \ell$, and
    \item $0 \leq b_{i,j} \leq 2 \cdot \Tilde{f}(\ell - j + i)/\Tilde{f}(\ell)$ for all $i < j \leq \ell$.
\end{itemize}
Then the permanent of $B_\ell$ is at most $1.13$.
\end{lemma}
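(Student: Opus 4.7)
The plan is to expand the permanent as $\perm(B_\ell) = 1 + \sum_{\sigma \neq \mathrm{id}} \prod_{i=1}^\ell b_{i,\sigma(i)}$ (the identity permutation contributes $1$ because every diagonal entry equals $1$) and show that the sum over non-identity permutations is at most $0.13$. Since fixed points of $\sigma$ contribute factors of $b_{i,i}=1$, it suffices to bound the contribution of the non-trivial cycles of $\sigma$, one cycle at a time.

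For a single $k$-cycle $(i_1 \to i_2 \to \cdots \to i_k \to i_1)$ with $k \geq 2$, let $u$ denote the number of ``ascents'' (indices $r$ with $i_r < i_{r+1}$), so $1 \leq u \leq k-1$. I would define $\delta_r = i_r - i_{r+1}$ at descents and $\delta_r = \ell - (i_{r+1} - i_r)$ at ascents. Each $\delta_r$ lies in $\{1,\dots,\ell-1\}$, and a direct computation (using that $\sum_r(i_r - i_{r+1}) = 0$ around a cycle) gives $\sum_{r=1}^k \delta_r = u\ell$. The bounds assumed on $B_\ell$ then translate cleanly in terms of the $\delta_r$:
\[ \prod_{r=1}^k b_{i_r,i_{r+1}} \leq \frac{2^k \prod_{r=1}^k \tilde{f}(\delta_r)}{\tilde{f}(\ell)^u}. \]
Using $\tilde{f}(n) \leq 2(n!)^{11}$ for all $n \geq 1$ together with the elementary inequality $\prod_r \delta_r! \leq (u\ell - k + 1)!$ (the maximum of $\prod \delta_r!$, subject to $\delta_r \geq 1$ and $\sum \delta_r = u\ell$, is attained by concentrating the mass in a single $\delta_r$) yields the cycle-product bound
\[ \prod_{r=1}^k b_{i_r,i_{r+1}} \leq 4^k \left(\frac{(u\ell - k + 1)!}{(\ell!)^u}\right)^{11}. \]

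Finally, I would sum this bound over all non-identity permutations. The number of $k$-cycles on $[\ell]$ is $\binom{\ell}{k}(k-1)!$, and permutations with multiple disjoint non-trivial cycles contribute the product of the corresponding cycle bounds, which after regrouping by cycle type is dominated by a convergent exponential-type series in the single-cycle contributions. The dominant term comes from transpositions ($k=2$, $u=1$), contributing at most $\binom{\ell}{2} \cdot 16\,(\ell-1)!^{11}/(\ell!)^{11} = 8(\ell-1)/\ell^{10}$, which is bounded by $1/128$ at $\ell = 2$ and decays like $\ell^{-9}$ thereafter. All longer cycles and larger winding numbers $u$ carry at least one additional factor of $(\ell!)^{-11}$, so their combined contribution is negligible. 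The main obstacle is ensuring the estimate is uniform in $\ell \geq 2$ and in cycle type: in particular, when $u \geq 2$ the bound $\prod \delta_r! \leq (u\ell - k + 1)!$ is sometimes loose because the box constraint $\delta_r \leq \ell - 1$ becomes binding, so one must refine the maximization by capping entries at $\ell - 1$. Once these estimates are assembled, the total sum is at most $0.13$, establishing $\perm(B_\ell) \leq 1.13$.
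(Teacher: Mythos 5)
Your skeleton is essentially the paper's: expand $\perm(B_\ell)$ over non-identity permutations, attach to each arc of a cycle a quantity $\delta_r$ ($i_r-i_{r+1}$ at descents, $\ell-(i_{r+1}-i_r)$ at ascents) whose total is $u\ell$ where $u$ counts ascents (exceedances), bound entries by $2\tilde{f}(\delta_r)$ resp.\ $2\tilde{f}(\delta_r)/\tilde{f}(\ell)$, maximize $\prod_r \tilde{f}(\delta_r)$ under the sum constraint, and then count. The identity $\sum_r\delta_r=u\ell$ and the transposition/$u=1$ estimates are correct. However, the step you defer is exactly where the substance of the proof lies, and as written your estimate fails for $u\ge 2$. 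The uncapped maximization $\prod_r\delta_r!\le(u\ell-k+1)!$ ignores the constraint $\delta_r\le \ell-1$, and the resulting ``cycle-product bound'' $4^k\bigl((u\ell-k+1)!/(\ell!)^u\bigr)^{11}$ is not even $\le 1$ in general: for the cycle $1\to2\to3\to4\to1$ with $k=\ell=4$, $u=3$, it equals $4^4\bigl(9!/(4!)^3\bigr)^{11}\approx 256\cdot(26.25)^{11}$, which is astronomically large, whereas the true contribution (computed with the cap $\delta_r\le 3$) is minuscule. Hence your assertion that larger $u$ ``carries at least one additional factor of $(\ell!)^{-11}$'' does not follow from the bound you display; it only becomes true after the capped maximization you mention but do not carry out, together with a count, uniform in $\ell$, of how many cycles (and multi-cycle permutations) realize each pair $(k,u)$.

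That missing bookkeeping is what the paper's proof actually does: it groups permutations by the number $n$ of non-fixed points and the number $k$ of exceedances, uses $|\mathcal{S}_\ell(n,k)|=\binom{\ell}{n}T(n,k)$ with the derangement-number bound $T(n,k)<(2k+3)^n$ obtained from Eulerian numbers, and performs the factorial maximization with the cap via superadditivity of $\tilde{f}$, e.g.\ bounding $\prod_i\tilde{f}(x_i)$ by $\tilde{f}(\ell-1)^k\tilde{f}(2k-n+1)\tilde{f}(1)^{n-k-1}$ when $k\ge n/2$, and splitting into the regimes $k<n/2$, $k\ge n/2$, $k=1$, $k=n-1$ before summing. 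Your sketch also waves at the regrouping of permutations with several nontrivial cycles (``convergent exponential-type series'') without an argument, whereas the paper avoids this by counting whole permutations directly. So, while you have correctly located the obstacle, the $u\ge2$ contributions and the global summation are not established by your argument, and the conclusion $\perm(B_\ell)\le 1.13$ does not yet follow; this is a genuine gap rather than a cosmetic one.
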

\begin{proof}
For $\ell = 2$, $\perm(B_2) = 1 + b_{1,2} \cdot b_{2, 1} \leq 1 + \frac{4\tilde{f}(1)^2}{\tilde{f}(2)} < 1.1$. 

Now, consider the case when $\ell \geq 3$. Let $id_\ell$ denote the identity permutation on $\ell$ elements. Expanding the permanent of $B_\ell$ gives \begin{equation}
    \perm(B_\ell) =\sum_{\sigma \in \mathcal{S}_\ell} \prod_{i=1}^\ell b_{i, \sigma(i)} = 1 + \sum_{\sigma \in \mathcal{S}_\ell \backslash \{id_\ell\}} \prod_{i=1}^\ell b_{i, \sigma(i)} \,. \label{eq:d1}
\end{equation}
We categorize all permutations in $\mathcal{S}_\ell \backslash \{id_\ell\}$ based on the number of fixed points and the number of exceedances. The set of fixed points of a permutation $\sigma \in \mathcal{S}_\ell$ is defined as $\{i \in [\ell]: \sigma(i) = i\}$ and the exceedance of $\sigma$ is defined to be the number of indices $i$ such that $\sigma(i) > i$ (for more details, see Remark~\ref{rem:exc} and Lemma~\ref{lem:derangements}). Let $\mathcal{S}_\ell(n, k)$ denote the subset of $\mathcal{S}_\ell$ with $\ell - n$ fixed points and $k$ exceedances. Then 
\begin{equation*}
    |\mathcal{S}_\ell(n, k)| = \binom{\ell}{n} \cdot T(n, k),
\end{equation*}
where $T(n, k)$ is the derangement number defined in Remark~\ref{rem:exc}.

Since all permutations in $\mathcal{S}_\ell \backslash \{id_\ell\}$ have at most $\ell-2$ fixed points and at least $1$ exceedance, we can further expand~\eqref{eq:d1} as
\begin{equation}
    \perm(B_\ell) = 1 +  \sum_{n=2}^\ell \; \sum_{k=1}^{n-1} \sum_{\sigma \in \mathcal{S}_\ell(n, k)} \prod_{i=1}^\ell b_{i, \sigma(i)}  \,. \label{eq:pfmain}
\end{equation}

For a permutation $\sigma \in \mathcal{S}_\ell(n, k)$,  
\begin{equation*}
    \prod_{i=1}^\ell b_{i, \sigma(i)} = \prod_{i > \sigma(i)} b_{i, \sigma(i)} \cdot \prod_{i < \sigma(i)} b_{i, \sigma(i)}  \leq  \prod_{i > \sigma(i)} 2 \tilde{f}(i-\sigma(i)) \cdot \prod_{i < \sigma(i)} \frac{2\tilde{f}(\ell-\sigma(i)+i)}{\tilde{f}(\ell)}\,.
\end{equation*}
where the last inequality follows from the hypothesis of the Lemma.

Since $\sum_{i=1}^\ell i-\sigma(i) = 0$ for any permutation $\sigma$,  $\sum_{i > \sigma(i)}  i - \sigma(i) + \sum_{i < \sigma(i)} \ell - \sigma(i) + i  = \ell \cdot |\{i : \sigma(i) > i\}| = \ell \cdot k$ for any $\sigma \in \mathcal{S}_\ell(n, k)$. Therefore, if $\sigma \in \mathcal{S}_\ell(n, k)$, then there exist integers $1 \leq x_1, x_2, \ldots, x_n \leq n-1$ with $\sum_{i=1}^n x_i = \ell \cdot k$, such that
\begin{align*}
    \prod_{i=1}^\ell b_{i, \sigma(i)} \leq 2^n \cdot \frac{\prod_{i=1}^n \Tilde{f}(x_i)}{\Tilde{f}(\ell)^k}\,.
\end{align*}

Since $\tilde{f}$ satisfies $\tilde{f}(a+b) \geq \tilde{f}(a) \cdot \tilde{f}(b)$, under the constraints on $x_i$'s, for any $k \geq n/2$,
\begin{align*}
    \prod_{i=1}^\ell b_{i, \sigma(i)} \leq 2^n \cdot \frac{\Tilde{f}(\ell-1)^k \cdot \Tilde{f}(2k-n+1) \cdot \Tilde{f}(1)^{n-k-1}}{\Tilde{f}(\ell)^k}\,,
\end{align*}
and for $k < n/2$,
\begin{align*}
    \prod_{i=1}^\ell b_{i, \sigma(i)} \leq 2^n \cdot \frac{\Tilde{f}(\ell-1)^{k-1} \cdot \Tilde{f}(\ell-n+2k-1) \cdot \Tilde{f}(1)^{n-k}}{\Tilde{f}(\ell)^k} \,.
\end{align*}
Using the definition of $\Tilde{f}$, for any $k \geq n/2$,
\begin{align}
    \prod_{i=1}^\ell b_{i, \sigma(i)} \leq 2^{2n-k-1} \cdot \frac{((2k-n+1)!)^{11}}{\ell^{11k}} \, , \label{eq:p1}
\end{align}
and for $k < n/2$,
\begin{align}
    \prod_{i=1}^\ell b_{i, \sigma(i)} \leq 2^{2n-k} \cdot \frac{1}{\ell^{11(k-1)} \cdot (\ell \cdot (\ell-1) \ldots (\ell-n+2k))^{11} } \, .\label{eq:p2}
\end{align}

For some $k$ and $n$ with $k < n/2$, summing over all permutations in $\mathcal{S}_\ell(n,k)$ gives
\begin{align}
    \sum_{\sigma \in \mathcal{S}_\ell(n, k)} \prod_{i=1}^\ell b_{i, \sigma(i)} \leq \binom{\ell}{n} \cdot T(n,k) \cdot 2^{2n-k} \cdot \frac{1}{\ell^{11(k-1)} \cdot (\ell \cdot (\ell-1) \ldots (\ell-n+2k))^{11} } \, . \label{eq:p5}
\end{align}
We will bound the three terms of equation~\eqref{eq:p5}, namely $\binom{\ell}{n}$, $T(n,k)$, and $2^{2n-k}$ separately.

Expanding the first term, we get 
\begin{align}
   \binom{\ell}{n} \cdot \frac{1}{\ell^{(k-1)} \cdot (\ell \cdot (\ell-1) \ldots (\ell-n+2k))} = \frac{\ell \cdot (\ell-1) \ldots (\ell-n+1)}{n! \cdot \ell^{(k-1)} \cdot (\ell \cdot (\ell-1) \ldots (\ell-n+2k))} < \frac{1}{n! \cdot \ell^{k-1}} \,. \label{eq:p4}
\end{align}
For the third term, note that $k< n/2$ implies that $2^{2n-k} < 2^{3n-3k}$, and since $k \geq 1$, $\ell -n +2k > 2$. Using these two facts, we get 
\begin{equation}
     2^{2n-k} \cdot \frac{1}{\ell^{3(k-1)} \cdot (\ell \cdot (\ell-1) \ldots (\ell-n+2k))^{3} } < \frac{2^{3(n-k)}}{(\ell - n + 2k)^{3(n-k)}} < 1 \,. \label{eq:p3}
\end{equation} 

Plugging~\eqref{eq:p4} and~\eqref{eq:p3} in~\eqref{eq:p5}, we get
\begin{align}
    \sum_{\sigma \in \mathcal{S}_\ell(n, k)} \prod_{i=1}^\ell b_{i, \sigma(i)} \leq \frac{1}{n!\cdot \ell^{k-1}} \cdot T(n,k)  \cdot \frac{1}{\ell^{7(k-1)} \cdot (\ell \cdot (\ell-1) \ldots (\ell-n+2k))^{7} } \, .\label{eq:p6}
\end{align}

For $k = 1$, $T(n, k) = 1$ and therefore \begin{align}
    \sum_{\sigma \in \mathcal{S}_\ell(n, k)} \prod_{i=1}^\ell b_{i, \sigma(i)} &\leq  \frac{1}{n!} \cdot  \frac{1}{(\ell \cdot (\ell-1) \ldots (\ell-n+2))^{7}}\,. \label{eq:pf1}
\end{align}

For any $2 \leq k < n/2$,  using Lemma \ref{lem:derangements}, we have $T(n, k) \leq (2k+3)^{n}$. Since $k \geq 2$, $2k+3 \leq 2 \cdot 2k$, and therefore $T(n, k) \leq (2 \cdot 2k)^{n+2}$. Plugging this is~\eqref{eq:p6}, we get 
\begin{align}
    \sum_{\sigma \in \mathcal{S}_\ell(n, k)} \prod_{i=1}^\ell b_{i, \sigma(i)} &\leq \frac{1}{n! \cdot \ell^{k-1}} \cdot (2k)^{n} \cdot 2^{n} \cdot \frac{1}{\ell^{7(k-1)} \cdot (\ell \cdot (\ell-1) \ldots (\ell-n+2k))^{7} }\,. \label{eq:p7} 
\end{align}

Moreover $k < n/2$ implies that $n-k > n/2$, and as a result  $n \leq 2(n-k)$ and $ 2^{n} \cdot (2k)^{n} \leq 2^{2(n-k)} \cdot (2k)^{2(n-k)}$. Therefore,
\begin{align}
   2^{n} \cdot (2k)^{n}  &\cdot \frac{1}{\ell^{4(k-1)} \cdot (\ell \cdot (\ell-1) \ldots (\ell-n+2k))^4} \notag \\ &<  2^{2(n-k)} \cdot (2k)^{2(n-k)}  \cdot \frac{1}{\ell^{4(k-1)} \cdot (\ell \cdot (\ell-1) \ldots (\ell-n+2k))^4} \notag \\
    &< \frac{(2)^{2(n-k)}}{(\ell-n+2k)^{2(n-k)}}   \cdot \frac{(2k)^{2(n-k)}}{(\ell-n+2k)^{2(n-k)}} < 1, \label{eq:p8}
\end{align}
where the last inequality follows from  $2k \leq \ell -n + 2k$ and $2 \leq \ell -n + 2k$.

Combining~\eqref{eq:p7} and~\eqref{eq:p8}, we have for any $2 \leq k < n/2$,
\begin{equation}
     \sum_{\sigma \in \mathcal{S}_\ell(n, k)} \prod_{i=1}^\ell b_{i, \sigma(i)} \leq \frac{1}{n! \cdot  \ell^{k-1}} \cdot \frac{1}{\ell^{3(k-1)} \cdot (\ell \cdot (\ell-1) \ldots (\ell-n+2k))^3} \,. \label{eq:pf2}
\end{equation}

For a fixed $k$ and $n$ with $k \geq n/2$, summing over all permutations in $\mathcal{S}_\ell(n,k)$,
\begin{align}
    \sum_{\sigma \in \mathcal{S}_\ell(n, k)} \prod_{i=1}^\ell b_{i, \sigma(i)} &\leq \binom{\ell}{n} \cdot T(n,k) \cdot 2^{2n-k-1} \cdot \frac{((2k-n+1)!)^{11}}{\ell^{11k}}\,. \label{eq:p11}
\end{align}
We will again bound the three terms, namely $\binom{\ell}{n}$, $T(n,k)$, and $2^{2n-k-1}$ separately.

For the first term, since $n/2 \leq k$, $2n-k-1 \leq 3k-1$, and therefore
\begin{align*}
  2^{2n-k-1}  \cdot \frac{((2k-n+1)!)^3}{\ell^{3k}} &\leq  2^{3k-1}  \cdot  \frac{((2k-n+1)!)^3}{\ell^{3k}}\,.
\end{align*}
Since $2k-n+1 \leq k$, and $k+1 \leq \ell$,
\begin{align*}
  2^{3k-1}  \cdot \frac{((2k-n+1)!)^3}{\ell^{3k}} &\leq  \frac{1}{2}  \cdot  \frac{(2^k \, k!)^3}{(k+1)^{3k}} \,.
\end{align*}
For $k = 1, 2, 3, 4, 5$, $\frac{(2^k k!)^3}{(k+1)^{3k}} \leq 1$. For $k \geq 6$, $2^k k! \leq k^k$. Therefore, 
\begin{align}
  2^{2n-k-1}  \cdot \frac{((2k-n+1)!)^3}{\ell^{3k}} &\leq  \frac{1}{2} \,. \label{eq:p10}
\end{align}

Expanding the third term, 
\begin{align*}
   \binom{\ell}{n} \cdot \frac{((2k-n+1)!)^2}{\ell^{2k}} &= \frac{\ell\cdot (\ell-1) \ldots (\ell-n+1)}{n!} \cdot \frac{((2k-n+1)!)^2}{\ell^{2k}}\\
   &\leq \frac{1}{n!} \cdot \frac{((2k-n+1)!)^2}{\ell^{2k-n}} = \frac{1}{\ell (n-1)!} \cdot \frac{((2k-n+1)!)^2}{n\ell^{2k-n-1}} \leq \frac{(2k-n+1)!}{\ell (n-1)!} \,.
\end{align*}
Since $k+1 \leq n$, we have $2k-n+1 \leq n-1$, and therefore
\begin{align}
   \binom{\ell}{n} \cdot \frac{((2k-n+1)!)^2}{\ell^{2k}} &\leq \frac{(2k-n+1)!}{l(n-1)!} \leq \frac{1}{\ell} \,. \label{eq:p9}
\end{align}

Plugging in~\eqref{eq:p10} and~\eqref{eq:p9} in~\eqref{eq:p11}, 
\begin{align}
    \sum_{\sigma \in \mathcal{S}_\ell(n, k)} \prod_{i=1}^\ell b_{i, \sigma(i)} &\leq  \frac{1}{2\ell} \cdot T(n,k) \cdot \frac{((2k-n+1)!)^6}{\ell^{6k}} \,. \label{eq:p12}
\end{align}
Since $T(n, n-1) = 1$, for $k = n-1$, we have 
\begin{align}
    \sum_{\sigma \in \mathcal{S}_\ell(n, k)} \prod_{i=1}^\ell b_{i, \sigma(i)} &\leq  \frac{1}{2\ell} \cdot \frac{((n-1)!)^6}{\ell^{6(n-1)}} \,. \label{eq:pf4}
\end{align}
By Lemma \ref{lem:derangements}, for $k < n-1$, $T(n, k) \leq  (2n-2k+5)^{n}$ for $k \geq n/2$, and
\begin{align*}
    \sum_{\sigma \in \mathcal{S}_\ell(n, k)} \prod_{i=1}^\ell b_{i, \sigma(i)} &\leq  \frac{1}{2\ell} \cdot (2n-2k+5)^{n} \cdot \frac{((2k-n+1)!)^6}{\ell^{6k}}\,.
\end{align*}
Let $n = 2k - z$, then
\begin{align*}
  (2n-2k+5)^{n} &\cdot \frac{((2k-n+1)!)^6}{\ell^{6k}} = (2k-2z+5)^{2k-z}  \cdot \frac{((z+1)!)^6}{\ell^{6k}} \\
  &\leq (2k-z+5)^{2k-z}  \cdot \frac{((z+1)!)^6}{\ell^{6k}} \,.
\end{align*}
Taking derivative of $ \frac{(2k-z+7)^{2k-z+2}}{\ell^{6k}}$ with respect to $k$,
\begin{align*}
   \frac{(2k-z+5)^{2k-z}}{\ell^{6k}}\left(2\cdot \log(2k-z+5) + 2\cdot\frac{2k-z}{2k-z+5} -6\log(\ell)\right) \leq 0 \,.
\end{align*}
Therefore $ \frac{(2k-z+5)^{2k-z+2}}{\ell^{6k}}$ is a non-increasing function of $k$. Since $n = 2k-z \geq 1$, $k$ satisfies $2k \geq z+1$. So $ \frac{(2k-z+5)^{2k-z}}{\ell^{6k}}$ is maximized when $2k = z + 1$. Therefore,
\begin{align*}
  (2k-z+5)^{2k-z}  \cdot \frac{((z+1)!)^6}{\ell^{6k}} 
  &\leq 6 \cdot \frac{((z+1)!)^6}{\ell^{6k}} = 6 \cdot \frac{((2k-n+1)!)^6}{\ell^{6k}} \leq  6 \cdot \frac{1}{\ell^{6(n-k-1)}} \,,
\end{align*}
where the last inequality follows from $2k-n+1 \leq \ell$.
Plugging this bound in~\eqref{eq:p12} gives
\begin{align}
    \sum_{\sigma \in \mathcal{S}_\ell(n, k)} \prod_{i=1}^\ell b_{i, \sigma(i)} &\leq  \frac{1}{2\ell} \cdot \frac{6}{\ell^{6(n-k-1)}} \,. \label{eq:pf3}
\end{align}

Plugging in~\eqref{eq:pf1},~\eqref{eq:pf2},~\eqref{eq:pf4}, and~\eqref{eq:pf3} into ~\eqref{eq:pfmain}, 
\begingroup
\allowdisplaybreaks
\begin{align*}
  \perm(B_\ell) 
    = &1+ \sum_{n=2}^\ell \sum_{k=1}^{n-1} \sum_{\sigma \in \mathcal{S}_\ell(n, k)} \prod_{i=1}^\ell b_{i, \sigma(i)} \\
    = &1+ \sum_{n=2}^\ell \sum_{\sigma \in \mathcal{S}_\ell(n, 1)}\prod_{i=1}^\ell b_{i, \sigma(i)}  + \sum_{n=5}^\ell \sum_{2\leq k < n/2} \sum_{\sigma \in \mathcal{S}_\ell(n, k)}\prod_{i=1}^\ell b_{i, \sigma(i)} \\
    &+ \sum_{n=2}^\ell \sum_{\sigma \in \mathcal{S}_\ell(n, n-1)}\prod_{i=1}^\ell b_{i, \sigma(i)}+\sum_{n=2}^\ell \sum_{ k = \lceil n/2\rceil}^{n-2} \sum_{\sigma \in \mathcal{S}_\ell(n, k)}\prod_{i=1}^\ell b_{i, \sigma(i)} \\
    \leq &1+ \sum_{n=2}^\ell \frac{1}{n!} \cdot  \frac{1}{(\ell \cdot (\ell-1) \ldots (\ell-n+2))^{7}} \\
    &+   \sum_{n=5}^\ell \sum_{2 \leq k < n/2} \frac{1}{n! \cdot \ell^{4(k-1)} \cdot (\ell \cdot (\ell-1) \ldots (\ell-n+2k))^3}  \\
    &+ \sum_{n=2}^\ell \frac{1}{2\ell} \cdot \frac{((n-1)!)^6}{\ell^{6(n-1)}} + \sum_{n=2}^\ell \sum_{k \geq n/2}  \frac{1}{2\ell} \cdot \frac{6}{\ell^{6(n-k-1)}}\\
    \leq &1+\frac{1}{\ell^5} + \frac{1}{\ell^2}  + \frac{1}{2\ell^6}+ \sum_{n=2}^\ell \frac{3}{\ell} \cdot \frac{1}{\ell^6 -1} \\
    \leq &1+\frac{1}{\ell^5} + \frac{1}{\ell^2}  + \frac{1}{2 \ell^6}+ \frac{3}{\ell^6 -1} \leq 1.13\,,
\end{align*}
for $\ell \geq 3$.
\end{proof}
\endgroup

\begin{remark}[Exceedances and the Eulerian Number] \label{rem:exc} For a permutation $\sigma \in \mathcal{S}_n$,
\begin{itemize}
    \item The exceedance of $\sigma$ is defined as $|\{i \in [n-1]: \sigma(i) > i\}|$.
    \item The Eulerian number $E(n, k)$ is defined to be the number of permutations in $\mathcal{S}_n$ with $k-1$ exceedances.
    \item The derangement number $T(n, k)$ is defined to be the number of derangements in $\mathcal{S}_n$ with $k$ exceedances. 
\end{itemize}
The explicit formula for $E(n, k)$ is  $ E(n,k)=\sum_{j=0}^{k+1}(-1)^{j}{\binom {n+1}{j}} \cdot (k+1-j)^{n}$ (page 273, \cite{comtet1974advanced}). The exponential generating function of $E(n,k)$ is given by (page 273, \cite{comtet1974advanced})
\begin{equation}
    \sum_{n=0}^{\infty} \sum_{k=0}^n E(n,k) \; t^{k} \; \frac{x^n}{n!} = \frac{t-1}{t-e^{(t-1)x}} \,. \label{eq:eu}
\end{equation}

The exponential generating function of $T(n,k)$ is given by (Proposition 5, \cite{brenti1990unimodal})
\begin{equation}
    \sum_{n=0}^{\infty} \sum_{k=1}^{n-1} T(n,k) \; t^{k} \; \frac{x^n}{n!} = e^{-t} \cdot\frac{t-1}{t-e^{(t-1)x}} \,. \label{eq:dr}
\end{equation}

Comparing~\eqref{eq:eu} and~\eqref{eq:dr}, we infer
\begin{equation}
    T(n,k) = \sum_{j=0}^{n} (-1)^{(n-j)} \cdot \binom{n}{j} \cdot E(j, k) \,. \label{eq:derangement}
\end{equation} 
\end{remark}
\begin{lemma} \label{lem:derangements}
For any positive integer $n$, 
\begin{itemize}
    \item $T(n,1) = T(n,n-1) = 1$, 
    \item $T(n, k) = T(n, n+1-k) < (2k+3)^{n}$ for any $k \in \{2, \ldots, n-2\}$.
\end{itemize}
\end{lemma}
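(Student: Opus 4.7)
The plan is to handle the two equalities $T(n,1) = T(n, n-1) = 1$ by direct combinatorial uniqueness arguments, establish the symmetry via a simple involution on derangements, and prove the upper bound using the inclusion into the all-permutation count together with a classical Eulerian-number inequality.

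For $T(n, 1) = 1$, every cycle of a derangement contains at least one exceedance (the preimage of the cycle's largest element is a strict exceedance), so a derangement with only one exceedance must consist of a single $n$-cycle. Within such a cycle, requiring that the unique exceedance occur at the smallest index pins down $\sigma(1) = n,\; \sigma(n) = n-1,\; \ldots,\; \sigma(2) = 1$. For $T(n, n-1) = 1$, since $\sigma(n) < n$ for any derangement the exceedance count is at most $n-1$; attaining $n-1$ forces $\sigma(i) > i$ for every $i \in [n-1]$, and iterating from $\sigma(n-1) = n$ backwards yields $\sigma(i) = i+1$ for $i \in [n-1]$ and $\sigma(n) = 1$, a unique derangement.

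For the symmetry, I would use the involution $\sigma \mapsto \bar\sigma$ defined by $\bar\sigma(i) := n+1 - \sigma(n+1-i)$. It preserves the derangement property, and $\bar\sigma(i) > i \iff \sigma(n+1-i) < n+1-i$, so it converts each exceedance index of $\sigma$ into a non-exceedance index of $\bar\sigma$. Since every index of a derangement is strictly an exceedance or a strict anti-exceedance, the total exceedance count is sent from $k$ to $n-k$. (The printed statement $T(n,k) = T(n, n+1-k)$ appears to carry an off-by-one typo; the natural involution gives $T(n, k) = T(n, n-k)$, which is consistent with small-case checks such as $T(4,1) = T(4,3) = 1$, and is still enough for the downstream use in Lemma~\ref{lem:perm} since the numerical bound has ample slack.)

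For the inequality $T(n, k) < (2k+3)^n$, the cleanest route is the trivial containment $T(n, k) \leq E(n, k)$ (each derangement is a permutation, counted by the same exceedance statistic) combined with the classical bound $E(n, k) \leq (k+1)^n$. The latter is read off from the Worpitzky identity $x^n = \sum_{j=0}^{n-1} E(n, j)\binom{x+j}{n}$ by setting $x = k+1$: the binomial coefficients vanish for $j < n-k-1$, so every remaining term is a non-negative contribution and in particular $E(n, n-k-1) \leq (k+1)^n$; the Eulerian symmetry $E(n, j) = E(n, n-1-j)$ then rewrites this as $E(n, k) \leq (k+1)^n < (2k+3)^n$. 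The main obstacle I anticipate is purely notational: reconciling the convention mismatches between the $E$- and $T$-definitions and their relation~\eqref{eq:derangement}; the target inequality is loose enough that none of these affect the conclusion.
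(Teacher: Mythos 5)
Your proposal is correct, and it takes a genuinely different route from the paper while also covering parts the paper's proof leaves unargued. The paper's proof addresses only the inequality: it bounds $E(n,m) < 2^n(m+1)^n$ by keeping the positive terms of the explicit alternating-sum formula, and then pushes this through the relation~\eqref{eq:derangement} between $T$ and $E$ (obtained by comparing exponential generating functions), taking absolute values termwise to land on $(1+2(k+1))^n=(2k+3)^n$. You instead use the monotone containment of derangements with $k$ exceedances inside all permutations with $k$ exceedances, together with the Worpitzky-identity bound that this Eulerian count is at most $(k+1)^n$; this bypasses the generating functions and relation~\eqref{eq:derangement} entirely and yields a sharper intermediate bound, at the cost of invoking the classical Eulerian symmetry under $j\mapsto n-1-j$ exceedances. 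One caution: in the paper's convention $E(n,k)$ counts permutations with $k-1$ exceedances, so your containment literally reads $T(n,k)\le E(n,k+1)\le (k+2)^n$; as you anticipate, this off-by-one is harmless since $(k+2)^n<(2k+3)^n$. Your direct arguments for $T(n,1)=T(n,n-1)=1$ are sound (index $1$ is always an exceedance of a derangement, which forces the unique single-exceedance derangement, and dually for $n-1$ exceedances), and your symmetry remark is a genuine catch: as printed, $T(n,k)=T(n,n+1-k)$ already fails at $n=4$, $k=2$ (where $T(4,2)=7$ while $T(4,3)=1$), whereas your involution $\bar\sigma(i)=n+1-\sigma(n+1-i)$ gives the correct $T(n,k)=T(n,n-k)$; the downstream use in Lemma~\ref{lem:perm} is unaffected because $(2(n-k)+3)^n\le (2n-2k+5)^n$.
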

\begin{proof}
Since $ E(n,m)=\sum _{k=0}^{m+1}(-1)^{k}{\binom {n+1}{k}} \cdot (m+1-k)^{n}$, taking absolute values, we get
\begin{align*}
    E(n, m) &< \binom{n+1}{0} \cdot (m+1)^{n} + \binom{n+1}{2} \cdot (m+1-2)^{n} + \ldots \\
    &< (m+1)^n \cdot \left(\sum_{i = 0}^n \binom{n+1}{2i}  \right) \leq (m+1)^n \cdot 2^n.
\end{align*}
Using equation~\ref{eq:derangement} and taking absolute values, we get
\begin{align*}
    T(n,k) &< \sum_{j=0}^{n}  \binom{n}{j} \cdot E(j, k) < \sum_{j=0}^{n}  \binom{n}{j} \cdot  2^j \cdot (k+1)^j = (1 + 2(k+1))^{n}.
\end{align*}
\end{proof}

The following lemma is an extension to \cref{lem:vol-determinant}.
\begin{lemma}\label{vol-determinant2}
Let $S$ be a subset of $r < d$ vectors in $\R^d$. Let $C$ be a cycle in $G(S)$, and let $X = C \backslash S$ and $Y = S\backslash C$ such that $|X| = |Y| = \ell$. Let $X = Y A_C + X_{\perp}$, where $X_\perp$ denotes the projection of vectors in $X$ orthogonal to $\Span(Y)$. Then the change in objective value is given by
	\[
	\frac{\vol(S\Delta C)^2}{\vol(S)^2}
	\geq \det(A_C^\top A_C). 
	\]
	
	In particular, if $C'$ is the projection of the vectors in $C$ onto $\Span(S)$, then $\vol(S\Delta C)^2 \geq \vol(S\Delta C')^2$.
\end{lemma}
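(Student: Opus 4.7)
The plan is to prove both parts by projecting $X$ orthogonally onto $\Span(S)$, observing that this projection can only decrease the left-hand volume, and then invoking Lemma~\ref{lem:vol-determinant} inside $\Span(S)$.

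Set $Z := S \setminus Y$, so that $S \Delta C = Z \cup X$. Decompose $X = X^{\|} + X^{\perp\perp}$, where $X^{\|}$ is the orthogonal projection of $X$ onto $\Span(S)$ and $X^{\perp\perp} \perp \Span(S)$; the projected cycle $C'$ then satisfies $S \Delta C' = Z \cup X^{\|}$. For the ``in particular'' claim, form the $d \times r$ matrices $M := [Z \mid X]$ and $M' := [Z \mid X^{\|}]$. Because $Z$ and $X^{\|}$ both lie in $\Span(S)$ while $X^{\perp\perp}$ is perpendicular to $\Span(S)$, we have $Z^{\top} X^{\perp\perp} = 0$ and $(X^{\|})^{\top} X^{\perp\perp} = 0$. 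Expanding the Gram matrix blockwise gives
$$
M^{\top} M \;=\; M'^{\top} M' \;+\; \begin{pmatrix} 0 & 0 \\ 0 & (X^{\perp\perp})^{\top} X^{\perp\perp}\end{pmatrix}.
$$
The added block is positive semidefinite and the determinant is monotone on the PSD cone, so
$$
\vol(S\Delta C)^{2} \;=\; \det(M^{\top} M) \;\geq\; \det(M'^{\top} M') \;=\; \vol(S\Delta C')^{2}.
$$

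For the main inequality, I will apply Lemma~\ref{lem:vol-determinant} inside $\Span(S)$. Since $X^{\|} \subseteq \Span(S)$ and $S$ is a basis of the $r$-dimensional subspace $\Span(S)$, fixing any orthonormal basis of $\Span(S)$ (which preserves all volumes in question) reduces the configuration to the hypotheses of Lemma~\ref{lem:vol-determinant} in $\R^{r}$: there is a unique $\ell \times r$ coefficient matrix $A$ with $X^{\|} = S A^{\top}$, and its $Y$-column submatrix is the $\ell \times \ell$ matrix $A_{C}$ of the statement. Lemma~\ref{lem:vol-determinant} then yields
$$
\vol(S\Delta C')^{2} \;=\; \vol(S)^{2}\det(A_{C} A_{C}^{\top}) \;=\; \vol(S)^{2}\det(A_{C}^{\top} A_{C}),
$$
where the last equality uses that $A_{C}$ is square. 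Chaining with the inequality from the previous paragraph gives $\vol(S\Delta C)^{2} \geq \vol(S)^{2}\det(A_{C}^{\top} A_{C})$, which is the main claim.

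There is no substantive obstacle; both inequalities fall out of the two-line reduction to Lemma~\ref{lem:vol-determinant}. The only real content is the PSD-monotonicity step that erases $X^{\perp\perp}$ from the Gram matrix, and the mild bookkeeping needed to identify the $A_{C}$ obtained from the in-span basis expansion of $X^{\|}$ with the one described in the statement -- these agree because $Y \subseteq \Span(S)$, so the $Y$-coordinates of $X$ are already captured by $X^{\|}$.
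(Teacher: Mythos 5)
Your proof is correct, and while it rests on the same two ingredients as the paper's --- monotonicity of the determinant under adding a PSD matrix, and Lemma~\ref{lem:vol-determinant} --- it is organized in the reverse order and is leaner. The paper proves the main inequality first, via a Schur-complement computation on the Gram matrix of $S\Delta C$ with respect to the block $W=S\setminus Y$: it establishes $\vol(S\Delta C)^2=\det(W^\top W)\det\bigl((Y_\perp A_C)^\top (Y_\perp A_C)+Z^\top Z\bigr)$ and $\vol(S)^2=\det(W^\top W)\det(Y_\perp^\top Y_\perp)$, drops the PSD term $Z^\top Z$, and only then deduces the ``in particular'' claim from Lemma~\ref{lem:vol-determinant}. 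You instead prove the ``in particular'' claim first, by the cleaner observation that the Gram matrix of $[\,Z\mid X\,]$ equals that of $[\,Z\mid X^{\|}\,]$ plus a PSD block $(X^{\perp\perp})^\top X^{\perp\perp}$, and then obtain the main inequality by applying Lemma~\ref{lem:vol-determinant} inside $\Span(S)$ after an isometric identification with $\mathbb{R}^r$ (legitimate, since all the relevant vectors lie in $\Span(S)$, the volumes are Gram-intrinsic, and $\vol(S)>0$ makes $S$ a basis of that subspace). This removes the Schur-complement bookkeeping entirely, which is most of the paper's proof. One caveat worth flagging: the statement defines $A_C$ via $X=YA_C+X_\perp$ with $X_\perp$ orthogonal to $\Span(Y)$ (and writes $Y=S\setminus C$), whereas you --- exactly like the paper's own proof and its subsequent application --- take $Y=S\cap C$ and take $A_C$ to be the $Y$-column block of the expansion, in the basis $S$, of the projection of $X$ onto $\Span(S)$; these two notions of $A_C$ differ in general unless $S\setminus Y$ is orthogonal to $Y$. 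Your closing remark that ``the $Y$-coordinates of $X$ are already captured by $X^{\|}$'' glosses over this, but since the discrepancy originates in the paper's phrasing and your reading coincides with the intended and used one, this is not a gap in your argument.
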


\begin{proof}
Let $ W = S \del Y$. Let us abuse notation to denote by $X$ the matrix whose columns are the vectors in set $X$ and similarly for other sets defined above. Then let $X = YA_C + WA' + Z$, where $Z$ is the component of $X$ which is orthogonal to $\Span(S)$. Let $Y = Y_\perp + Y_W$, where $Y_\perp$ is the component of $Y$ orthogonal to $\Span(W)$.
	
Let $M$ denote the matrix whose columns are the elements of $S \Delta C$. Concretely, $M = [X \,\, W]$ following the notation above.
	Note that
	\begin{align}
		\vol(S\Delta C)^2
		&= \det(M^\top M)
		= \det\left( \bmat{X^\top \\W^\top } \bmat{X&W} \right) = \det\left( \bmat{X^\top X & X^\top W \\ W^\top X & W^\top W} \right) \notag \\
		&= \det\left(W^\top W\right)^{1/2} \det\left(X^\top X - X^\top W(W^\top W)^{-1}W^\top X\right),\label{eq:e4}
	\end{align}
	where the last equation follows from the fact that $M^\top M$ is positive definite.
	
	Now, since $Z$ is orthogonal to both $Y$ and $W$, we get
	\begin{equation}
	    X^\top X = (YA_C)^\top (YA_C) + (WA')^\top (WA') + Z^\top Z + (YA_C)^\top (WA') + (WA')^\top (YA_C). \label{eq:e1}
	\end{equation} 
	Additionally, $W(W^\top W)^{-1}W^\top $ is the projection matrix on the column space of $W$, so
	\begin{equation}
	     W(W^\top W)^{-1}W^\top X = Y_WA_C + WA'\,. \label{eq:e2}
	\end{equation} 
	Multiplying equation~\eqref{eq:e2} with $X$ gives
 	\begin{equation}
	    X^\top W(W^\top W)^{-1}W^\top X = (YA_C)^\top (Y_WA_C) + (YA_C)^\top (WA') + (WA')^\top (Y_WA_C) + (WA')^\top (WA').  \label{eq:e3}
	\end{equation} 
	Subtracting equation~\eqref{eq:e3} from equation~\eqref{eq:e1}, we see that
	\begin{align*}
		X^\top X - X^\top W(W^\top W)^{-1}W^\top X
		&= (YA_C)^\top (Y - Y_W)A_C + (WA')^\top (Y - Y_W)A_C + Z^\top Z\\
		&= (Y_\perp A_C)^\top (Y_\perp A_C) + Z^\top Z.
	\end{align*}
	Substituting the value of $X^\top X - X^\top W(W^\top W)^{-1}W^\top X$ from the above equation in equation~\eqref{eq:e4}, we conclude that
	\begin{equation*}
	    \vol(S\Delta C)^2 = \det\left(W^\top W\right)\det\left((Y_\perp A_C)^\top (Y_\perp A_C) + Z^\top Z\right).
	\end{equation*} 
	
	Similarly,
	\begin{align}
		\vol(S)^2
		&= \det\left( \bmat{Y^\top \\W^\top }\bmat{Y&W} \right) = \det\left( \bmat{Y^\top Y & Y^\top W\\W^\top Y & W^\top W} \right) \notag \\
		&= \det\left( W^\top W\right)^{1/2} \det\left( Y^\top Y - Y^\top W(W^\top W)^{-1}W^\top Y \right) \notag \\
		&= \det\left( W^\top W\right)^{1/2}\det\left( Y^\top Y - Y^\top Y_W \right) \\
		&= \det\left( W^\top W\right)^{1/2} \det\left( Y_\perp^\top Y_\perp\right). \label{eq:e5}
	\end{align}
	Finally, dividing equation~\eqref{eq:e3} by equation~\eqref{eq:5} gives
	\begin{equation*}
	    \frac{\vol(S\Delta C)^2}{\vol(S)^2} = \frac{\det\left((Y_\perp A_C)^\top (Y_\perp A_C) + Z^\top Z\right)}{\det\left( Y_\perp^\top Y_\perp\right)}.
	\end{equation*} 
	
	Since $Z^\top Z$ is positive semidefinite, we conclude that
	\begin{align*}
		\frac{\vol(S\Delta C)^2}{\vol(S)^2}
		&= \frac{\det\left((Y_\perp A_C)^\top (Y_\perp A_C) + Z^\top Z\right)}{\det\left( Y_\perp^\top Y_\perp\right)}  \\
		&\geq \frac{\det\left((Y_\perp A_C)^\top (Y_\perp A_C)\right)}{\det\left( Y_\perp^\top Y_\perp\right)}  = \det(A_C^\top A_C).
	\end{align*}
	If $C'$ is the projection of $C$ onto $\Span(S)$, then Lemma \ref{lem:vol-determinant} implies that $\vol(S\Delta C')^2 = \vol(S)^2\cdot \det(A_C^\top A_C) \leq \vol(S\Delta C)^2$.
\end{proof}

\end{document}